\documentclass[11pt,reqno]{amsart}
\usepackage{amssymb,latexsym}
\usepackage{url}
\usepackage[dvipdfm]{color}
\usepackage{graphicx, amsmath, amssymb}
\usepackage{epsfig}
\usepackage[pdfstartview=FitH,pdfpagemode=None,backref,colorlinks=true,citecolor=bluetxt,linkcolor=bluetxt]{hyperref} 
\usepackage{fullpage}

\theoremstyle{plain}
\newtheorem{theorem}{Theorem}
\newtheorem{lemma}[theorem]{Lemma}
\newtheorem{corollary}[theorem]{Corollary}
\newtheorem{proposition}{Proposition}
\newtheorem{conjecture}{Conjecture}
\theoremstyle{definition}
\newtheorem{definition}{Definition}[section]
\newtheorem{fact}{Fact}
\newtheorem{claim}{Claim}
\newtheorem{remark}{Remark}
\newtheorem{observation}{Observation}
\newtheorem{example}{Example}
\newtheorem{note}{Note}
\newtheorem{comment}{Comment}

\renewenvironment{proof}{\QuadSpace\par\noindent{\bf
Proof}:}{\EndProof\HalfSpace}


\def\Res#1{{\mbox{Res}(#1)}}

\def\RL0{{\mbox{R(lin)}}}
\def\RZ0{{\mbox{R$^0$(lin)}}}
\def\RC0{R(lin) with constant coefficients}

\def\clique#1#2#3{{\mbox{$\neg$\textsc{clique}}$^{#1}_{#2,#3}$}}
\def\Tse0{{\mbox{$\neg$\textsc{Tseitin}$_{G,p}$}}}

\usepackage[dvipdfm]{color}
\definecolor{bluetxt}{rgb}{0,0,.5}
\definecolor{myred}{rgb}{0.6,0.0,0.1}
\definecolor{greentxt}{rgb}{0,.5,0}


\newenvironment{convention}{\QuadSpace\par\noindent{\bf Convention}:}{\HalfSpace}
\newenvironment{notation}{\QuadSpace\par\noindent{\bf Notation}:}{\HalfSpace}
\newcommand {\para}[1] {\smallskip
\paragraph{\textbf{#1}}}

\hyphenation{ranraz} \hyphenation{tzameret}

\sloppy

\renewcommand{\l}{\ell}

\newcommand{\rst}{\!\!\upharpoonright}



\newcommand{\abs}[1]{\left\vert#1\right\vert}
\newcommand{\set}[1]{\left\{#1\right\}}

\newcommand{\F}{\mathbb F}
\newcommand{\eps}{\varepsilon}

\newcommand {\cd}{\cdot}
\newcommand {\zo}{\set{0,1}}
\newcommand {\OrDots} {\Or\cdots\Or}
\newcommand {\PlusDots} {+\ldots +}
\newcommand {\CommaDots} {,\ldots,}
\newcommand {\ML}[1] {\mathbf{M}\!\left[#1\right]} 
\newcommand {\sm} {\setminus}
\newcommand {\se}{\subseteq}



\newcommand {\ind} {\noindent}

\newcommand {\bigs} {\bigskip}

\newcommand {\bx}   {\bar{x}}

\def\vr#1#2#3{{\mbox{$#1_{#2,#3}$}}} 



\def\ps0{polynomial-size}

\newcommand{\BPP}{\textbf{BPP}}



\font\sf=cmss10
\newcommand{\Nats}{{\hbox{\sf I\kern-.13em\hbox{N}}}}   
\newcommand{\Reals}{{\hbox{\sf I\kern-.14em\hbox{R}}}}  
\newcommand{\Ints}{{\hbox{\sf Z\kern-.43emZ}}}          
\newcommand{\CC}{{\hbox{\sf C\kern -.48emC}}}           
\newcommand{\QQ}{{\hbox{\sf C\kern -.48emQ}}}           

%
%
%
%
\newcommand{\such}{\;|\;}

\renewcommand{\And}{\land}
\newcommand{\Or}{\lor}
\newcommand{\Not}{\neg}

\newcommand{\BigOr}{\bigvee}
\newcommand{\BigAnd}{\bigwedge}


\ifx\theorem\undefined

\newtheorem{theorem}{Theorem}
\newtheorem{lemma}[theorem]{Lemma}
\newtheorem{proposition}[theorem]{Proposition}
\newtheorem{corollary}[theorem]{Corollary}
\newtheorem{definition}{Definition}[section]

\newtheorem{claim}{Claim}

\newenvironment{notation}{\QuadSpace\par\noindent{\bf Notation}:}{\HalfSpace}

\newenvironment{convention}{\QuadSpace\par\noindent{\bf Convention}:}{\HalfSpace}

\newenvironment{remark}{\HalfSpace\par\noindent{\bf Remark}:}{\HalfSpace}

\fi

\newenvironment{proofclaim}{\QuadSpace\par\noindent{\bf Proof of claim}:}
{\vrule width 1ex height 1ex depth 0pt\HalfSpace}

\newcommand{\QuadSpace}{\vspace{0.25\baselineskip}}
\newcommand{\HalfSpace}{\vspace{0.5\baselineskip}}
\newcommand{\FullSpace}{\vspace{1.0\baselineskip}}
\newcommand{\EndProof}{ \hfill \vrule width 1ex height 1ex depth 0pt }


\def\Res#1{{\mbox{Res}(#1)}}

\def\RL0{{\mbox{\rm R(lin)}}}
\def\RZ0{{\mbox{\rm R$^0$(lin)}}}
\def\RC0{R(lin) with constant coefficients}

\def\clique#1#2#3{{\mbox{$\neg$\textsc{clique}}$^{#1}_{#2,#3}$}}
\def\Tse0{{\mbox{$\neg$\textsc{Tseitin}$_{G,p}$}}}



\usepackage[dvipdfm]{color}
\definecolor{bluetxt}{rgb}{0,0,.5}
\definecolor{myred}{rgb}{0.6,0.0,0.1}
\definecolor{greentxt}{rgb}{0,.5,0}

\ifx\proof\undefined
\newenvironment{proof}{

\smallskip
\noindent\emph{Proof.}}{\hfill\(\Box\)
\bigskip
} \fi



\newlength{\defbaselineskip}
\setlength{\defbaselineskip}{\baselineskip}

\def\Krajicek0{Kraj\'{i}\v{c}ek}
\def\af0{arithmetic formulas}
\def\fls0{formulas}
\def\ncf0{non-commutative formulas}
\def\CR0{Cook-Reckhow}
\def\ml0{multilinear}
\def\LB0{lower bound}

\title[Resolution over Linear Equations and Multilinear Proofs]{Resolution over Linear Equations \\ and Multilinear Proofs}
\author{Ran Raz}
\address{Department of Applied Mathematics and Computer Science,
Weizmann Institute, Rehovot 76100, Israel} \email{ranraz@wisdom.weizmann.ac.il}
\author{\;Iddo Tzameret}
\address{School of Computer
Science, Tel Aviv University, Tel Aviv 69978, Israel}
\email{tzameret@tau.ac.il}
\thanks{The first author was
supported by The Israel Science Foundation and The Minerva Foundation.\\
The second author was supported by The Israel Science Foundation (grant
no.~250/05).}

\begin{document}

\keywords{proof complexity, resolution, algebraic proof systems, multilinear
proofs, cutting planes, feasible monotone interpolation.} \subjclass[2000]{
03F20, 68Q17, 68Q15} \maketitle

\begin{abstract}
We develop and study the complexity of propositional proof systems of varying
strength extending resolution by allowing it to operate with disjunctions of
linear equations instead of clauses.
 We demonstrate polynomial-size refutations for hard tautologies like
the pigeonhole principle, Tseitin graph tautologies and the clique-coloring
tautologies in these proof systems.
 Using the (monotone) interpolation by a communication game technique
we establish an exponential-size lower bound on refutations in a certain,
considerably strong, fragment of resolution over linear equations, as well as a
general polynomial upper bound on (non-monotone) interpolants in this fragment.

We then apply these results to extend and improve previous results on
multilinear proofs (over fields of characteristic $0$), as studied in
\cite{RT05}. Specifically, we show the following:

\begin{itemize}
\item Proofs operating with depth-$3$ multilinear formulas
polynomially simulate a certain, considerably strong, fragment of resolution
over linear equations.

\item Proofs operating with depth-$3$ multilinear formulas
admit polynomial-size refutations of the pigeonhole principle and Tseitin graph
tautologies. The former improve over a previous result that established small
multilinear proofs only for the \emph{functional} pigeonhole principle. The
latter are different than previous proofs, and apply to multilinear proofs of
Tseitin mod $p$ graph tautologies over any field of characteristic $0$.
\end{itemize}

We conclude by connecting resolution over linear equations with extensions of
the cutting planes proof system.
\end{abstract}

\tableofcontents

\section{Introduction}\label{sec-Intro}
This paper considers two kinds of proof systems.
The first kind are extensions of resolution that operate with disjunctions
of linear equations with integral coefficients instead of clauses.
The second kind are algebraic proof systems
operating with multilinear arithmetic formulas.
Proofs in both kinds of systems establish the unsatisfiability of
formulas in conjunctive normal form (CNF). We are primarily concerned with
connections between these two families of proof systems and with extending and
improving previous results on \ml0 proofs.

The resolution system is a popular propositional proof system that establishes
the unsatisfiability of CNF formulas (or equivalently, the truth of tautologies
in disjunctive normal form) by operating with clauses (a clause is a
disjunction of propositional variables and their negations).
It is well known that resolution cannot provide small (that is, polynomial-size)
proofs for many basic counting arguments.
The most notable example of this are the strong
exponential lower bounds on the resolution refutation size of the pigeonhole
principle and its different variants (Haken \cite{Hak85} was the first to establish
such a lower bound; see also \cite{Razb02} for a survey on the proof complexity
of the pigeonhole principle). Due to the popularity of resolution both in
practice, as the core of many automated theorem provers, and as a theoretical
case-study in propositional proof complexity, it is natural to consider weak
extensions of resolution that can overcome its inefficiency in providing proofs
of counting arguments.
The proof systems we present in this paper are extensions of resolution,
of various strength, that are suited for this purpose.

Propositional proof systems of a different nature that also attracted much
attention in proof complexity theory are \emph{algebraic proof systems}, which
are proof systems operating with (multivariate) polynomials over a field. In
this paper, we are particularly interested in algebraic proof systems that
operate with multilinear polynomials represented as multilinear arithmetic
formulas, called by the generic name \emph{multilinear proofs} (a polynomial is
\emph{multilinear} if the power of each variable in its monomials is at most one).
The investigation into such proof systems was initiated in \cite{RT05}, and here we
continue this line of research.
This research is motivated on the one hand by
the apparent considerable strength of such systems; and on the other hand, by the known
super-polynomial size lower bounds on multilinear formulas computing certain
important functions \cite{Raz04a,Raz04b}, combined with the general working
assumption that establishing lower bounds on the size
of \emph{objects} a proof system manipulates (in this case, multilinear formulas)
is close to establishing lower bounds on the size
of the \emph{proofs} themselves.

The basic proof system we shall study is denoted R(lin).
The proof-lines\footnotemark ~in R(lin) proofs are disjunctions of
linear equations with integral coefficients over the variables $\vec x =
x_1\CommaDots x_n$.
It turns out that (already proper subsystems of) R(lin) can handle very
elegantly basic counting arguments.
\footnotetext{Each element (usually a formula) of a proof-sequence
is referred to as a \emph{proof-line}.}
The following defines the R(lin) proof system.
Given an initial CNF, we translate every
clause $\BigOr_{i\in I} x_i \Or \BigOr_{j\in J} \Not x_j$ (where $I$ are the
indices of variables with positive polarities and $J$ are the indices of
variables with negative polarities) pertaining to the CNF, into the disjunction
$\BigOr_{i\in I} (x_i=1) \Or \BigOr_{j\in J} (x_j=0)$.
Let $A$ and $B$ be two disjunctions of linear equations,
and let $\vec a\cd\vec x =a_0$ and $\vec b\cd\vec x =b_0$
be two linear equations
(where $\vec a,\vec b$ are two vectors of $n$ integral coefficients,
and $\vec a\cd\vec x$ is the scalar product $\sum_{i=1}^n a_i x_i$;
and similarly for $\vec b\cd\vec x$).
The rules of inference belonging to R(lin) allow to derive
$A\Or B\Or((\vec a + \vec b)\cd\vec x =a_0+b_0)$
from $A\Or (\vec a\cd\vec x =a_0)$ and $B\Or (\vec b\cd\vec x=b_0)$
(or similarly, to derive
$A\Or B\Or((\vec a - \vec b)\cd\vec x =a_0-b_0)$
from $A\Or (\vec a\cd\vec x =a_0)$ and $B\Or (\vec b\cd\vec x=b_0)$).
We can also simplify disjunctions by discarding
(unsatisfiable) equations of the form $(0=k)$, for $k\ne 0$.
In addition, for every variable $x_i$, we shall add an axiom $(x_i=0)\Or(x_i=1)$,
which forces $x_i$ to take on only Boolean values.
A derivation of the empty disjunction (which stands for \textsc{false})
 from the (translated) clauses of a
CNF is called an \emph{R(lin) refutation} of the given CNF.
This way, every unsatisfiable CNF
has an R(lin) refutation (this can be proved by a straightforward simulation of
resolution by R(lin)).

The basic idea connecting resolution operating with disjunctions of linear
equations and multilinear proofs is this: Whenever a disjunction of linear
equations is simple enough --- and specifically,
when it is close to a symmetric
function, in a manner made precise --- then it can be represented by a small size
and small depth \ml0 arithmetic formula over fields of characteristic $0$. This
idea was already used (somewhat implicitly) in \cite{RT05} to obtain
polynomial-size \ml0 proofs operating with depth-$3$ multilinear formulas
of the functional pigeonhole principle (this principle is weaker than
the pigeonhole principle).
In the current paper we generalize previous results on multilinear proofs
by fully using this idea:
We show how to polynomially simulate with
multilinear proofs, operating with small depth \ml0 formulas,
certain short proofs carried inside resolution over linear equations.
This enables us to provide new polynomial-size multilinear proofs for
certain hard tautologies, improving results from \cite{RT05}.

More specifically,
we introduce a certain fragment of R(lin), which can be
polynomially simulated by depth-$3$ multilinear proofs
(that is, multilinear proofs operating with depth-$3$ multilinear
formulas).
 On the one hand this fragment of resolution over linear equations already
is sufficient to formalize in a transparent way basic counting arguments,
and so it admits small proofs of the pigeonhole principle and the Tseitin
mod $p$ formulas (which yields some new upper bounds on multilinear proofs);
and on the other hand we can use the (monotone) interpolation
technique to establish an exponential-size lower bound on refutations in this fragment
as well as demonstrating a general (non-monotone) polynomial upper bound
on interpolants for this fragment.
The possibility that multilinear proofs (possibly, operating with depth-$3$
multilinear formulas) possess the feasible
monotone interpolation property
(and hence, admit exponential-size lower bounds) remains open.

Another family of propositional proof systems we discuss in relation to
the systems mentioned above are the \emph{cutting planes} system and its
extensions. The cutting planes proof system operates with linear
\emph{inequalities} with integral coefficients, and this system is very close
to the extensions of resolution we present in this paper.
In particular, the following simple observation can be used to polynomially simulate
cutting planes proofs with polynomially bounded coefficients
(and some of its extensions) inside resolution over
linear equations:
The truth value of a linear inequality $\vec a\cd \vec x \ge a_0$
(where $\vec a$ is a vector of $n$ integral coefficients and
$\vec x$ is a vector of $n$ \emph{Boolean} variables)
is equivalent to the truth value of the
following disjunction of linear equalities:
\[
\left(\vec a\cd\vec x = a_0\right)\Or \left(\vec a\cd\vec x = a_0+1\right)
\OrDots
\left(\vec a\cd\vec x = a_0+k\right)\,,
\]
where $a_0+k$ equals the sum of all positive coefficients in $\vec a$
(that is, $a_0+k = \max\limits_{\vec x\in\zo^n}\left(\vec a\cd\vec x\right)$).

\para{Note on terminology.}
All the proof systems considered in this paper intend to prove the
\emph{unsatisfiability} over $0,1$ values of collections of clauses (possibly,
of translation of the clauses to disjunctions of linear equations). In other
words, proofs in such proof systems intend to \emph{refute} the collections of
clauses, which is to validate their negation. Therefore, throughout this paper
we shall sometime speak about refutations and proofs interchangeably, always intending
refutations, unless otherwise stated.

\subsection{Comparison to Earlier Work}

To the best of our knowledge this paper is the first that considers resolution
proofs operating with disjunctions of linear \emph{equations}.
Previous works considered extensions of resolution over linear
\emph{inequalities} augmented
with the cutting planes inference rules (the resulting proof system denoted
R(CP)).
In full generality, we show that resolution over linear equations can
polynomially simulate R(CP) when the coefficients in all the inequalities are
polynomially bounded
(however, the converse is not known to hold).
On the other hand, we shall consider a certain fragment of resolution over
linear equations,
in which we do not even know how to polynomially simulate
cutting planes proofs with polynomially bounded coefficients in inequalities
(let alone R(CP) with polynomially bounded coefficients in inequalities).
We now shortly discuss the previous work on R(CP) and related proof systems.

Extensions of resolution to disjunctions of linear \emph{inequalities} were
first considered by Kraj{\'{\i}}{\v{c}}ek \cite{Kra98-Discretely} who developed
the proof systems LK(CP) and R(CP). The LK(CP) system is a first-order
(Gentzen-style) sequent calculus that operates with linear inequalities instead
of atomic formulas and augments the standard first-order sequent calculus
inference rules with the cutting planes inference rules. The R(CP) proof system
is essentially resolution over linear inequalities, that is, resolution that
operates with disjunctions of linear inequalities instead of clauses.

The main motivation of \cite{Kra98-Discretely} is to extend the feasible
interpolation technique and consequently the lower bounds
results, from cutting planes and resolution to stronger proof systems. That
paper establishes an exponential-size lower bound on a restricted version of
R(CP) proofs, namely, when the number of inequalities in each proof-line is
$O(n^\varepsilon)$, where $n$ is the number of variables of the initial
formulas, $\varepsilon$ is a small enough constant and the coefficients in the
cutting planes inequalities are polynomially bounded.

Other papers considering extensions of resolution over linear
inequalities are the more recent papers by Hirsch \& Kojevnikov \cite{HK06}
and Kojevnikov \cite{Koj07}.
The first paper \cite{HK06} considers a combination of resolution with LP
(an incomplete subsystem of cutting planes based on simple linear
programming reasoning),
with the `lift and project' proof system (L\&P),
and with the cutting planes proof system.
The second paper \cite{Koj07} deals with improving the parameters of the tree-like
R(CP) lower-bounds obtained in \cite{Kra98-Discretely}.

Whereas previous results concerned primarily with extending the cutting planes
proof system, our foremost motivation is to extend and improve previous results
on algebraic proof systems operating with multilinear formulas obtained in
\cite{RT05}. In that paper the concept of multilinear proofs was introduced and
several basic results concerning multilinear proofs were proved. In particular,
polynomial-size proofs of two important combinatorial principles were
demonstrated: the functional pigeonhole principle and the Tseitin (mod $p$)
graph tautologies. In the current paper we improve both these results.

As mentioned above, motivated by relations with multilinear proofs operating with
depth-$3$ multilinear formulas, we shall consider a certain subsystem
of resolution over linear equations.
For this subsystem we apply twice the interpolation by a communication
game technique.
The first application is of the \emph{non}-monotone version of the technique,
and the second application is of the \emph{monotone} version.
Namely, the first application provides a general (non-monotone)
interpolation theorem that demonstrates a polynomial (in the size of refutations)
upper bound on interpolants;
The proof uses the general method of transforming a
refutation into a Karchmer-Wigderson communication game for two players,
from which a Boolean circuit is then attainable.
In particular, we shall apply the interpolation theorem of \Krajicek0 from
\cite{Kra97-Interpolation}.
 The second application of the (monotone) interpolation by a communication
game technique is implicit and proceeds by using the lower bound criterion
of Bonet, Pitassi \& Raz in \cite{BPR97}.
This criterion states that (semantic) proof systems (of a certain natural and
standard kind) whose proof-lines (considered as Boolean functions)
have low communication complexity cannot prove efficiently a
certain tautology (namely, the clique-coloring tautologies).

\subsection{Summary of Results}

This paper introduces and connects several new concepts and ideas
with some known ones.
It identifies new extensions of resolution operating with linear
equations, and relates (a certain) such extension to multilinear proofs.
 The upper bounds for the pigeonhole principle and Tseitin mod $p$ formulas in fragments
of resolution over linear equations are new.
 By generalizing the machinery developed in \cite{RT05},
these upper bounds yield new and improved results concerning multilinear proofs.
 The lower bound for the clique-coloring formulas in a fragment of
resolution over linear equations employs the
standard monotone interpolation by a communication game technique,
and specifically utilizes the theorem of Bonet, Pitassi \& Raz from \cite{BPR97}.
The general (non-monotone) interpolation result
for a fragment of resolution over linear equations
employs the theorem of Kraj{\'i}\v{c}ek from \cite{Kra97-Interpolation}.
The upper bound in (the stronger variant of --  as described in the introduction)
resolution over linear equations
of the clique-coloring formulas follows that of Atserias, Bonet \&
Esteban \cite{ABE02}.
We now give a detailed outline of the results in this paper.

\para{The proof systems.}
In Section \ref{sec-systems-definitions}
we formally define two extensions of resolution of decreasing strength
allowing resolution to operate with disjunctions of linear equations.
The size of a linear equation $a_1 x_1\PlusDots a_n x_n = a_0$ is
the sum of all $a_0,\ldots, a_n$ written in \emph{unary notation}.
The size of a disjunction of linear equations is the total size of all linear equations in
the disjunction.
The size of a proof operating with disjunctions of linear equations
is the total size of all the disjunctions in it.
\HalfSpace

\textit{\RL0}: This is the stronger proof system (described in the
introduction) that operates with disjunctions of linear equations with integer
coefficients.\QuadSpace

\textit{\RZ0}: This is a (provably proper) fragment of R(lin).
It operates with disjunctions of (arbitrarily many) linear equations
whose variables have constant coefficients,
under the restriction that every disjunction can be partitioned
into a constant number of sub-disjunctions, where each sub-disjunction either
consists of linear equations that differ only in their free-terms or is a
(translation of a) clause. 

Note that any single linear \emph{inequality} with Boolean variables can be represented
by a disjunction of linear equations that differ only in their
free-terms (see the example in the introduction section).
So the \RZ0 proof system is close to a proof system operating with
disjunctions of constant number of linear inequalities (with constant integral
coefficients). In fact, disjunctions of linear
equations varying only in their free-terms, have more (expressive)
strength than a single inequality. For instance, the \textsc{parity}
function can be easily represented by a disjunction of linear equations,
while it cannot be represented by a single linear inequality (or even by a
disjunction of linear inequalities).

As already mentioned, the motivation to consider the restricted proof system \RZ0
comes from its relation to multilinear proofs operating with depth-$3$
multilinear formulas (in short, depth-$3$ multilinear proofs):
\RZ0 corresponds roughly to the subsystem of R(lin) that we know how to
simulate by depth-$3$ multilinear proofs via the technique in \cite{RT05} (the
technique is based on converting disjunctions of linear forms into symmetric
polynomials, which are known to have small depth-$3$ multilinear formulas).
 This simulation is then applied in order to improve over known upper bounds
for depth-$3$ multilinear proofs, as \RZ0 is already sufficient to efficiently
prove certain ``hard tautologies".
 Moreover, we are able to establish an exponential lower bound on \RZ0
refutations size (see below for both upper and lower bounds on \RZ0 proofs).
 We also establish a super-polynomial separation of \RL0 from \RZ0
(via the clique-coloring principle, for a certain choice of parameters; see below).

\para{Short refutations.}
We demonstrate the following short refutations in \RZ0 and \RL0:
\begin{enumerate}
\item Polynomial-size refutations of the pigeonhole principle in \RZ0;

\item Polynomial-size refutations of Tseitin mod $p$ graph formulas in \RZ0;

\item Polynomial-size refutations of the clique-coloring formulas in \RL0
(for certain parameters). The refutations here follow by direct simulation
of the \Res 2 refutations of clique-coloring formulas from \cite{ABE02}.
\end{enumerate}

All the three families of formulas above are prominent ``hard tautologies" in
proof complexity literature, which means that strong size lower bounds on
proofs in various proof systems are known for them (for the exact formulation
of these families of formulas see Section \ref{sec-hard}).

\para{Interpolation results.}
We provide a polynomial upper-bound on (non-monotone)
interpolants corresponding to \RZ0 refutations;
Namely, we show that any \RZ0-refutation of a given formula
can be transformed into a (non-monotone) Boolean circuit computing
the corresponding interpolant function of the formula (if there exists such
a function),
with at most a polynomial increase in size.
We employ the general interpolation theorem
of Kraj{\'i}\v{c}ek \cite{Kra97-Interpolation} for semantic proof systems.

\para{Lower bounds.}
We provide the following exponential lower bound:
\begin{theorem}
\RZ0 does not have sub-exponential refutations for the clique-coloring
formulas.
\end{theorem}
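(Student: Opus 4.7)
The plan is to apply the Bonet--Pitassi--Raz lower bound criterion from \cite{BPR97}, which states that any semantic proof system whose proof-lines, viewed as Boolean functions, can be computed by a two-party protocol of sub-polynomial communication complexity (with respect to the natural variable partition of the clique-coloring instance) cannot refute the clique-coloring formulas in sub-exponential size. Since \RZ0 is a semantic proof system (it is sound and every inference is sound with respect to any Boolean assignment), the entire task reduces to establishing a good upper bound on the communication complexity of an arbitrary \RZ0 proof-line under the partition that sends edge-variables to Alice and coloring-variables to Bob.

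The key step is therefore the following structural claim: every proof-line $D$ appearing in a \RZ0 refutation can be evaluated by a deterministic protocol of communication complexity $O(\log n)$. By definition of \RZ0, $D$ is a disjunction of a constant number of sub-disjunctions $D_1 \Or \cdots \Or D_c$, where each $D_\ell$ is either (i) a clause translated to equations $(x_i=0)$ and $(x_i=1)$, or (ii) a block of linear equations that share the same integer-linear form $\vec a\cd\vec x$ and differ only in their free terms $k_1,\ldots,k_r$. Since the protocol can evaluate the $D_\ell$ one by one and output \textsc{true} as soon as one of them is satisfied, it suffices to bound the cost for a single sub-disjunction. Case (i) is the standard $O(\log n)$ clause-evaluation protocol. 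For case (ii), write $\vec a\cd\vec x = \vec a^{\,A}\cd\vec x^A + \vec a^{\,B}\cd\vec x^B$ according to the partition; because the coefficients are of constant magnitude and the variables are Boolean, each of $\vec a^{\,A}\cd\vec x^A$ and $\vec a^{\,B}\cd\vec x^B$ is an integer of absolute value at most $O(n)$. Alice and Bob exchange these two integers using $O(\log n)$ bits, sum them, and check whether the result lies in the constant-described set $\{k_1,\ldots,k_r\}$. Summing over the $c = O(1)$ blocks still gives $O(\log n)$ total communication.

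With this bound in hand, the theorem follows by plugging it into the \cite{BPR97} framework: for a suitable choice of clique size and number of colors (the same parameter range for which they obtain their exponential lower bound), any semantic refutation whose proof-lines admit polylogarithmic communication protocols must have size $2^{n^{\Omega(1)}}$, and the analysis above shows that \RZ0-refutations of the clique-coloring formulas fall into this class.

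The main obstacle I anticipate is the case-(ii) analysis: verifying that the ``constant coefficients plus constantly many blocks'' restriction defining \RZ0 really suffices to keep each proof-line's communication complexity at $O(\log n)$, and in particular that the unary size convention on equations (which bounds the free terms $k_i$ polynomially) together with the constant-coefficient assumption is what prevents the magnitudes $|\vec a^{\,A}\cd\vec x^A|$ from blowing up. A secondary point to check carefully is that the \cite{BPR97} criterion is applied to the correct semantic formulation of \RZ0 and that the monotone structure of the clique-coloring variable partition is respected by the protocol described above; this is where the restriction to \RZ0 (rather than full \RL0) is essential, since in \RL0 a single equation can mix unboundedly many large coefficients and the above $O(\log n)$ bound breaks down.
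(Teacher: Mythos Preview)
Your proposal is correct and follows essentially the same route as the paper: apply the Bonet--Pitassi--Raz criterion after showing that every \RZ0-line has $O(\log n)$ communication complexity, which you establish (as the paper does, via its Lemma~\ref{lem-lowCC-for-BPR97} and the Task~1 argument of Lemma~\ref{lem-low-CC}) by having one player transmit the partial sums of the constantly many linear forms --- each of magnitude $O(n)$ thanks to the constant-coefficient restriction --- and a single bit for the clause component. The only minor slack is that clause evaluation actually costs $O(1)$ rather than $O(\log n)$, and that the set $\{k_1,\ldots,k_r\}$ of free terms need not be of constant size (only the number of blocks is constant), but membership in it is checked locally once the sum is known, so neither point affects the argument.
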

This result is proved by applying a result of Bonet,
Pitassi \& Raz \cite{BPR97},
that (implicitly) use the monotone interpolation by a communication
game technique for establishing an exponential-size lower bound on refutations
of general semantic proof systems operating with proof-lines of
low communication complexity.
\QuadSpace

\para{Applications to multilinear proofs.}
Multilinear proof systems are (semantic) refutation systems
operating with multilinear polynomials over a fixed field,
where every multilinear polynomial is represented by a multilinear
arithmetic formula.
In this paper we shall consider multilinear formulas over fields of characteristic
$0$ only.
The \emph{size} of a multilinear proof
(that is, a proof in a multilinear proof system)
is the total size of all multilinear formulas in the
proof (for formal definitions concerning multilinear proofs see Section
\ref{sec-multilinear}).

We shall first connect multilinear proofs with resolution over linear equations
by the following result:

\begin{theorem}
Multilinear proofs operating with depth-$3$ multilinear formulas over
characteristic $0$ polynomially-simulate \RZ0.
\end{theorem}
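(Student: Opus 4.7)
The plan is to define a size-preserving arithmetic translation $T(\cdot)$ that sends each \RZ0\ proof-line to a multilinear polynomial over a field of characteristic $0$, such that (i)~$T(D)$ vanishes on $\zo^n$ iff the disjunction $D$ is satisfied under a Boolean assignment, (ii)~$T(D)$ is computable by a depth-$3$ multilinear formula of size polynomial in $|D|$, and (iii)~each \RZ0\ inference can be simulated by polynomially many steps of the depth-$3$ multilinear proof system. The empty disjunction will translate to the constant $1$, so a refutation in \RZ0\ will yield a derivation of $1$ from $0$'s, which is the form of a multilinear refutation.

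For (i) and (ii), first consider a \emph{single} sub-disjunction $E$. If $E$ is a translation of a clause, set $T(E):=\prod_{\ell\in E}\tilde\ell$, where $\tilde\ell$ is the natural linearization of the literal; this is a depth-$2$ formula. If instead $E=\bigvee_{j=1}^{m}(\vec a\cd\vec x=c_j)$ is a disjunction of linear equations sharing the constant-coefficient vector $\vec a$ and differing only in their free-terms, set $T(E)$ to be the multilinearization of $\prod_{j=1}^{m}(\vec a\cd\vec x-c_j)$. Since $\vec a$ has constant integer coefficients, the linear form $L:=\vec a\cd\vec x$ takes only $O(n)$ distinct integer values on $\zo^n$, and $T(E)$ agrees on $\zo^n$ with a univariate polynomial $p_E(L)$ of degree at most $\BigO(n)$. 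The key technical input, which I would import essentially verbatim from the construction in \cite{RT05}, is that any such polynomial in a single linear form with constant coefficients has a polynomial-size depth-$3$ multilinear formula: one expresses $p_E(L)$ as a linear combination of expressions of the form $\prod_{i=1}^{n}(1+\omega^{a_i}x_i)$ ranging over suitable roots of unity $\omega$, exactly as is done there for symmetric polynomials.

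A general \RZ0\ proof-line $D$ is, by definition, a disjunction of $t=\BigO(1)$ such sub-disjunctions $E_1\CommaDots E_t$, so $T(D):=\prod_{k=1}^{t}T(E_k)$. This is a product of a constant number of depth-$3$ multilinear formulas, which can be fused back into depth-$3$ by distributing the outer product over the outer sums of each $T(E_k)$: the result is a sum of at most $\mathrm{poly}(n)^{t}=\mathrm{poly}(n)$ terms, each of which is a product of linear forms (a depth-$2$ subformula), and multilinearity is preserved because the sets of variables involved can be handled via the same roots-of-unity trick applied to the aggregated linear forms. This is the point I expect to be the main obstacle, and handling it cleanly is the technical heart of the argument: the depth-$3$ class is not closed under product in general, but the very restricted shape of \RZ0\ proof-lines (constantly many sub-disjunctions, constant coefficients) is what makes the fusion possible.

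For (iii), I would verify each \RZ0\ rule individually. The axioms $(x_i=0)\Or(x_i=1)$ translate to the polynomial $x_i(1-x_i)$, which is a Boolean axiom of the multilinear system. Simplification of $(0=k)$ with $k\ne0$ multiplies $T$ by a nonzero constant, which is trivial to simulate. The main resolution-style inference, which combines $A\Or(\vec a\cd\vec x=a_0)$ and $B\Or(\vec b\cd\vec x=b_0)$ into $A\Or B\Or((\vec a\pm\vec b)\cd\vec x=a_0\pm b_0)$, is simulated by (a)~multiplying the two translations to obtain $T(A)\cd T(B)\cd(\vec a\cd\vec x-a_0)(\vec b\cd\vec x-b_0)$, and (b)~showing that, modulo the Boolean axioms and the polynomial identity $(\vec a\cd\vec x-a_0)(\vec b\cd\vec x-b_0)$ vanishing on the Boolean cube whenever $((\vec a\pm\vec b)\cd\vec x-(a_0\pm b_0))$ does not force satisfaction of the other disjuncts, one can derive $T(A\Or B\Or((\vec a\pm\vec b)\cd\vec x=a_0\pm b_0))$; the derivation relies on the short depth-$3$ multilinear proofs of such symmetric/linear polynomial identities established in \cite{RT05}. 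Putting steps (i)--(iii) together gives the polynomial simulation.
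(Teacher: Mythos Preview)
Your overall shape is right --- translate each \RZ0\ line to a multilinear polynomial, argue a depth-$3$ multilinear formula bound, and then carry the inference steps over --- but two things go wrong, and the paper sidesteps both by routing through PCR rather than simulating the \RZ0\ rules directly in fMC.

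\textbf{The resolution step is mis-simulated.} In step~(iii) you propose to multiply the two premise translations, obtaining $T(A)\,T(B)\,(\vec a\cd\vec x-a_0)(\vec b\cd\vec x-b_0)$, and then pass to $T(A)\,T(B)\,((\vec a\pm\vec b)\cd\vec x-(a_0\pm b_0))$. That passage is not a polynomial identity on the Boolean cube and is not derivable from what you have: for instance if $\vec a\cd\vec x=a_0+1$ and $\vec b\cd\vec x=b_0-1$ then the sum equation holds while neither factor of your product vanishes. The correct algebra is the one the paper uses in its PCR translation (Proposition~\ref{prop-PCR-sim-R(lin)}): from $\widehat A\cdot(\vec a\cd\vec x-a_0)$ multiply by $\widehat B$, from $\widehat B\cdot(\vec b\cd\vec x-b_0)$ multiply by $\widehat A$, and \emph{add}; this yields $\widehat A\,\widehat B\cdot((\vec a+\vec b)\cd\vec x-(a_0+b_0))$ on the nose. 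The point is that the R(lin) resolution rule is additive in the distinguished equation, not multiplicative.

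\textbf{The depth-$3$ ``fusion'' is where the real work is, and your sketch does not do it.} Distributing a constant-arity product over the top-level sums of several depth-$3$ multilinear formulas yields a sum of products of depth-$2$ pieces, but those products are in general \emph{not} multilinear, and re-multilinearizing an arbitrary such product has no obvious small depth-$3$ representation. The paper's Lemma~\ref{lem-RZ0-line-into-3ml-fmla} handles this by a specific structural decomposition: each linear form with constant coefficients is split as $z_1+\cdots+z_d$ with $d=O(1)$ and each $z_k$ a scalar times a \emph{sum of distinct variables}; after expanding, $\widehat D$ becomes a polynomial-length linear combination of terms of the form $q\cdot\prod_{k\in K} z_k^{r_k}$ with $|K|=O(1)$. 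Each such term is a product of $O(1)$ \emph{symmetric} polynomials, and now one invokes the black box (Proposition~\ref{prop-RT05-depth3-ml}, ultimately Shpilka--Wigderson) that $\ML{f_1\cdots f_c}$ for constantly many symmetric $f_i$ has a poly-size depth-$3$ multilinear formula with a plus gate at the root. Your ``roots-of-unity'' remark points toward this machinery but does not supply the reduction to a constant product of symmetric polynomials, which is the step that actually makes the depth stay at $3$.

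Finally, note the architectural difference: the paper never simulates \RZ0\ rules inside fMC. It converts the \RZ0\ proof to a PCR proof with polynomially many steps and every line still (the translation of) an \RZ0-line (Corollary~\ref{cor-PCR-RZ0-lines-translation}), proves the depth-$3$ formula bound for those lines (Lemma~\ref{lem-RZ0-line-into-3ml-fmla}), and then invokes the general PCR-to-fMC transfer (Theorem~\ref{thm-RT05-PCRsim}). This modular route avoids having to control the depth of every intermediate step of a hand-rolled fMC simulation.
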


An immediate corollary of this theorem and the upper bounds in \RZ0 described
above are polynomial-size multilinear proofs for the pigeonhole principle and
the Tseitin mod $p$ formulas.

\begin{enumerate}
\item Polynomial-size depth-$3$ multilinear refutations for the pigeonhole
principle over fields of characteristic $0$. This improves over \cite{RT05}
that shows a similar upper bound for a weaker principle, namely, the
\emph{functional} pigeonhole principle.

\item Polynomial-size depth-$3$ multilinear refutations for the Tseitin mod $p$
graph formulas over fields of characteristic $0$. These refutations are
different than those demonstrated in \cite{RT05},
and further they establish short
multilinear refutations of the Tseitin mod $p$ graph formulas over \emph{any field
of characteristic $0$} (the proof in \cite{RT05} showed how to refute the
Tseitin mod $p$ formulas by multilinear refutations only over fields
that contain a primitive $p$th root of unity).
\end{enumerate}

\para{Relations with cutting planes proofs.} As mentioned in the
introduction, a proof system combining resolution with cutting planes was
presented by Kraj{\'{\i}}{\v{c}}ek in \cite{Kra98-Discretely}.
The resulting system is denoted R(CP)
(see Section \ref{sec-CP} for a definition).
When the coefficients in the linear inequalities inside R(CP) proofs are
polynomially bounded,
the resulting proof system is denoted R(CP*).
We establish the following simulation result:

\begin{theorem}
\emph{\RL0} polynomially simulates resolution over cutting planes inequalities
with polynomially bounded coefficients {\rm R(CP*)}.
\end{theorem}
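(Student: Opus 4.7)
The plan is to translate each proof-line of an R(CP*) refutation into a polynomial-size R(lin) proof-line and to simulate each R(CP*) inference rule by a polynomial-size R(lin) derivation. For the translation $\tau$, I use the observation from the introduction: an inequality $\vec{a}\cdot\vec{x}\ge a_0$ becomes the disjunction $\bigvee_{i=0}^{k}(\vec{a}\cdot\vec{x}=a_0+i)$, where $a_0+k$ equals the sum of the positive coefficients of $\vec{a}$, and a disjunction of inequalities translates to the disjunction of the component translations. Since R(CP*) has polynomially bounded coefficients and sizes are measured in unary, each $\tau(L)$ is of polynomial size, and hence the line-by-line translation of an R(CP*) refutation is of polynomial size.

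I would then simulate each R(CP*) inference rule by an R(lin) derivation of polynomial size. The Boolean axioms $0\le x\le 1$ become the R(lin) axiom $(x=0)\lor(x=1)$, and weakening is immediate. The cutting planes addition rule is simulated by pairwise application of the R(lin) addition rule across the equations of the two translated disjunctions, producing a disjunction of sum-equations that Boolean-covers the translation of the sum inequality (some disjuncts may be Boolean-unsatisfiable, but this is harmless for subsequent steps). Multiplication of an inequality by a positive integer $c$ is carried out by adding each equation of the translation to itself $c-1$ times. The resolution rule between $\vec{a}\cdot\vec{x}\ge a_0$ and $-\vec{a}\cdot\vec{x}\ge -a_0+1$ is simulated by the fact that any pair of equations $E_i$, $F_j$ taken from the two translated disjunctions subtracts to $(0=1+i+j)$, which R(lin) discards; a careful scheduling that re-uses the two premises eliminates all equations of both translated disjunctions in polynomially many R(lin) steps, yielding the expected resolvent.

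The step I expect to be the main obstacle is the cutting planes division rule: from $\vec{b}\cdot\vec{x}\ge a_0$ with $d\mid b_i$ for each $i$, derive $(\vec{b}/d)\cdot\vec{x}\ge\lceil a_0/d\rceil$. R(lin) only allows integer linear combinations of equations and so has no syntactic counterpart of dividing an equation by $d$. My plan here is to exploit that over Boolean assignments $\vec{b}\cdot\vec{x}$ is always a multiple of $d$, so each equation $(\vec{b}\cdot\vec{x}=c)$ with $d\nmid c$ is Boolean-unsatisfiable and, together with the Boolean axioms, can be eliminated from the translated disjunction in polynomially many R(lin) steps (using that the coefficients are polynomially bounded). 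The remaining equations $(\vec{b}\cdot\vec{x}=dm)$ and the target equations $((\vec{b}/d)\cdot\vec{x}=m)$ are Boolean-equivalent: the direction $((\vec{b}/d)\cdot\vec{x}=m)\Rightarrow(\vec{b}\cdot\vec{x}=dm)$ is implemented by iterated R(lin) addition, while the reverse direction --- the actual content of simulating division --- is obtained by invoking the Boolean axioms to identify $\vec{b}\cdot\vec{x}$ with $d\cdot((\vec{b}/d)\cdot\vec{x})$ via explicit integer linear combinations. Verifying that this step keeps every intermediate disjunction of polynomial size is the main technical work, and it is here that the polynomial-boundedness of the coefficients in R(CP*) is crucially used.
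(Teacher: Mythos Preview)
Your overall structure (line-by-line translation, rule-by-rule simulation) matches the paper, and your handling of addition, multiplication, weakening, and simplification is fine. But your treatment of the division rule has a genuine gap.

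You propose to (i) delete from the translated premise the equations $(\vec b\cdot\vec x=c)$ with $d\nmid c$, and then (ii) convert each surviving $(\vec b\cdot\vec x=dm)$ into $((\vec b/d)\cdot\vec x=m)$ ``via explicit integer linear combinations.'' The trouble is that $\vec b\cdot\vec x$ and $d\cdot((\vec b/d)\cdot\vec x)$ are literally the same linear form, so there is nothing to ``identify''; what you actually need is to pass from the equation $d\vec a\cdot\vec x=dm$ to the equation $\vec a\cdot\vec x=m$, and R(lin) has no rule that divides coefficients. No integer linear combination of $d\vec a\cdot\vec x=dm$ with itself or with Boolean axioms will produce a line whose linear form is $\vec a\cdot\vec x$ --- you can only reach integer multiples of that form. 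So step (ii) as you describe it cannot work, and step (i) is also left unexplained (eliminating a Boolean-unsatisfiable disjunct is not a primitive operation in R(lin)).

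The paper sidesteps both problems with one idea: rather than trying to divide the premise, it builds the \emph{target} disjunction from scratch. Using Lemma~\ref{lem-basic-count-all-possibilities} (the ``all possible values'' lemma, which follows from the Boolean axioms alone), it derives $\bigvee_{\alpha\in\mathcal A}(\vec a\cdot\vec x=\alpha)$ where $\mathcal A$ is the Boolean range of $\vec a\cdot\vec x$. Then for each $\beta<\lceil a_0/c\rceil$ it multiplies $(\vec a\cdot\vec x=\beta)$ by $c$ inside that disjunction to get $(c\vec a\cdot\vec x=c\beta)$ with $c\beta<a_0$, and resolves this against every equation of the translated premise $\widehat{c\vec a\cdot\vec x\ge a_0}$ to cut $\beta$ off. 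What remains is exactly (a sub-disjunction of) $\widehat{\vec a\cdot\vec x\ge\lceil a_0/c\rceil}$, which Weakening finishes. The point is that the target linear form $\vec a\cdot\vec x$ enters through the Boolean axioms, not through any division of the premise.

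A minor side remark: the ``resolution rule between $\vec a\cdot\vec x\ge a_0$ and $-\vec a\cdot\vec x\ge 1-a_0$'' you describe is not an inference rule of R(CP*) as formulated in the paper; the corresponding object there is the excluded-middle \emph{axiom} $(\vec a\cdot\vec x\ge a_0)\lor(-\vec a\cdot\vec x\ge 1-a_0)$, whose translation the paper derives directly from Lemma~\ref{lem-basic-count-all-possibilities} followed by Weakening.
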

We do not know if the converse also holds.

\section{Notation and Background on Propositional Proof Systems}\label{sec-notation}

For a natural number $n$, we use $[n]$ to denote $\set{1,\ldots, n}$.
For a vector of $n$ (integral) coefficients $\vec a$ and a
vector of $n$ variables $\vec x$, we denote by $\vec a \cd\vec x$ the scalar
product $\sum_{i=1}^n a_i x_i$.
If $\vec b$ is another vector (of length $n$),
then $\vec a+\vec b$ denotes the addition of $\vec a$ and $\vec b$ as vectors,
and $c\vec a$ (for an integer $c$) denotes the product of the scalar $c$ with $\vec a$
(where, $-\vec a$ denotes $-1\vec a$).
For two linear equations $L_1: \vec a \cd\vec x = a_0$ and
$L_2: \vec b\cd\vec x = b_0$, their addition
$(\vec a + \vec b)\cd \vec x=a_0+b_0$ is denoted $L_1+L_2$
(and their subtraction $(\vec a - \vec b)\cd \vec x=a_0-b_0$ is denoted $L_1-L_2$).
For two Boolean assignments (identified as $0,1$ strings)
$\alpha,\alpha'\in\zo^n$ we write $\alpha'\ge \alpha\,$ if
$\,\alpha'_i\ge \alpha_i$, for all $i\in[n]$ (where
$\alpha_i$,\,$\alpha'_i$ are the $i$th bits of $\alpha$ and $\alpha'$, respectively).

We now recall some basic concepts on propositional proof systems.
For background on algebraic proof systems (and specifically \ml0 proofs) see Section
\ref{sec-multilinear}.

\para{Resolution.}
In order to put our work in context, we need to define the resolution
refutation system.

A CNF formula over the variables $x_1, \ldots, x_n$ is defined as follows. A
\emph{literal} is a variable $x_i$ or its negation $\neg x_i$. A \emph{clause }
is a disjunction of literals. A \emph{CNF formula} is a conjunction of clauses.
The \emph{size of a clause} is the number of literals in it.

Resolution is a complete and sound proof system for unsatisfiable CNF formulas.
Let $C$ and $D$ be two clauses containing neither $x_i$ nor $\neg x_i$, the
\emph{resolution rule} allows one to derive $C\vee D$ from $C\vee x_i$ and
$D\vee \neg x_i$. The clause $C\vee D$ is called the \emph{resolvent} of the
clauses $C\vee x_i$ and $D\vee \neg x_i$ on the variable $x_i$, and we also say
that $C\vee x_i$ and $D\vee \neg x_i$ were \emph{resolved over $x_i$}. The
\emph{weakening rule} allows to derive the clause $C\vee D$ from the clause
$C$, for any two clauses $C,D$.

\begin{definition}[\textbf{Resolution}] \label{defnRes}
A \emph{resolution proof of the clause $D$ from a CNF formula $K$} is a
sequence of clauses $D_1,D_2,\ldots,D_\ell\,$, such that: (1) each clause $D_j$
is either a clause of $K$ or a resolvent of two previous clauses in the
sequence or derived by the weakening rule from a previous clause in the
sequence; (2) the last clause $D_\ell=D$.
The \emph{size} of a resolution proof is the sum of all the sizes of the
clauses in it.
A \emph{resolution refutation} of a CNF formula $K$ is
a resolution proof of the empty clause $\Box$ from $K$ (the empty clause stands
for \textsc{false}; that is, the empty clause has no satisfying assignments).
\end{definition}

A proof in resolution (or any of its extensions) is called also
a \emph{derivation} or a \emph{proof-sequence}.
Each sequence-element in a proof-sequence is called also a \emph{proof-line}.
A proof-sequence containing the proof-lines
$D_1\CommaDots D_\ell$ is also said to be a
\emph{derivation of $D_1\CommaDots D_\ell$}.

\para{Cook-Reckhow proof systems.}
Following \cite{CR79}, a \emph{Cook-Reckhow proof system} is a polynomial-time
algorithm $A$ that receives a Boolean formula $F$ (for instance, a CNF) and a string
$\pi$ over some finite alphabet (``the (proposed) refutation" of $F$), such
that there exists a $\pi$ with $A(F,\pi)=1$ if and only if $F$ is
unsatisfiable.
The \emph{completeness} of a (Cook-Reckhow) proof system
(with respect to the set of all unsatisfiable Boolean formulas;
or for a subset of it, e.g. the set of unsatisfiable CNF formulas)
stands for the fact that
every unsatisfiable formula $F$ has a string $\pi$
(``the refutation of $F$") so that $A(F,\pi)=1$.
The \emph{soundness} of a (Cook-Reckhow) proof system stands for the fact that
every formula $F$ so that
$A(F,\pi)=1$ for some string $\pi$ is unsatisfiable
(in other words, no satisfiable formula has a refutation).

For instance, resolution is a Cook-Reckhow proof system,
since it is complete and sound for the set of unsatisfiable CNF formulas, and
given a CNF formula $F$ and a string $\pi$ it is easy to check in polynomial-time
(in \emph{both $F$ and $\pi$}) whether $\pi$ constitutes a resolution refutation of $F$.

We shall also consider proof systems that are not necessarily
(that is, not known to be) Cook-Reckhow proof systems.
Specifically, multilinear proof systems (over large enough fields)
meet the requirements in the definition of Cook-Reckhow proof systems,
\emph{except} that the condition on $A$ above is relaxed:
we allow $A$ to be in \emph{probabilistic} polynomial-time \BPP
~(which is not known to be equal to deterministic polynomial-time).

\para{Polynomial simulations of proof systems.}
When comparing the strength of different proof systems we shall confine
ourselves to CNF formulas only. That is, we consider propositional proof
systems as proof systems for the set of unsatisfiable CNF formulas. For that
purpose, if a proof system does not operate with clauses directly,
then we fix a (direct) translation from clauses to
the objects operated by the proof system. This is done for both resolution over
linear equations (which operate with disjunctions of linear
equations) and its fragments,
and also for multilinear proofs
(which operate with multilinear polynomials, represented as multilinear formulas);
see for example Subsection
\ref{sec-clauses-of-linear-equations} for such a direct translation.

\begin{definition}\label{defnSim} Let $\mathcal P_1, \mathcal P_2$ be
two proof systems for the set of unsatisfiable CNF formulas
(we identify a CNF formula with its
corresponding translation, as discussed above). We say that $\mathcal P_2$
\emph{polynomially simulates} $\mathcal P_1$
if given a $\mathcal P_1$ refutation $\pi$ of a CNF formula $F$,
then there exists a refutation of $F$ in $\mathcal P_2$ of size polynomial
in the size of $\pi$.
 In case $\mathcal P_2$ polynomially simulates $\mathcal P_1$ while $\mathcal P_1$
does not polynomially simulates $\mathcal P_2$ we say that $\mathcal P_2$
is \emph{strictly stronger} than $\mathcal P_1$.
\end{definition}

\section{Resolution over Linear Equations and its Subsystems}
\label{sec-systems-definitions}

The proof systems we consider in this section are extensions of resolution.
Proof-lines in resolution are clauses. Instead of this,
the extensions of resolution we consider here operate with disjunctions of
linear equations with integral coefficients.
For this section we use the convention that all the formal variables
in the propositional proof systems
considered are taken from the set $X:=\set{x_1\CommaDots x_n}$.

\subsection{Disjunctions of Linear Equations}
\label{sec-clauses-of-linear-equations}

For $L$ a linear equation $a_1 x_1\PlusDots a_n x_n = a_0$, the right hand
side $a_0$ is called the \emph{free-term} of $L$
and the left hand side $a_1 x_1\PlusDots a_n x_n$ is called the
\emph{linear form} of $L$ (the linear form can be $0$).
 A \emph{disjunction of linear equations} is of
the following general form:

\begin{equation}\label{eq-example-of-disjunction-of-linear-equations}
\left(a^{(1)}_1 x_1+\ldots + a^{(1)}_{n} x_n=a^{(1)}_0\right) \OrDots
 \left(a^{(t)}_1 x_1+\ldots + a^{(t)}_{n} x_n=a^{(t)}_0\right)\,,
\end{equation}
where $t\ge 0$ and the coefficients $a^{(j)}_i$ are integers (for all $0\le
i\le n,\; 1\le j\le t$).
We discard duplicate linear equations from a disjunction of linear equations.
The semantics of such a disjunction is the natural
one: We say that an assignment of integral values to the variables
$x_1,...,x_n$ \emph{satisfies}
(\ref{eq-example-of-disjunction-of-linear-equations}) if and only if there
exists $j\in[t]$ so that the equation $a^{(j)}_1 x_1+\ldots + a^{(j)}_{n}
x_n=a^{(j)}_0$ holds under the given assignment.

The symbol $\models$ denotes the \emph{semantic implication} relation, that is,
for every collection $D_1\CommaDots D_m$ of disjunctions of linear equations,
$$D_1\CommaDots D_m \models D_0$$
 means that every assignment of $0,1$
values that satisfies all $D_1\CommaDots D_m$ also satisfies $D_0$.\footnotemark
~In this case we also say that $D_1\CommaDots D_m$ \emph{semantically
imply} $D_0$.
\footnotetext{Alternatively, we can consider assignments of any integral
values (instead of only Boolean values) to the variables in
$D_1\CommaDots D_m$, stipulating that the collection
$D_1\CommaDots D_m$ contains all disjunctions of the form $\,(x_j=0)\Or(x_j=1)$
for all the variables $x_j\in X$
(these formulas force any satisfying assignment to give only $0,1$
 values to the variables).}

The \emph{size of a linear equation} $a_1 x_1\PlusDots a_n x_n = a_0$
is $\sum_{i=0}^n{|a_i|}$, i.e., the sum of the bit sizes of all $a_i$ written in
\emph{unary} notation.
Accordingly, the \emph{size of the linear form} $a_1 x_1\PlusDots a_n x_n$
is $\sum_{i=1}^n{|a_i|}$.
The \emph{size of a disjunction of linear equations} is the total size of all
linear equations in it.

Since all linear equations considered in this paper are of integral coefficients,
we shall speak of \emph{linear equations} when we actually mean linear
equations with integral coefficients. 
Similar to resolution, the \emph{empty disjunction} is unsatisfiable and stands
for the truth value \textsc{false}. 

\para{Translation of clauses.}
As described in the introduction,
we can translate
any CNF formula to a collection of disjunctions of linear equations in a
direct manner: Every clause $\BigOr_{i\in I} x_i \Or \BigOr_{j\in J}
\Not x_j$ (where $I$ and $J$ are sets of indices of variables)
pertaining to the CNF is translated into the disjunction
$\BigOr_{i\in I} (x_i=1) \Or \BigOr_{j\in J} (x_j=0)$.
For a clause $D$ we denote by $\widetilde{D}$ its
translation into a disjunction of linear equations. It is easy to verify that
any Boolean assignment to the variables $x_1\CommaDots x_n$ satisfies a clause
$D$ if and only if it satisfies $\widetilde{D}$ (where $\textsc{true}$ is
treated as $1$ and $\textsc{false}$ as $0$).

\subsection{Resolution over Linear Equations -- R(lin)}

Defined below is our basic proof system R(lin) that enables resolution to
reason with disjunctions of linear equations. As we wish to reason about
Boolean variables we augment the system with the axioms $\,(x_i=0)\Or(x_i=1)$,
for all $i\in[n]$, called the \emph{Boolean axioms}.

\begin{definition}[\textbf{R(lin)}]\label{def-R(lin)}
Let $K:= \set{K_1,\ldots,K_m}$ be a collection of disjunctions of linear
equations. An \emph{R(lin)-proof from $K$ of a disjunction of linear equations
$D$} is a finite sequence $\pi =(D_1 ,...,D_\ell)$ of disjunctions of linear
equations, such that $D_\ell=D$ and for every $i\in[\ell]$,
either $\,D_i = K_j\,$ for some $j\in[m]$,
or $D_i$ is a Boolean axiom $\,(x_h=0)\Or(x_h=1)$ for some $h\in[n]$,
or $D_i$ was deduced by one of the following R(lin)-inference rules,
 using $D_j,D_k$ for some $j,k<i$:

\begin{description}
\item[\quad Resolution] Let $A,B$ be two disjunctions\footnotemark of linear equations
and let $L_1,L_2$ be two linear equations.

From $A\Or L_1$ and $B\Or L_2$ derive $A \Or B \Or(L_1+L_2)$.

Similarly, from $A\Or L_1$ and $B\Or L_2$ derive $A \Or B \Or(L_1-L_2)$.

\item[\quad Weakening] From a disjunction of linear equations $A$ derive
$A \Or L$\,, where $L$ is an arbitrary linear equation over $X$.

\item[\quad Simplification] From $A\Or (0=k)$ derive
$A$, where $A$ is a disjunction of linear equations and $k\ne 0$.
\footnotetext{Possibly the empty disjunction. This remark also applies
to the inference rules below.}
\end{description}
An \emph{R(lin) refutation of} a collection of disjunctions of linear equations
$K$ is a proof of the empty disjunction from $K$. The \emph{size} of an R(lin)-proof
$\pi$ is the total size of all the disjunctions of linear equations in
$\pi$, denoted $|\pi|$.
\end{definition}

Similar to resolution,
in case $A \Or B \Or(L_1+L_2)$ is derived from $A\Or L_1$ and $B\Or L_2$ by
the resolution rule, we say that $A\Or L_1$ and $B\Or L_2$ were \emph{resolved
over $L_1$ and $L_2$}, respectively, and we call $A \Or B \Or(L_1+L_2)$ the
\emph{resolvent} of $A\Or L_1$ and $B\Or L_2$ (and similarly, when $A \Or B
\Or(L_1-L_2)$ is derived from $A\Or L_1$ and $B\Or L_2$ by the resolution
rule; we use the same terminology for both addition and subtraction,
and it should be clear from the context which operation is actually applied).
We also describe such an application of the
resolution rule by saying that \emph{$L_1$ was added (resp., subtracted) to (resp.
from) $L_2$ in $A\Or L_1$ and $B\Or L_2$}.

In light of the direct translation between CNF formulas and collections of
disjunctions of linear equations (described in the previous subsection), we can
consider R(lin) to be a proof system for the set of unsatisfiable CNF formulas:

\begin{proposition}\label{prop-R(lin)-sim-resolution}
The R(lin) refutation system is a sound and complete Cook-Reckhow (see Section
\ref{sec-notation}) refutation system for unsatisfiable CNF formulas
(translated into unsatisfiable collection of disjunctions of linear equations).
\end{proposition}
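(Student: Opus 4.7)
The plan is to verify three properties in turn: soundness, completeness, and the Cook-Reckhow polynomial-time verifiability.

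For \textbf{soundness}, I would argue by induction on the length of an R(lin)-proof that every proof-line is semantically implied (over $\zo$-assignments) by the initial disjunctions $K$ together with the Boolean axioms. The base case covers axioms trivially. For the resolution rule, suppose $A\Or L_1$ and $B\Or L_2$ are both satisfied by some $\alpha\in\zo^n$; if $A$ or $B$ already holds under $\alpha$ then so does the conclusion, and otherwise both $L_1$ and $L_2$ hold under $\alpha$, so that $L_1\pm L_2$ holds under $\alpha$ as well. Weakening only enlarges the set of satisfying assignments, and simplification removes an equation $(0=k)$ with $k\ne 0$ that is never satisfied, so the remaining disjunction has exactly the same satisfying assignments. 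Applying this to the empty disjunction shows no $\zo$-assignment can satisfy $K$, which via the direct translation $\widetilde{(\cdot)}$ from clauses gives unsatisfiability of the original CNF.

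For \textbf{completeness}, the cleanest route is to show that R(lin) polynomially simulates resolution. Given a resolution refutation of a CNF $F$, I would translate each clause $C$ to $\widetilde{C}$ and simulate each resolution step as follows: from $\widetilde{C}\Or(x_i=1)$ and $\widetilde{D}\Or(x_i=0)$, apply the subtraction form of the resolution rule to derive $\widetilde{C}\Or\widetilde{D}\Or(0=1)$, and then apply simplification to obtain $\widetilde{C\vee D}$. Resolution's own weakening rule is simulated by R(lin)'s weakening (using translations of the added literals). Since resolution is complete for unsatisfiable CNFs, this shows R(lin) refutes every unsatisfiable translated CNF; the empty clause simply translates to the empty disjunction.

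For the \textbf{Cook-Reckhow} property, I would describe a verification algorithm that, given a candidate sequence $\pi=(D_1,\ldots,D_\ell)$ and the input $K$, checks each $D_i$ against the four possibilities (an element of $K$, a Boolean axiom, a resolution/weakening/simplification consequence of earlier lines). Each check reduces to comparing linear equations coefficient-by-coefficient and to testing that a proposed sum or difference matches syntactically. Because the size of a disjunction measures all coefficients in \emph{unary}, each coefficient is bounded by $|\pi|$, so every such comparison is polynomial in $|\pi|$ and $|K|$; checking that the final line is the empty disjunction is immediate. The main conceptual point to emphasize is precisely this unary convention, which is what makes R(lin) a genuine Cook-Reckhow system rather than merely a semantic one; otherwise the arithmetic on coefficients could blow up verification cost. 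I do not anticipate a serious obstacle in any step — the only subtlety is being explicit that $\zo$-soundness (rather than soundness over arbitrary integers) suffices because the Boolean axioms restrict attention to $\zo$-assignments.
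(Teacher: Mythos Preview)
Your proposal is correct and follows essentially the same approach as the paper: completeness is established by polynomially simulating resolution (using subtraction to turn $(x_i=1)$ and $(x_i=0)$ into $(0=1)$ and then Simplification), soundness is argued from the soundness of each inference rule, and the Cook-Reckhow property is checked by verifying each line locally. You give more detail than the paper does on soundness and on the role of the unary-coefficient convention in polynomial-time verifiability, but the structure and key ideas are identical.
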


\begin{proof}
Completeness of R(lin) (for the set of unsatisfiable CNF formulas) stems from a
straightforward simulation of resolution, as we now show.

\begin{claim}\label{cla-R(lin)-sim-res-straitforward}
R(lin) polynomially simulates resolution.
\end{claim}

\begin{proofclaim}
Proceed by induction on the length of the resolution refutation to show that
any resolution derivation of a clause $A$ can be translated
with only a linear increase in size into an R(lin)
derivation of the corresponding disjunction of linear equations
$\widetilde{A}$ (see the previous subsection for
 the definition of $\widetilde A$).

\emph{The base case:
}An initial clause $A$ is translated into its corresponding disjunction
of linear equations $\widetilde{A}$.

\emph{The induction step:
}If a resolution clause $A\Or B$ was derived by the resolution
rule from $A\Or x_i$ and $B\Or \Not x_i$, then in R(lin)
we subtract $(x_i=0)$ from $(x_i=1)$ in
$\widetilde{B}\Or (x_i=0)$ and $\widetilde{A}\Or (x_i=1)$, respectively,
to obtain
$\widetilde{A}\Or\widetilde{B}\Or(0=1)$.
Then, using
the Simplification rule, we can cut-off $(0=1)$ from
$\widetilde{A}\Or\widetilde{B}\Or(0=1)$,
and arrive at $\widetilde{A}\Or\widetilde{B}$.

If a clause $A\Or B$ was derived in resolution from $A$ by the Weakening rule,
then we derive $\widetilde{A}\Or\widetilde{B}$ from
$\widetilde{A}$ by the Weakening rule in R(lin).
\end{proofclaim}

Soundness of R(lin) stems from the soundness of the inference rules (which
means that: If $D$ was derived from $C,B$ by the R(lin) resolution rule then
any assignment that satisfies both $C$ and $B$ also satisfies $D$; and if $D$
was derived from $C$ by either the Weakening rule or the Simplification rule,
then any assignment that satisfies $C$ also satisfies $D$).\QuadSpace

The R(lin) proof system is a Cook-Reckhow proof system,
as it is easy to verify
in polynomial-time whether  an R(lin) proof-line is inferred, by an application of
one of R(lin)'s inference rules, from a previous proof-line (or proof-lines).
Thus, any sequence of disjunctions of linear equations,
can be checked in polynomial-time (in the size of the sequence)
to decide whether or not it is a legitimate R(lin) proof-sequence.
\end{proof}

In Section \ref{sec-impl-complete} we shall
see that a stronger notion of completeness (that is, implicational completeness)
holds for R(lin) and its subsystems.

\subsection{Fragment of Resolution over Linear Equations -- R$^{\mathbf 0}$(lin)}

Here we consider a restriction of \RL0, denoted \RZ0.
As discussed in the introduction section,
\RZ0 is roughly the fragment of R(lin) we know how to polynomially simulate with
depth-$3$ multilinear proofs.

By results established in the sequel (Sections \ref{sec-clique}
and \ref{sec-lower-bounds}) R(lin) is \emph{strictly stronger} than \RZ0,
which means that R(lin) polynomially simulates \RZ0,
while the converse does not hold.

\RZ0 operates with disjunctions of (arbitrarily many) linear equations with
constant coefficients (excluding the free terms), under the following
restriction: Every disjunction can be partitioned into a constant number of
sub-disjunctions, where each sub-disjunction either consists of linear
equations that differ only in their free-terms or is a (translation of a)
clause.

As mentioned in the introduction, every linear \emph{inequality}
with Boolean variables can be represented by a disjunction of linear
equations that differ only in their free-terms.
So the \RZ0 proof system resembles, to some extent, a proof system operating with
disjunctions of constant number of linear inequalities with constant integral
coefficients (on the other hand, it is probable that \RZ0 is stronger
than such a proof system, as a disjunction of linear equations that differ only
in their free terms is [expressively] stronger than a linear inequality
[or even a disjunction of linear inequalities]:
the former can define the \textsc{parity}
function while the latter cannot).\QuadSpace

\ind\emph{Example of an \RZ0-line:}
$$(x_1+\ldots+x_{\ell}=1)\Or\cdots\Or(x_1+\ldots+x_{\ell}=\ell)
\Or(x_{\ell+1}=1)\Or\cdots\Or(x_n=1),$$ for some $1\le \ell \le n$.
The next section contains other concrete (and natural) examples of \RZ0-lines.
\QuadSpace

Let us define formally what it means to be an \RZ0 proof-line, that is, a
proof-line inside an \RZ0 proof, called {\em \RZ0-line}:

\begin{definition}[R$^{\mathbf 0}$(lin)-line]\label{def-RZ0-clause}
Let $D$ be a disjunction of linear equations whose variables have constant integer
coefficients (the free-terms are unbounded).
Assume $D$ can be partitioned
into a constant number $k$ of sub-disjunctions $D_1\CommaDots D_k$, where each
$D_i$ either consists of (an unbounded) disjunction of linear equations that
differ only in their free-terms, or is a translation of a clause (as defined in
Subsection \ref{sec-clauses-of-linear-equations}).
Then the disjunction $D$ is called an
\emph{\RZ0-line}.
\end{definition}

Thus, any \RZ0-line is of the following general form:

\begin{equation}\label{eq-R0-clause}
	\BigOr_{i\in I_1}
		\left(
			\vec a^{(1)}\cd\vec x = \ell^{(1)}_i
		\right)
	\OrDots
	\BigOr_{i\in I_k}
		\left(
			\vec a^{(k)} \cd \vec x =\ell^{(k)}_i
		\right)
	\Or
	\BigOr_{j\in J}(x_j=b_j)
\,,
\end{equation}
where $k$ and all $a^{t}_r$ (for  $r\in[n]$ and $t\in[k]$) are integer
constants and $b_j\in\zo$ (for all $j\in J$) (and $I_1\CommaDots I_k, J$ are
unbounded sets of indices). Note that a disjunction of clauses can be combined
into a single clause. Hence, without loss of generality we can assume that in
any \RZ0-line only a single (translation of a) clause occurs. This is depicted
in (\ref{eq-R0-clause}) (where in addition we have ignored in
(\ref{eq-R0-clause}) the possibility that the single clause obtained by
combining several clauses contains $x_j\Or\neg x_j$, for some $j\in[n]$).
\QuadSpace

\begin{definition}[R$^{\mathbf 0}$(lin)]
The \emph{\RZ0} proof system is a restriction of the \RL0 proof system in which
each proof-line is an \RZ0-line (as in Definition \ref{def-RZ0-clause}).
\end{definition}

For a completeness proof of \RZ0 see Section \ref{sec-impl-complete}.\footnotemark
\footnotetext{The simulation of resolution inside \RL0 (in the proof
of Proposition \ref{prop-R(lin)-sim-resolution}) is carried on with each \RL0
proof-line being in fact a translation of a clause, and hence, an \RZ0-line
(notice that the Boolean axioms of R(lin) are \RZ0-lines).
This already implies that \RZ0 is a complete refutation system
for the set of unsatisfiable CNF formulas. In section \ref{sec-impl-complete}
we give a proof of a stronger notion of completeness for \RZ0.}

\section{Reasoning and Counting inside R(lin) and its Subsystems}
In this section we illustrate a simple way to reason by
case-analysis inside R(lin) and its subsystems.
This kind of reasoning will simplify the presentation of proofs inside R(lin)
(and \RZ0) in the sequel (essentially, a similar -- though
weaker -- kind of reasoning is applicable already in resolution).
We will then demonstrate efficient and transparent proofs for simple
counting arguments
that will also facilitate us in the sequel.

\subsection{Basic Reasoning inside R(lin) and its Subsystems}\label{sec-case-anl}

Given $K$ a collection of disjunctions of linear equations
$\set{K_1,\ldots,K_m}$ and $C$ a disjunction of linear equations,
denote by $K\Or C$ the collection $\set{K_1\Or C,\ldots,K_m\Or C}$.
Recall that the formal variables in our proof system are $x_1\CommaDots x_n$.

\begin{lemma}\label{lem-R(lin)-case-analysis}
Let $K$ be a collection of disjunctions of linear equations,
and let $z$ abbreviate some linear form with integer coefficients.
Let $E_1\CommaDots E_\ell$ be $\ell$ disjunctions of linear equations.
Assume that for all $i\in[\ell]$ there is an R(lin) derivation of $E_i$
from $z=a_i$ and $K$ with size at most $s$ where $a_1,\ldots,a_\l$ are
distinct integers.
Then, there is an R(lin) proof of $\BigOr_{i=1}^{\ell} E_i$
from $K$ and $(z=a_1)\OrDots(z=a_\ell)$, with size polynomial in $s$ and $\ell$.
\end{lemma}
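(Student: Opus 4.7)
I plan to prove the lemma by iterated case analysis, built on a simple ``weakening'' principle for R(lin): \emph{if $\pi$ is an R(lin) derivation of $E$ from a set $H$ of hypotheses and $D$ is any fixed disjunction of linear equations, then appending $D$ as an extra disjunct to every proof-line of $\pi$ produces a legal R(lin) derivation of $E\vee D$ from the weakened hypotheses $\{H'\vee D:H'\in H\}$.} The verification is rule-by-rule and routine: a Resolution step $A\vee L_1,\;B\vee L_2\;\Rightarrow\;A\vee B\vee(L_1\pm L_2)$ extends to $(A\vee D)\vee L_1,\;(B\vee D)\vee L_2\;\Rightarrow\;A\vee B\vee D\vee(L_1\pm L_2)$ after collapsing the duplicated occurrence of $D$; the Weakening and Simplification rules transfer trivially; and a Boolean axiom $(x_h{=}0)\vee(x_h{=}1)$ can first be introduced as an axiom and then extended by $D$ through a chain of single-equation weakenings. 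Since every original $K$-hypothesis can likewise be recovered with $D$ appended by a short chain of weakenings, the net effect is an R(lin) derivation from $K$ together with the single additional hypothesis $(z{=}a_i)\vee D$.

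Given the weakening principle, I chain the $\ell$ derivations as follows. For $0\le i\le\ell$ set
\[
D_i \;:=\; E_1\vee\cdots\vee E_i\;\vee\;\bigvee_{j=i+1}^{\ell}(z=a_j),
\]
so that $D_0=\bigvee_{j=1}^{\ell}(z=a_j)$ is the given starting hypothesis and $D_\ell=\bigvee_{i=1}^{\ell}E_i$ is the target. At step $i$ I apply the weakening principle to $\pi_i$ with the auxiliary disjunction $\widehat{D}_i:=E_1\vee\cdots\vee E_{i-1}\vee\bigvee_{j>i}(z=a_j)$. Because $(z{=}a_i)\vee\widehat{D}_i$ is literally the disjunction $D_{i-1}$, and $E_i\vee\widehat{D}_i$ is literally $D_i$, this yields an R(lin) derivation from $K$ and $D_{i-1}$ of $D_i$. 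Concatenating these derivations for $i=1,\ldots,\ell$, beginning with the given $D_0$, produces the required R(lin) derivation of $\bigvee_{i=1}^{\ell}E_i$ from $K$ and $(z{=}a_1)\vee\cdots\vee(z{=}a_\ell)$.

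For the size bound, each weakened copy of $\pi_i$ has the same number of proof-lines as $\pi_i$, with each line's length increased by at most $|\widehat{D}_i|\le \ell + \sum_{j<i}|E_j|\le \ell+\ell s$; together with the auxiliary weakenings for the $K$-hypotheses and Boolean axioms, each phase has size $O(s\cdot(s+\ell s+\ell))$, and summing over the $\ell$ phases gives a total size polynomial in $s$ and $\ell$. The only point requiring care—rather than a genuine obstacle—is the rule-by-rule verification of the weakening principle (in particular, the collapsing of duplicate disjuncts in the Resolution step and the re-weakening of Boolean axioms); once this is in place, the iteration and the polynomial size estimate are immediate.
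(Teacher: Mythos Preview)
Your argument is correct, and it differs from the paper's in a pleasant way. Both proofs rest on the same basic ``weakening principle'' (append a fixed disjunction $D$ to every line of a derivation), and the paper states this just as you do, obtaining from each $\pi_i$ a derivation of $E_i\vee\bigvee_{j\ne i}(z=a_j)$ from $K$ and the full case-split $D_0$. From there, however, the paper proceeds differently: it keeps all $\ell$ of these weakened conclusions around and then runs a second phase in which, for $k=1,2,\ldots$, it resolves the running line $E_1\vee\cdots\vee E_k\vee\bigvee_{j>k}(z=a_j)$ against $E_{k+1}\vee\bigvee_{j\ne k+1}(z=a_j)$ by subtracting $(z=a_{k+1})$ from each $(z=a_r)$ with $r\le k$ and applying Simplification to discard the resulting $(0=a_r-a_{k+1})$; this is precisely where the distinctness of the $a_i$'s is used. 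Your chaining is more economical: by weakening $\pi_i$ with the ``running accumulator'' $\widehat D_i=E_1\vee\cdots\vee E_{i-1}\vee\bigvee_{j>i}(z=a_j)$, the hypothesis $(z=a_i)\vee\widehat D_i$ is already $D_{i-1}$ and the conclusion is already $D_i$, so no resolution/simplification phase is needed at all. In particular, your argument never invokes the distinctness hypothesis. The paper's version has the minor expository advantage of exhibiting an explicit cut on the case disjuncts (which is perhaps closer in spirit to how one thinks of case analysis), while yours gives a shorter proof with a cleaner invariant; both yield size polynomial in $s$ and $\ell$.
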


\begin{proof}
Denote by $D$ the disjunction $(z=a_1)\OrDots(z=a_\ell)$ and by
$\pi_i$ the \RL0 proof of $E_i$ from $K$ and $z=a_i$ (with size at most $s$),
for all $i\in[\ell]$.
It is easy to verify that for all $i\in[\ell]$ the sequence
$\pi_i\Or \BigOr_{j\in[\ell]\sm \set{i}}(z=a_j)$ is an \RL0 proof of
$E_i\Or \BigOr_{j\in[\ell]\sm \set{i}}(z=a_j)$ from $K$ and $D$.
So overall, given $D$ and $K$ as premises, there is an R(lin)
derivation of size polynomial in $s$ and $\ell$ of the following
collection of disjunctions of linear equations:
\begin{equation}\label{eq-case-analysis}
E_1\Or\BigOr_{j\in[\ell]\sm \set{1}}(z=a_j)\CommaDots
E_\ell\Or\BigOr_{j\in[\ell]\sm \set{\ell}}(z=a_j)\,.
\end{equation}

We now use the Resolution rule to cut-off all the equations
$(z=a_i)$ inside all the disjunctions in (\ref{eq-case-analysis}).
Formally, we prove that for every $1\le k \le\ell$ there is a
polynomial-size (in $s$ and $\ell$) R(lin) derivation from
(\ref{eq-case-analysis}) of
\begin{equation}\label{eq-case-analysis-ind1}
E_1\OrDots E_{k}\Or \BigOr_{j\in[\ell]\sm [k]}(z=a_j)\,,
\end{equation}
and so putting $k=\ell$, will conclude the proof of the lemma.

We proceed by induction on $k$. The base case for $k=1$ is immediate
(from (\ref{eq-case-analysis})).
For the induction case,
assume that for some $1\le k<\ell$ we already have an R(lin) proof
of (\ref{eq-case-analysis-ind1}),
with size polynomial in $s$ and $\ell$.

Consider the line
\begin{equation}\label{eq-case-anlsys-consider-k+1}
 E_{k+1}
	\Or
	\BigOr_{
			j\in[\ell]\sm \set{k+1}
			}
		(z=a_j)
\,.
\end{equation}
We can now cut-off the disjunctions
$\BigOr_{j\in[\ell]\sm [k]}(z=a_j)$
and $\BigOr_{j\in[\ell]\sm \set{k+1}}(z=a_j)$
from (\ref{eq-case-analysis-ind1})
and (\ref{eq-case-anlsys-consider-k+1}), respectively, using the Resolution rule
(since the $a_j$'s in (\ref{eq-case-analysis-ind1}) and in
(\ref{eq-case-anlsys-consider-k+1}) are disjoint).
We will demonstrate this derivation in some detail now,
in order to exemplify a proof carried inside R(lin).
We shall be less formal sometime in the sequel.

Resolve (\ref{eq-case-analysis-ind1}) with (\ref{eq-case-anlsys-consider-k+1})
over $(z=a_{k+1})$ and $(z=a_1)$, respectively, to obtain
\begin{equation}\label{eq-case-analysis-ind2}
(0=a_1-a_{k+1})\Or E_1\OrDots E_{k}\Or E_{k+1}\Or
\BigOr_{j\in[\ell]\sm\set{1,k+1}}(z=a_j)\,.
\end{equation}
Since $a_1\neq a_{k+1}$,
we can use the Simplification rule to cut-off $(0=a_1-a_{k+1})$
from (\ref{eq-case-analysis-ind2}), and we arrive at
\begin{equation}\label{eq-case-analysis-ind3}
E_1\OrDots E_{k}\Or E_{k+1}\Or\BigOr_{j\in[\ell]\sm\set{1,k+1}}(z=a_j)\,.
\end{equation}
Now, similarly, resolve (\ref{eq-case-analysis-ind1}) with
(\ref{eq-case-analysis-ind3}) over $(z=a_{k+1})$ and $(z=a_2)$,
respectively, and use Simplification to obtain
\begin{equation*}
E_1\OrDots E_{k}\Or E_{k+1}\Or\BigOr_{j\in[\ell]\sm\set{1,2,k+1}}(z=a_j)\,.
\end{equation*}
Continue in a similar manner until you arrive at
\begin{equation*}
E_1\OrDots E_{k}\Or E_{k+1}\Or\BigOr_{j\in[\ell]
\sm\set{1,2\CommaDots k,k+1}}(z=a_j)\,,
\end{equation*}
which is precisely what we need.
\end{proof}

Under the appropriate conditions, Lemma \ref{lem-R(lin)-case-analysis} also
holds for \RZ0 proofs. 
This is stated in the following lemma.

\begin{lemma}\label{lem-R0(lin)-case-anl}
Let $K$ be a collection of disjunctions of linear equations,
and let $z$ abbreviate a linear form with integer coefficients.
Let $E_1\CommaDots E_\ell$ be $\ell$ disjunctions of linear equations.
Assume that for all $i\in[\ell]$ there is an \RZ0
derivation of $E_i$ from $z=a_i$ and $K$ with size at most $s$,
where the $a_i$'s are distinct integers.
Then, assuming $\BigOr_{i=1}^{\ell} E_i$ is an \RZ0-line,
there is an \RZ0 proof of $\BigOr_{i=1}^{\ell} E_i$
from $K$ and $(z=a_1)\OrDots(z=a_\ell)$,
with size polynomial in $s$ and $\ell$.
\end{lemma}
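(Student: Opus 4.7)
The plan is to essentially mimic the proof of Lemma \ref{lem-R(lin)-case-analysis}, while carefully verifying that every intermediate proof-line constructed along the way is an \RZ0-line. The key structural observation that makes this possible is the following: if a disjunction $\bigvee_{i=1}^\ell E_i$ admits a partition into a constant number $k$ of sub-disjunctions (of the two allowed kinds in Definition \ref{def-RZ0-clause}), then for every subset $S\subseteq[\ell]$ the partial union $\bigvee_{i\in S} E_i$ inherits the same partition restricted to $S$, hence is itself an \RZ0-line with at most $k$ sub-disjunctions. Since each $E_i$ is the last line of an \RZ0-derivation by hypothesis, each $E_i$ is also individually an \RZ0-line.

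First, I would reuse the \RZ0-derivations $\pi_i$ of $E_i$ from $z=a_i$ and $K$, and augment every proof-line of $\pi_i$ by the disjunction $\bigvee_{j\in[\ell]\sm\set{i}}(z=a_j)$. All the equations $(z=a_j)$ share the common linear form $z$, so they constitute a single sub-disjunction of equations differing only in their free-terms. Augmenting any \RZ0-line by such a sub-disjunction either merges with an already existing sub-disjunction of linear form $z$ or adds exactly one new sub-disjunction; in either case the total number of sub-disjunctions grows by at most one and remains constant. Thus each augmented $\pi_i$ is a valid \RZ0-proof (from $K$ and $D:=(z=a_1)\OrDots(z=a_\ell)$) of $E_i\Or\bigvee_{j\in[\ell]\sm\set{i}}(z=a_j)$, giving the analogue of collection (\ref{eq-case-analysis}).

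Second, I would carry out the same inductive cut-off of the $(z=a_j)$ equations using the Resolution and Simplification rules exactly as in the proof of Lemma \ref{lem-R(lin)-case-analysis}, producing the sequence of lines of the form $E_1\OrDots E_{k}\Or\bigvee_{j\in[\ell]\sm[k]}(z=a_j)$ for $k=1,\ldots,\ell$. By the inherited-partition observation above, the partial union $E_1\OrDots E_k$ partitions into at most as many sub-disjunctions as $\bigvee_{i=1}^{\ell} E_i$ does, which is a constant by hypothesis. Appending the disjunction $\bigvee_{j\in[\ell]\sm[k]}(z=a_j)$ contributes at most one more sub-disjunction, and the transient line produced by the Resolution step carrying the equation $(0=a_r-a_s)$ adds at most one further sub-disjunction of constant form before being removed by Simplification. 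Hence every intermediate line is an \RZ0-line.

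The main obstacle I expect is precisely this bookkeeping of sub-disjunctions: one must ensure that no step introduces a non-constant number of distinct linear forms, which would break the \RZ0-line condition. The argument above addresses this by leveraging the hypothesis that the target $\bigvee_{i=1}^\ell E_i$ is itself an \RZ0-line, and observing that all equations involved in the cut-off phase share the linear form $z$ (or have linear form $0$), so they occupy only one (or at most two) additional sub-disjunction slots. The size bound follows as in Lemma \ref{lem-R(lin)-case-analysis}, since we apply the same $O(\ell)$ resolution/simplification steps to proofs of size $s$.
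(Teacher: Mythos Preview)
Your proposal is correct and takes essentially the same approach as the paper: the paper's proof consists of a single sentence asserting that, by simple inspection, every proof-line in the derivation of Lemma~\ref{lem-R(lin)-case-analysis} is already an \RZ0-line under the stated hypotheses, together with a footnote noting that the linear form $z$ must have constant coefficients (which follows since $z=a_i$ occurs as a line in the \RZ0-derivation $\pi_i$). Your write-up is simply a more explicit version of that inspection, carrying out the sub-disjunction bookkeeping the paper leaves to the reader.
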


\begin{proof}
It can be verified by simple inspection
that,
under the conditions spelled out in the statement of the lemma,
each proof-line in the R(lin) derivations in the proof of
Lemma \ref{lem-R(lin)-case-analysis} is actually an \RZ0-line.\footnotemark
\footnotetext{Note that when the proofs of $E_i$ from $z=a_i$,
for all $i\in[\ell]$, are all done inside \RZ0,
then the linear form $z$ ought to have \emph{constant} coefficients.}
\end{proof}

\para{Abbreviations.}
Lemmas \ref{lem-R(lin)-case-analysis} and \ref{lem-R0(lin)-case-anl}
will sometime facilitate us to
proceed inside R(lin) and \RZ0 with a slightly less formal manner.
For example, the situation in Lemma \ref{lem-R(lin)-case-analysis} above can be
depicted by saying that
``if $z=a_i$ implies $E_i$ (with a polynomial-size proof)
for all $i\in[\ell]$, then $\BigOr_{i=1}^\ell(z=a_i)$
implies $\BigOr_{i=1}^\ell E_i$ (with
a \ps0 proof)".

In case $\BigOr_{i=1}^\ell(z=a_i)$ above is just the
\emph{Boolean axiom} $(x_i=0)\Or (x_i=1)$, for some
$i\in[n]$, and $x_i=0$ implies $E_0$ and $x_i=1$ implies $E_1$ (both
with \ps0 proofs),
then to simplify the writing we shall sometime not mention the Boolean axiom at all.
For example, the latter situation can be depicted by saying
that ``if $x_i=0$ implies $E_0$ with a \ps0 proof and
$x_i=1$ implies $E_1$ with a \ps0 proof, then
we can derive $E_0\Or E_1$ with a \ps0 proof''.\FullSpace

\subsection{Basic Counting inside R(lin) and R$^{\mathbf 0}$(lin)}\label{sec-basic-count-R(lin)}

In this subsection we illustrate how to efficiently prove several basic counting
arguments inside R(lin) and \RZ0. This will facilitate us in showing short proofs for
hard tautologies in the sequel.
In accordance with the last paragraph in the previous subsection,
we shall carry the proofs inside R(lin) and \RZ0 with a slightly less rigor.

\begin{lemma}\label{lem-basic-cnt-combine-2-lines-all-possible}
Let $z_1$ abbreviate $\vec a \cd \vec x$ and $z_2$ abbreviate
 $\vec b \cd \vec x$.
Let $D_1$ be $\BigOr_{\alpha\in \mathcal A}(z_1 =\alpha)$
and let $D_2$ be
$\BigOr_{\beta\in\mathcal B}\left(z_2=\beta\right)$,
where $\mathcal A,\mathcal B$ are two (finite) sets of integers.
Then there is a polynomial-size (in the size of $D_1,D_2$) R(lin) proof from
$D_1,D_2$ of:
\begin{equation}\label{eq-basic-cnt-combine-all-possible}
		\BigOr\limits_
			{\alpha\in\mathcal A,
				\beta\in\mathcal B}
				\left(
					z_1+z_2
				  =\alpha+\beta
				\right)
\,.
\end{equation}
Moreover, if $\vec a$ and $\vec b$ consist of \emph{constant} integers
(which means that $D_1,D_2$ are \RZ0-lines),
then there is a polynomial-size (in the size of $D_1,D_2$) \RZ0 proof
of (\ref{eq-basic-cnt-combine-all-possible}) from $D_1,D_2$.
\end{lemma}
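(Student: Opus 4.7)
The plan is to apply Lemma \ref{lem-R(lin)-case-analysis} with $z := z_1$ and the distinct integers being the elements of $\mathcal A$. For each $\alpha \in \mathcal A$, let $E_\alpha := \BigOr_{\beta\in\mathcal B}(z_1+z_2=\alpha+\beta)$, so that $\BigOr_{\alpha\in\mathcal A} E_\alpha$ is precisely the target disjunction \er{-basic-cnt-combine-all-possible}. If I can produce, for each $\alpha\in\mathcal A$, a polynomial-size R(lin) derivation of $E_\alpha$ from $(z_1=\alpha)$ together with $D_2$, then Lemma \ref{lem-R(lin)-case-analysis} (invoked with $K = \{D_2\}$) will combine these derivations with $D_1$ to yield the desired polynomial-size R(lin) proof of (\ref{eq-basic-cnt-combine-all-possible}).

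To produce the derivation of $E_\alpha$ from $(z_1=\alpha)$ and $D_2 = \BigOr_{i=1}^{t}(z_2=\beta_i)$ (where $t=|\mathcal B|$), I will iteratively apply the Resolution rule by adding the equation $(z_1=\alpha)$ to one $(z_2=\beta_i)$ at a time. Concretely, at step $i$ the current proof-line has the form
\[
	\BigOr_{j>i}(z_2=\beta_j) \;\Or\; \BigOr_{j\le i}(z_1+z_2=\alpha+\beta_j),
\]
and resolving it with the premise $(z_1=\alpha)$ on the equations $(z_2=\beta_{i+1})$ and $(z_1=\alpha)$ (by addition) yields the step-$(i{+}1)$ line. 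After $t$ such applications we arrive at $E_\alpha$. Each intermediate line has size bounded by the sum of the sizes of $D_2$ and $E_\alpha$, so the whole derivation is polynomial in the sizes of $D_1$ and $D_2$, as required.

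For the ``moreover'' clause I need to verify that, when $\vec a$ and $\vec b$ have constant integer entries, every proof-line appearing above is an \RZ0-line, so that Lemma \ref{lem-R0(lin)-case-anl} becomes applicable. Each intermediate line in the derivation of $E_\alpha$ partitions into two sub-disjunctions, one with common linear form $z_2$ and one with common linear form $z_1+z_2$ (both having constant coefficients), and is therefore an \RZ0-line; the proof-lines that occur inside the proof of Lemma \ref{lem-R(lin)-case-analysis} have the analogous two-block form (a block with linear form $z_1$ and a block with linear form $z_1+z_2$), and so are also \RZ0-lines. Finally the target $\BigOr_{\alpha,\beta}(z_1+z_2=\alpha+\beta)$ has the single linear form $z_1+z_2$ with constant coefficients and is itself an \RZ0-line, so the hypothesis of Lemma \ref{lem-R0(lin)-case-anl} is met. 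I do not anticipate any real obstacle; the only point requiring a little care is to keep track of how many distinct linear forms occur in each intermediate line, since an \RZ0-line may contain only a constant number of them.
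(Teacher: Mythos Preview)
Your proposal is correct and follows essentially the same approach as the paper: for each $\alpha\in\mathcal A$ you derive $E_\alpha=\BigOr_{\beta\in\mathcal B}(z_1+z_2=\alpha+\beta)$ from $(z_1=\alpha)$ and $D_2$ by successively adding $(z_1=\alpha)$ to each disjunct of $D_2$, and then invoke Lemma~\ref{lem-R(lin)-case-analysis} (resp.\ Lemma~\ref{lem-R0(lin)-case-anl}) to combine these with $D_1$. Your treatment is in fact more explicit than the paper's, particularly in tracking that every intermediate line carries at most two distinct linear forms and hence is an \RZ0-line.
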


\begin{proof}
Denote the elements of $\mathcal A$ by $\alpha_1,\ldots,\alpha_k$.
In case $z_1=\alpha_i$, for some $i\in[k]$ then we can
add $z_1=\alpha_i$ to every equation in
$\BigOr_{\beta\in\mathcal B}\left(z_2=\beta\right)$ to get
$\BigOr_{\beta\in\mathcal B}(z_1+z_2=\alpha_i+\beta)$.
Therefore, there exist $k$\, R(lin) proofs,
each with \ps0 (in $|D_1|$ and $|D_2|$),
of
\begin{eqnarray*}
\BigOr_{\beta\in\mathcal B}\left(z_1+z_2=\alpha_1+\beta\right),\;
\BigOr_{\beta\in\mathcal B}\left(z_1+z_2=\alpha_2+\beta\right),
\; & \ldots & \;,
\BigOr_{\beta\in\mathcal B}\left(z_1+z_2=\alpha_k+\beta\right)
\end{eqnarray*}
from $z_1=\alpha_1$, $z_1=\alpha_2$ \CommaDots $z_1=\alpha_k$, respectively.

Thus, by Lemma \ref{lem-R(lin)-case-analysis}, we can derive
\begin{equation}\label{eq-conclude-combine-lines}
		\BigOr\limits_
			{\alpha\in\mathcal A,
				\beta\in\mathcal B}
				\left(
					z_1+z_2
				  =\alpha+\beta
				\right)
\,
\end{equation}
from $D_1$ and $D_2$ in a polynomial-size (in $|D_1|$ and $|D_2|$) R(lin)-proof.
This concludes the first part of the lemma.
\HalfSpace

Assume that $\vec a$ and $\vec b$ consist of constant coefficients only.
Then by inspecting the R(lin)-proof of (\ref{eq-conclude-combine-lines})
from $D_1$ and $D_2$ demonstrated above
(and by using Lemma \ref{lem-R0(lin)-case-anl} instead of
Lemma \ref{lem-R(lin)-case-analysis}),
one can verify that this proof is in fact carried inside \RZ0.
\end{proof}

An immediate corollary of Lemma \ref{lem-basic-cnt-combine-2-lines-all-possible}
is the  efficient formalization in R(lin) of the following
obvious counting argument:
If a linear form equals some value in the interval (of integer numbers)
$[a_0,a_1]$ and another linear form equals some value in
$[b_0,b_1]$ (for some $a_0\le a_1$ and $b_0\le b_1$),
then their addition equals some value in $[a_0+b_0,a_1+b_1]$.
More formally:
\begin{corollary}\label{lem-basic-count-combine-two-lines}
Let $z_1$ abbreviate $\vec a \cd \vec x$ and $z_2$ abbreviate
 $\vec b \cd \vec x$.
Let $D_1$ be
$(z_1 =a_0) \Or (z_1 =a_0+1) \ldots \Or (z_1=a_1)$,
and let $D_2$ be
$\left(z_2=b_0\right) \Or \left(z_2=b_0+1\right)
	\ldots \Or \left(z_2=b_1\right)$.
Then there is a polynomial-size (in the size of $D_1,D_2$) R(lin) proof from
$D_1,D_2$ of
\begin{equation}\label{eq-basic-cnt-combine}
	\left(
			z_1+z_2
		  =a_0+b_0
	\right)
		\Or
	\left(
			z_1+z_2
			=a_0+b_0+1
	\right)
			\Or \ldots \Or
	\left(
			z_1+z_2
		= a_1+b_1
	\right)
\,.
\end{equation}
Moreover, if $\vec a$ and $\vec b$ consist of \emph{constant} integers
(which means that $D_1,D_2$ are \RZ0-lines),
then there is a polynomial-size (in the size of $D_1,D_2$) \RZ0 proofs
of (\ref{eq-basic-cnt-combine}) from $D_1,D_2$.
\end{corollary}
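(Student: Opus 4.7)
The plan is to invoke Lemma \ref{lem-basic-cnt-combine-2-lines-all-possible} directly, with $\mathcal{A} := \{a_0, a_0+1, \ldots, a_1\}$ and $\mathcal{B} := \{b_0, b_0+1, \ldots, b_1\}$. Since $D_1 = \BigOr_{\alpha \in \mathcal{A}}(z_1 = \alpha)$ and $D_2 = \BigOr_{\beta \in \mathcal{B}}(z_2 = \beta)$ by construction, the lemma produces, in polynomial size, an R(lin) derivation from $D_1, D_2$ of
\[
\BigOr_{\alpha \in \mathcal{A},\, \beta \in \mathcal{B}}\left(z_1 + z_2 = \alpha + \beta\right).
\]

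The key observation is that because $\mathcal{A}$ and $\mathcal{B}$ each consist of \emph{consecutive} integers, the multiset $\{\alpha + \beta : \alpha \in \mathcal{A}, \beta \in \mathcal{B}\}$ — after collapsing repeated values — is exactly the interval of integers $\{a_0+b_0, a_0+b_0+1, \ldots, a_1+b_1\}$ (with each intermediate value typically realized by several pairs). By the convention stated in Subsection \ref{sec-clauses-of-linear-equations} that duplicate linear equations are discarded from a disjunction, the disjunction displayed above is literally identical, as an R(lin)-object, to the disjunction (\ref{eq-basic-cnt-combine}). So no further inference step is required.

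For the ``moreover'' part concerning \RZ0, I would appeal instead to the second assertion of Lemma \ref{lem-basic-cnt-combine-2-lines-all-possible}: when $\vec{a}$ and $\vec{b}$ have constant integer coefficients, that lemma already supplies an \RZ0 derivation of the full disjunction over $\mathcal{A} \times \mathcal{B}$, and the same identification via duplicate-removal gives (\ref{eq-basic-cnt-combine}). One only needs to verify that the target disjunction (\ref{eq-basic-cnt-combine}) is itself an \RZ0-line, but this is immediate from Definition \ref{def-RZ0-clause}: all equations share the same linear form $z_1 + z_2$ (which has constant coefficients, being the sum of two linear forms with constant coefficients) and differ only in their free-terms.

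There is essentially no obstacle here: the corollary is a straightforward specialization of the preceding lemma, with the only substantive content being the elementary arithmetic fact that the sumset of two integer intervals is again an integer interval. No additional R(lin) machinery beyond what Lemma \ref{lem-basic-cnt-combine-2-lines-all-possible} already delivers is needed.
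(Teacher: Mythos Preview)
Your proposal is correct and matches the paper's approach exactly: the paper presents this as an immediate corollary of Lemma~\ref{lem-basic-cnt-combine-2-lines-all-possible} without further proof, and your argument---applying that lemma with $\mathcal{A}=\{a_0,\ldots,a_1\}$, $\mathcal{B}=\{b_0,\ldots,b_1\}$, then using the duplicate-discarding convention together with the fact that the sumset of two integer intervals is an interval---is precisely the intended justification.
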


\begin{lemma}\label{lem-basic-count-all-possibilities}
Let $\vec a \cd\vec x$ be a linear form with $n$ variables, and let
$\mathcal A:=\set{ \vec a \cd \vec x \such \vec x\in\zo^n }$ be the set
of all possible values of $\vec a\cd\vec x$ over Boolean assignments to $\vec x$.
Then there is a polynomial-size,
in the size of the linear form $\vec a\cd\vec x$,\footnotemark ~R(lin) proof of
\footnotetext{Recall that the size of $\vec a \cd \vec x$ is
$\sum_{i=1}^n |a_i|$, that is, the size of the unary representation of $\vec a$.}
\begin{equation}
\label{eq-basic-all-possiblities}
\BigOr\limits_{
				\alpha \in \mathcal A
				 }
				 {
				  (\vec a \cd \vec x = \alpha)
				 }
	\,.
\end{equation}
Moreover, if the coefficients in $\vec a$ are constants, then there is a polynomial-size
(in the size of $\vec a\cd\vec x$) \RZ0 proof of (\ref{eq-basic-all-possiblities}).
\end{lemma}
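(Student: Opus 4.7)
I would prove the lemma by induction on the number of summands, building the target disjunction up one coefficient at a time. Set $z_k := a_1 x_1 + \dots + a_k x_k$ and $\mathcal{A}_k := \set{z_k(\vec x) \such \vec x \in \zo^n}$, and abbreviate $D_k := \BigOr_{\alpha \in \mathcal{A}_k}(z_k = \alpha)$. After discarding coordinates with $a_i = 0$, the lemma is the case $k = n$, and I claim $D_k$ has an R(lin) derivation of size polynomial in the size of $\vec a \cd \vec x$ for every $k$.

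\textbf{Single-variable disjunctions.} The engine of the induction is the sub-claim that for each $i$ one can derive $(a_i x_i = 0) \Or (a_i x_i = a_i)$ from the Boolean axiom $(x_i = 0) \Or (x_i = 1)$ inside R(lin) in size $O(|a_i|)$. For this I would apply Lemma~\ref{lem-R(lin)-case-analysis} with $z = x_i$ and the two distinct values $0, 1$: under the hypothesis $x_i = 0$, iterate the addition form of the resolution rule on $(x_i = 0)$ with itself to produce $(2 x_i = 0), (3 x_i = 0), \dots, (a_i x_i = 0)$ when $a_i > 0$ (for $a_i < 0$, first derive $(0 = 0)$ by subtracting $(x_i = 0)$ from itself and then iterate downward); symmetrically, starting from $(x_i = 1)$ derive $(a_i x_i = a_i)$. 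Case analysis then combines these into the desired two-equation line. This line is exactly the base case $D_1$ of the main induction.

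\textbf{Inductive step, \RZ0 clause, and size analysis.} For the inductive step, apply Lemma~\ref{lem-basic-cnt-combine-2-lines-all-possible} to $D_k$ and the single-variable line $(a_{k+1} x_{k+1} = 0) \Or (a_{k+1} x_{k+1} = a_{k+1})$ with $\mathcal{A} := \mathcal{A}_k$ and $\mathcal{B} := \set{0, a_{k+1}}$; the output is the disjunction over $\alpha + \beta$ with $\alpha \in \mathcal{A}_k$ and $\beta \in \mathcal{B}$, and since $\set{\alpha + \beta \such \alpha \in \mathcal{A}_k,\, \beta \in \set{0, a_{k+1}}} = \mathcal{A}_{k+1}$ as sets and duplicate equations are discarded by the convention of Subsection~\ref{sec-clauses-of-linear-equations}, this collapses to $D_{k+1}$. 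For the \RZ0 moreover-clause, when every $a_i$ is constant each $D_k$ and each single-variable line is a disjunction of linear equations with a common constant-coefficient linear form differing only in free terms, hence an \RZ0-line; replacing Lemma~\ref{lem-R(lin)-case-analysis} by Lemma~\ref{lem-R0(lin)-case-anl} in the stage-one argument and using the \RZ0 clause of Lemma~\ref{lem-basic-cnt-combine-2-lines-all-possible} in the inductive step keeps the entire construction inside \RZ0. The only point that could potentially go wrong is the size bound, since a priori $|\mathcal{A}_k|$ might blow up multiplicatively through repeated combinations; but $\mathcal{A}_k \subseteq \set{-\sum_i |a_i|, \dots, \sum_i |a_i|}$ forces $|\mathcal{A}_k| \le 2\sum_i |a_i| + 1$ at every stage, and together with the polynomial overhead of the invoked lemmas this keeps the total proof size polynomial in the size of $\vec a \cd \vec x$.
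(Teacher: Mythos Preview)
Your proposal is correct and follows essentially the same route as the paper: derive each single-variable disjunction $(a_i x_i=0)\Or(a_i x_i=a_i)$ from the Boolean axiom, then fold them together one at a time using Lemma~\ref{lem-basic-cnt-combine-2-lines-all-possible}. The only cosmetic difference is that the paper obtains $(a_i x_i=0)\Or(a_i x_i=a_i)$ by repeatedly applying the resolution rule directly to the disjunction $(x_i=0)\Or(x_i=1)$ (adding or subtracting one equation of the disjunction to itself while the other disjunct rides along), whereas you factor this through the case-analysis Lemma~\ref{lem-R(lin)-case-analysis}; both are equally valid and of the same size order, and your explicit bound $|\mathcal A_k|\le 2\sum_i|a_i|+1$ makes the size accounting a bit more transparent than the paper's.
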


\begin{proof}
Without loss of generality, assume that all the
coefficients in $\vec a$ are nonzero.
Consider the Boolean axiom $(x_{1}=0)\Or(x_{1}=1)$ and the
(first) coefficient $a_1$ from $\vec a$.
Assume that $a_1\ge 1$.
Add $(x_1=0)$ to itself
$a_1$ times, and arrive at $(a_1 x_{1}=0)\Or(x_{1}=1)$.
Then, in the resulted line,
add $(x_{1}=1)$ to itself $a_1$ times,
until the following is reached:
\[
(a_1 x_1 =0)\Or (a_1 x_1 = a_1)\,.
\]

Similarly, in case $a_1\le -1$
we can subtract ($\abs{a_1}+1$ many times) $(x_1=0)$ from itself in $(x_1=0)\Or(x_1=1)$,
and then subtract ($\abs{a_1}+1$ many times) $(x_1=1)$ from itself in the resulted line.

In the same manner, we can derive the disjunctions:
$(a_2 x_2=0)\Or(a_2 x_2=a_2),\ldots,(a_n x_n=0)\Or(a_n x_n=a_n)$.

Consider $(a_1 x_1 =0)\Or (a_1 x_1 = a_1)$ and $(a_2 x_2=0)\Or(a_2 x_2=a_2)$.
From these two lines,
by Lemma \ref{lem-basic-cnt-combine-2-lines-all-possible},
there is a polynomial-size in $|a_1|+|a_2|$ derivation of:
\begin{equation}\label{eq-all-possibilities-a1-AND-a2}
(a_1 x_1 + a_2 x_2 =0)
\Or(a_1 x_1 +a_2 x_2 =a_1)
\Or (a_1 x_1 + a_2 x_2 =a_2)
\Or (a_1 x_1 +a_2 x_2 = a_1 + a_2)\,.
\end{equation}
In a similar fashion, now consider $(a_3 x_3=0)\Or(a_3 x_3=a_3)$
and apply again Lemma \ref{lem-basic-cnt-combine-2-lines-all-possible},
to obtain
\begin{equation}\label{eq-all-possibilities-a1-a2-a3}
\BigOr\limits_{
			\alpha\in \mathcal A'
			}
		(a_1 x_1 + a_2 x_2 + a_3 x_3 = \alpha)
\,,
\end{equation}
where $\mathcal A'$ are all possible values to $a_1 x_1 + a_2 x_2 + a_3 x_3$ over
Boolean assignments to $x_1,x_2,x_3$.
The derivation of (\ref{eq-all-possibilities-a1-a2-a3}) is of size polynomial in
$|a_1|+|a_2|+|a_3|$.

Continue to consider, successively, all other lines
$(a_4 x_4=0)\Or(a_4 x_4=a_4),\ldots,(a_n x_n=0)\Or(a_n x_n=a_n)$,
and apply the same reasoning.
Each step uses a derivation of size at most polynomial in $\sum_{i=1}^n{|a_i|}$.
And so overall we reach the desired line (\ref{eq-basic-all-possiblities}),
with a derivation of size polynomial in the size of $\vec a\cd\vec x$.
This concludes the first part of the lemma.
\QuadSpace

Assume that $\vec a$ consists of constant coefficients only.
Then by inspecting the R(lin)-proof demonstrated above
(and by using the second part of Lemma \ref{lem-basic-cnt-combine-2-lines-all-possible}),
one can see that this proof is in fact carried inside \RZ0.
\end{proof}

\begin{lemma}\label{lem-counting-in-R0(lin)}
There is a polynomial-size (in $n$) \RZ0 proof from
\begin{equation}\label{eqR001}
(x_{1} = 1)\Or \cdots \Or(x_{n} = 1)
\end{equation}
of
\begin{equation}\label{eq-basic-cnt-for-PHP}
(x_1+\ldots+x_n=1)\Or\cdots\Or(x_1+\ldots+x_n=n)\,.
\end{equation}
\end{lemma}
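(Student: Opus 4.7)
The plan is in two steps: first derive $E := (x_1+\ldots+x_n=1)\Or\cdots\Or(x_1+\ldots+x_n=n)$ from each single literal $(x_i=1)$, and then combine those $n$ derivations into a derivation of $E$ from the premise clause $A := (x_1=1)\Or\cdots\Or(x_n=1)$ by iterated case analysis on $A$.

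For the first step, fix $i\in[n]$ and write $S_i$ for the linear form $x_1+\ldots+x_{i-1}+x_{i+1}+\ldots+x_n$, which has constant integer coefficients. Applying Lemma~\ref{lem-basic-count-all-possibilities} yields a polynomial-size \RZ0 derivation of $(S_i=0)\Or(S_i=1)\Or\cdots\Or(S_i=n-1)$. I would then repeatedly use the Resolution rule to add the hypothesis $(x_i=1)$ to one of the $S_i$-equations at a time, so that after $k$ steps the running line is
\[
(S_i=k)\Or\cdots\Or(S_i=n-1)\Or(x_1+\ldots+x_n=1)\Or\cdots\Or(x_1+\ldots+x_n=k),
\]
reaching $E$ after exactly $n$ steps. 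Each intermediate line consists of at most two groups of linear equations with constant coefficients differing only in their free terms, hence is an \RZ0-line, and the whole derivation has polynomial size. Call this derivation $\pi_i$.

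To combine the $\pi_i$'s, set $W_i := E\Or(x_{i+1}=1)\Or\cdots\Or(x_n=1)$ for $i\in\{0,1,\ldots,n\}$, so that $W_0=E\Or A$ is an \RZ0-line with two sub-disjunctions (derivable from $A$ by weakening) and $W_n=E$. I would show by induction on $i$ that $W_i$ is derivable from $W_{i-1}$ with a polynomial-size \RZ0 proof by weakening $\pi_i$ line-by-line with the disjunction $W_i$: its premise $(x_i=1)$ becomes $(x_i=1)\Or W_i=W_{i-1}$, its conclusion $E$ becomes $E\Or W_i=W_i$ (absorbing the duplicate copy of $E$), each Boolean axiom used inside $\pi_i$ is obtained by the Weakening rule from the corresponding Boolean axiom of R(lin), and every intermediate line of the weakened $\pi_i$ acquires at most two extra sub-disjunctions (the $E$-part and the clause-part of $W_i$), hence remains an \RZ0-line. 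Chaining these stages for $i=1,2,\ldots,n$ produces $W_n=E$ from $A$.

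The main obstacle will be verifying that every proof-line produced throughout this argument stays inside the \RZ0 fragment, i.e., has only a constant number of sub-disjunctions. This is why the base derivation $\pi_i$ is kept with just two varying-free-term groups, and why the combining template recycles the single conclusion $E$ at each stage rather than accumulating fresh copies of it; with these two design choices the sub-disjunction count at every line is uniformly bounded by an absolute constant (at most four).
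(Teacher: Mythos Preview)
Your proposal is correct and follows essentially the same route as the paper: derive $E$ from each $(x_i=1)$ via Lemma~\ref{lem-basic-count-all-possibilities} (then add $(x_i=1)$ into every equation), and combine these $n$ derivations by case analysis. The only difference is cosmetic---the paper invokes Lemma~\ref{lem-R0(lin)-case-anl} for the combining step, whereas you unfold that lemma by hand via the sequential $W_i$ construction and explicitly track the sub-disjunction count.
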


\begin{proof}
We show that for every $i\in[n]$, there is a polynomial-size (in $n$) \RZ0
proof from $(x_{i}=1)$ of
$(x_1+\ldots+x_{n}=1)\OrDots(x_1+\ldots+x_{n}=n)$.
This concludes the proof since,
by Lemma~\ref{lem-R0(lin)-case-anl},
we then can derive
from (\ref{eqR001}) (with a \ps0 (in $n$) \RZ0 proof)
the disjunction (\ref{eqR001}) in which
each $(x_{i}=1)$ (for all $i\in[n]$) is replace by
$(x_1+\ldots+x_{n}=1)\OrDots(x_1+\ldots+x_{n}=n)$, which is
precisely the disjunction (\ref{eq-basic-cnt-for-PHP})
(note that (\ref{eq-basic-cnt-for-PHP}) is an \RZ0-line).

\begin{claim}\label{cla-simplified-counting-prf}
For every $i\in[n]$, there is a a polynomial-size (in $n$) \RZ0
proof from $(x_{i}=1)$ of
$(x_1+\ldots+x_{n}=1)\OrDots(x_1+\ldots+x_{n}=n)$.
\end{claim}

\begin{proofclaim}
By Lemma \ref{lem-basic-count-all-possibilities},
for every $i\in[n]$ there is a polynomial-size
(in $n$) \RZ0 proof (using only the Boolean axioms) of
\begin{equation}\label{eq-bsc-cnt-simplify-claim}
	(x_1+\ldots+x_{i-1}+x_{i+1}+\ldots+x_n=0)
		\OrDots
	(x_1+\ldots+x_{i-1}+x_{i+1}+\ldots+x_n=n-1)\,.
\end{equation}
Now add successively $(x_i=1)$ to every equation in
(\ref{eq-bsc-cnt-simplify-claim}) (note that this can be done
in \RZ0).
We obtain precisely
$(x_1+\ldots+x_{n}=1)\OrDots(x_1+\ldots+x_{n}=n)$.
\end{proofclaim}
\end{proof}

\begin{lemma}\label{lem-holes-counting-in-R0(lin)}
There is a polynomial-size (in $n$) \RZ0 proof of $(x_{1}+\ldots+ x_n = 0)\Or
(x_{1}+\ldots+ x_n = 1)$ from the collection of disjunctions consisting of
$(x_i=0)\Or(x_j=0)$, for all $1\le i<j\le n$.
\end{lemma}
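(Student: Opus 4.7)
The plan is to prove the lemma by induction on $n$. Let $S_n$ denote the target disjunction $(x_1+\ldots+x_n=0)\Or(x_1+\ldots+x_n=1)$ and let $K_n$ denote the given collection $\{(x_i=0)\Or(x_j=0) : 1\le i<j\le n\}$. The base case $n=1$ is immediate, since $S_1$ is the Boolean axiom $(x_1=0)\Or(x_1=1)$.

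For the inductive step I would perform case analysis on $x_n$ via Lemma \ref{lem-R0(lin)-case-anl}, taking $z := x_n$ with the two cases $a_1 := 0$ and $a_2 := 1$ supplied by the Boolean axiom $(x_n=0)\Or(x_n=1)$. In the case $x_n=0$, since $K_{n-1}\se K_n$, the induction hypothesis yields a polynomial-size \RZ0 proof of $S_{n-1}$; two applications of the Resolution rule that add $(x_n=0)$ to each of the two disjuncts of $S_{n-1}$ in turn produce $S_n$. In the case $x_n=1$, for each $i<n$ I resolve (by subtraction) $(x_n=1)$ against the $(x_n=0)$ disjunct in the axiom $(x_i=0)\Or(x_n=0)$, obtaining $(x_i=0)\Or(0=-1)$, and then apply Simplification to extract the unit disjunction $(x_i=0)$. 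Iterated Resolution then sums the $n-1$ unit equations $(x_1=0),(x_2=0),\ldots,(x_{n-1}=0)$ into $(x_1+\ldots+x_{n-1}=0)$; adding $(x_n=1)$ to this produces $(x_1+\ldots+x_n=1)$, which I Weaken to $S_n$.

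Applying Lemma \ref{lem-R0(lin)-case-anl} then combines the two sub-derivations into an \RZ0 proof of $S_n$ from $K_n$. Size-wise, each inductive step contributes only $O(n)$ additional proof-lines of size $O(n)$ on top of the inductive derivation of $S_{n-1}$, so the total proof size grows polynomially in $n$.

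The main obstacle is verifying that every intermediate proof-line respects the syntactic restrictions of Definition \ref{def-RZ0-clause}: all coefficients on variables remain $0$ or $1$, and every intermediate line is either a (translation of a) clause, a single linear equation, or a disjunction of equations sharing a common linear form and differing only in their free terms. In particular $S_n$ itself is manifestly an \RZ0-line, which is the hypothesis of Lemma \ref{lem-R0(lin)-case-anl} on the target of the case analysis, so the lemma genuinely applies.
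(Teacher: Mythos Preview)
Your proof is correct and follows essentially the same approach as the paper: induction on $n$ with a case split on the last variable via Lemma~\ref{lem-R0(lin)-case-anl}, adding the equation in the $0$-case and using the pairwise axioms to collapse everything to a single equation in the $1$-case. The only cosmetic difference is that the paper does not Weaken in the $1$-case but instead lets the case-analysis lemma produce the disjunction of $S_n$ with $(x_1+\ldots+x_n=1)$, which already equals $S_n$.
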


\begin{proof}
We proceed by induction on $n$. The base case for $n=1$ is immediate from the
Boolean axiom $(x_1=0)\Or(x_1=1)$. Assume we already have a polynomial-size
proof of
\begin{equation}\label{eqR011}
(x_{1}+\ldots+ x_n = 0)\Or (x_{1}+\ldots+ x_n = 1).
\end{equation}
If $x_{n+1}=0$ we add $x_{n+1}=0$ to both of the equations in (\ref{eqR011}), and reach:
\begin{equation}\label{eqR012}
(x_{1}+\ldots+ x_{n+1} = 0)\Or (x_{1}+\ldots+ x_{n+1} = 1).
\end{equation}
Otherwise, $x_{n+1}=1$, and so we can cut-off
$(x_{n+1}=0)$ in all the initial disjunctions
$(x_{i}=0)\Or(x_{n+1}=0)$, for all $1\le i\le n$.
We thus obtain $(x_{1}=0),\ldots,(x_{n}=0)$.
Adding together $(x_{1}=0),\ldots,(x_{n}=0)$ and
$(x_{n+1}=1)$ we arrive at
\begin{equation}\label{eq-reason-hole-like-temp}
(x_{1}+\ldots+ x_{n+1} = 1)\,.
\end{equation}
So overall, either (\ref{eqR012}) holds
or (\ref{eq-reason-hole-like-temp}) holds;
and so (using Lemma \ref{lem-R0(lin)-case-anl}) we arrive at the disjunction of
(\ref{eq-reason-hole-like-temp}) and (\ref{eqR012}),
which is precisely (\ref{eqR012}).
\end{proof}

\section{Implicational Completeness of R(lin) and its Subsystems}
\label{sec-impl-complete}

In this section we provide a proof of the implicational completeness
of R(lin) and its subsystems.
We shall need this property in the sequel (see Section \ref{sec-Tseitin}).
The implicational completeness of a proof system
is a stronger property than mere completeness.
Essentially, a system is implicationally complete if whenever
something is \emph{semantically} implied by a set of initial premises,
then it is also \emph{derivable} from the initial premises.
In contrast to this, mere completeness means that any tautology
(or in case of a refutation system, any unsatisfiable set of initial premises)
has a proof in the system (respectively, a refutation in the system).
As a consequence, the proof of implicational completeness
in this section establishes an alternative completeness proof
to that obtained via simulating resolution
(see Proposition \ref{prop-R(lin)-sim-resolution}).
Note that we are not concerned in this section with the size of the proofs,
but only with their existence.

Recall the definition of the semantic implication relation $\models$
from Section \ref{sec-clauses-of-linear-equations}.
 Formally, we say that R(lin) is \emph{implicationally complete}
if for every collection of disjunctions of linear equations
$D_0, D_1\CommaDots D_m$, it holds that
$D_1\CommaDots D_m \models D_0$ implies that there is
an R(lin) proof of $D_0$ from $D_1\CommaDots D_m$.

\begin{theorem}\label{thm-impl-comp}
R(lin) is implicationally complete.
\end{theorem}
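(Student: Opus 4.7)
The plan is to build an R(lin) derivation of $D_0$ by case analysis over all $2^n$ Boolean assignments to $\vec x$, combined by Lemma~\ref{lem-R(lin)-case-analysis}. First I apply Lemma~\ref{lem-basic-count-all-possibilities} to the linear form $z := \sum_{j=1}^n 2^{j-1} x_j$ to derive, from the Boolean axioms alone, the ``enumeration'' disjunction
$$\Phi \;:=\; \bigvee_{\alpha \in \{0,1\}^n}(z = v_\alpha), \qquad v_\alpha := \sum_{j=1}^n 2^{j-1}\alpha_j.$$
By uniqueness of binary representations, over $\{0,1\}^n$ the disjunct $(z=v_\alpha)$ is satisfied only by $\vec x = \alpha$.

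The key sub-claim is that for every $\alpha$ and every $i\in[n]$, the single-equation line $(x_i=\alpha_i)$ is R(lin)-derivable from $(z=v_\alpha)$ together with the Boolean axioms. To prove it I case-split on the Boolean axiom $(x_i=0)\vee(x_i=1)$ via Lemma~\ref{lem-R(lin)-case-analysis}. The ``correct'' branch $x_i=\alpha_i$ is immediate. For the ``wrong'' branch $x_i=1-\alpha_i$, I scale the single-equation line $(x_i=1-\alpha_i)$ by $2^{i-1}$ via repeated self-addition, subtract the result from $(z=v_\alpha)$, and obtain $(z'=r)$ where $z' := \sum_{j\neq i} 2^{j-1} x_j$ and $r := v_\alpha - 2^{i-1}(1-\alpha_i)$. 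A short inspection of binary digits shows $r$ lies outside the value set $V_i := \{\sum_{j\neq i} 2^{j-1}\beta_j : \beta \in \{0,1\}^{n-1}\}$ of $z'$: if $\alpha_i=1$ then $r=v_\alpha$ has its $(i-1)$-st bit equal to $1$; if $\alpha_i=0$ then $r=v_\alpha-2^{i-1}$ is either negative (when $v_\alpha<2^{i-1}$) or, after the borrow, again has its $(i-1)$-st bit equal to $1$. Meanwhile Lemma~\ref{lem-basic-count-all-possibilities} applied to $z'$ yields $\bigvee_{k\in V_i}(z'=k)$; iteratively resolving each disjunct with $(z'=r)$ produces $(0=k-r)$, removed by Simplification since $k\neq r$, leaving the empty disjunction, which is weakened back to $(x_i=\alpha_i)$. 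Combining the two branches gives the claimed derivation.

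With the sub-claim in hand, I derive $D_0$ from each disjunct of $\Phi$ together with $D_1,\ldots,D_m$. Fix $\alpha$; by the sub-claim every $(x_i=\alpha_i)$ is available as a single-equation line. By the semantic hypothesis, either $\alpha$ falsifies some $D_j$ with $j\geq 1$ or $\alpha$ satisfies $D_0$. In the first case, each disjunct $\vec b\cdot\vec x=c$ of $D_j$ satisfies $c\neq \vec b\cdot\vec\alpha$; scaling each $(x_i=\alpha_i)$ by $b_i$ (via self-addition and, for negative $b_i$, self-subtraction through $(0=0)$) and summing produces the single-equation line $(\vec b\cdot\vec x=\vec b\cdot\vec\alpha)$, which when resolved with the corresponding disjunct of $D_j$ yields $(0=c-\vec b\cdot\vec\alpha)$, removed by Simplification. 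Iterating collapses $D_j$ to the empty disjunction, then Weakening delivers $D_0$. In the second case the same scaling-and-summing procedure constructs a true disjunct of $D_0$ directly, and Weakening again delivers $D_0$. Finally, Lemma~\ref{lem-R(lin)-case-analysis} applied to $\Phi$ glues the $2^n$ case derivations into a single R(lin) derivation of $D_0$ from $D_1,\ldots,D_m$.

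The main obstacle is the sub-claim: reading off individual bit values from the single ``fingerprint'' equation $(z=v_\alpha)$ while staying inside R(lin). A naive attempt to scale the two-equation Boolean axiom $(x_i=0)\vee(x_i=1)$ by self-addition fails because it produces spurious ``cross-term'' disjuncts of the form $(2x_i=1)$; this is precisely why I instead case-split on the Boolean axiom and refute the wrong branch via the digit-counting argument. Note that the resulting proof has size exponential in $n$, but this is acceptable because implicational completeness is a purely semantic (existence) statement with no efficiency requirement.
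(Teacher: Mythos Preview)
Your proof is correct but takes a genuinely different route from the paper's. The paper proceeds by a clean induction on the number $n$ of variables: in the induction step it case-splits on the Boolean axiom $(x_{n+1}=0)\Or(x_{n+1}=1)$, substitutes each value $b$ into all of $D_0,D_1,\ldots,D_m$, applies the induction hypothesis to derive $D_0\rst_{x_{n+1}=b}$ from the restricted premises, and then adds $(x_{n+1}=b)$ back into each equation of $D_0\rst_{x_{n+1}=b}$ the appropriate number of times to recover $D_0$. Your approach instead handles all variables at once via the binary ``fingerprint'' form $z=\sum_j 2^{j-1}x_j$, enumerates all $2^n$ assignments through Lemma~\ref{lem-basic-count-all-possibilities}, and uses a bit-extraction argument to recover each $(x_i=\alpha_i)$ from $(z=v_\alpha)$.

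Both arguments are valid, but the paper's is shorter and more elementary---it avoids the digit/borrow analysis entirely---and, more importantly for the paper's purposes, it transfers directly to \RZ0: every line that arises in the inductive argument stays an \RZ0-line whenever $D_0,\ldots,D_m$ are, yielding Corollary~\ref{cor-impl-compl-RZ0}, which the paper actually uses later (Lemma~\ref{lem-Tse-impl-compl}). Your fingerprint form $z$ has coefficients $2^{j-1}$ that grow with $n$, so the lines $(z=v_\alpha)$ are not \RZ0-lines, and your argument as written does not yield the \RZ0 implicational completeness without further work.
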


\begin{proof}
We proceed by induction on $n$,
the number of variables $x_1\CommaDots x_n$ in
$D_0, D_1\CommaDots D_m$.\QuadSpace

\emph{The base case $n=0$}. We need to show that
$D_1\CommaDots D_m \models D_0$ implies that there is
an R(lin) proof of $D_0$ from $D_1\CommaDots D_m$, where all $D_i$'s (for
$0\le i\le m$) have no variables but only constants.
This means that each $D_i$ is a disjunction of equations of the form
$(0 = a_0)$ for some integer $a_0$
(if a linear equation have no variables, then
the left hand side of this equation must be $0$; see Section
\ref{sec-clauses-of-linear-equations}).

There are two cases to consider.
In the first case \emph{$D_0$ is satisfiable}. Since $D_0$ has no variables,
this means precisely that $D_0$ is the equation $(0=0)$.
Thus, $D_0$ can be derived easily from any axiom in R(lin) (for instance,
by subtracting each equation in $(x_1=0)\Or(x_1=1)$ from itself, to reach
$(0=0)\Or(0=0)$, which is equal to $(0=0)$, since we discard duplicate equations
inside disjunctions).

In the second case \emph{$D_0$ is unsatisfiable}. 
Thus, since $D_1\CommaDots D_m \models D_0$,
there is no assignment satisfying all $D_1\CommaDots D_m$.
Hence, there must be at least one unsatisfiable
disjunction $D_i$ in $D_1\CommaDots D_m$ (as a disjunction with no variables
is either tautological or unsatisfiable).
Such an unsatisfiable $D_i$ is a disjunction of zero or more unsatisfiable
equations of the form $(0=a_0)$,
for some  integer $a_0\ne 0$.
We can then use Simplification to cut-off all the unsatisfiable equations
in $D_i$ to reach the empty disjunction.
By the Weakening rule, we can now derive $D_0$ from the
empty disjunction.
\smallskip

\emph{The induction step}. Assume that the theorem holds for disjunctions with
$n$ variables. 
Let the underlying variables of $D_0, D_1\CommaDots D_m$ be
$x_1\CommaDots x_{n+1}$, and assume that
\begin{equation}\label{eq-impl-compl-ind-hyp}
D_1\CommaDots D_m \models D_0\,.
\end{equation}
We write the disjunction $D_0$ as:
\begin{equation}\label{eq-impl-compl-general-line}
\BigOr_{j=1}^t\left(
\sum_{i=1}^n a^{(j)}_i x_i + a^{(j)}_{n+1} x_{n+1} = a^{(j)}_0
\right)\,,
\end{equation}
where the $a_i^{(j)}$'s are integer coefficients.
We need to show that there is an R(lin) proof of $D_0$
from $D_1\CommaDots D_m$.

Let $D$ be a disjunction of linear equations,
let $x_i$ be a variable and let $b\in\zo$.
We shall denote by $D\rst_{x_i=b}$ the
disjunction $D$,
where in every equation in $D$
the variable $x_i$ is substituted by $b$,
and the constant terms in the left hand sides of all resulting
equations (after substituting $b$ for $x_i$) switch sides (and change signs, obviously)
to the right
hand sides of the equations (we have to switch sides of constant terms, as
by definition linear equations in R(lin) proofs have all constant terms appearing only
on the right hand sides of equations).

We now reason (slightly) informally inside R(lin)
(as illustrated in Section \ref{sec-case-anl}).
Fix some $b\in\zo$, and assume that $x_{n+1}=b$.
Then, from $D_1\CommaDots D_m$ we can derive (inside R(lin)):
\begin{equation}\label{eq-impl-compl-subs}
D_1\rst_{x_{n+1}=b} \CommaDots D_m\rst_{x_{n+1}=b}\,.
\end{equation}
The only variables occurring in (\ref{eq-impl-compl-subs}) are
$x_1\CommaDots x_n$.
From assumption (\ref{eq-impl-compl-ind-hyp}) we clearly have
$D_1\rst_{x_{n+1}=b} \CommaDots D_m\rst_{x_{n+1}=b}
\,\models D_0\rst_{x_{n+1}=b}$.
And so by the induction hypothesis there is an R(lin) derivation of
$D_0\rst_{x_{n+1}=b}$
from $D_1\rst_{x_{n+1}=b} \CommaDots D_m\rst_{x_{n+1}=b}$.
So overall, assuming that $x_{n+1}=b$,
there is an R(lin) derivation of
$D_0\rst_{x_{n+1}=b}$ from $D_1\CommaDots D_m$.

We now consider the two possible cases: $x_{n+1}=0$ and $x_{n+1}=1$.

\emph{In case $x_{n+1}=0$}, by the above discussion,
we can derive $D_0\rst_{x_{n+1}=0}$ from $D_1\CommaDots D_m$.
For every $j\in[t]$, add successively ($a^{(j)}_{n+1}$ times) the
equation $x_{n+1}=0$ to the $j$th equation in $D_0\rst_{x_{n+1}=0}$
(see (\ref{eq-impl-compl-general-line})).
We thus obtain precisely $D_0$.

\emph{In case $x_{n+1}=1$}, again, by the above discussion,
we can derive $D_0\rst_{x_{n+1}=1}$ from $D_1\CommaDots D_m$.
For every $j\in[t]$, add successively ($a^{(j)}_{n+1}$ times) the
equation $x_{n+1}=1$ to the $j$th equation in $D_0\rst_{x_{n+1}=1}$
(recall that we switch sides of constant terms in every linear equation
after the substitution of $x_{n+1}$ by $1$ is performed in $D_0\rst_{x_{n+1}=1}$).
Again, we obtain precisely $D_0$.
\end{proof}

By inspecting the proof of Theorem \ref{thm-impl-comp},
it is possible to verify that if all the disjunctions $D_0,\CommaDots D_m$
are \RZ0-lines (see Definition
\ref{def-RZ0-clause}), then the proof of $D_0$ in R(lin) uses only
\RZ0-lines as well.
Therefore, we have:

\begin{corollary}\label{cor-impl-compl-RZ0}
\RZ0 is implicationally complete.
\end{corollary}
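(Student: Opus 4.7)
The plan is to revisit the induction of Theorem~\ref{thm-impl-comp} and verify that, when all of $D_0,D_1,\ldots,D_m$ are \RZ0-lines, every proof-line produced by the construction is also an \RZ0-line (i.e., admits a partition into a constant number of sub-disjunctions of the form in \eqref{eq-R0-clause}). The induction is again on the number $n$ of variables, and everywhere the original argument appeals to Lemma~\ref{lem-R(lin)-case-analysis} I would instead use Lemma~\ref{lem-R0(lin)-case-anl}, whose extra hypothesis (that the conclusion is an \RZ0-line) is satisfied by our assumption on $D_0$.

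The base case $n=0$ is immediate: each disjunction consists only of constant equations $(0=a_0)$, and the derivations use only Simplification (which preserves being a sub-disjunction of the \RZ0-line $D_0$) and Weakening (which builds up $D_0$ equation by equation).

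The induction step decomposes into three sub-derivations: (a) produce $D_i\rst_{x_{n+1}=b}$ from $D_i$ together with the hypothesis $(x_{n+1}=b)$; (b) apply the induction hypothesis to the substituted premises to obtain $D_0\rst_{x_{n+1}=b}$; (c) re-introduce $x_{n+1}$ by adding $(x_{n+1}=b)$ a constant number of times to each equation of $D_0\rst_{x_{n+1}=b}$, recovering $D_0$. The two cases $b\in\zo$ are then combined via the Boolean axiom through Lemma~\ref{lem-R0(lin)-case-anl}. For step (b) I would check that substituting a Boolean value for $x_{n+1}$ in an \RZ0-line yields another \RZ0-line: in each sub-disjunction with common linear form only the free terms shift by a constant, while in the translated-clause block the equations involving $x_{n+1}$ either become tautological or are dropped. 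Step (c) affects all equations in a given sub-disjunction uniformly, so it too preserves the partition.

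The main obstacle is step (a). The resolution rule acts on one equation at a time, so reducing the $x_{n+1}$-coefficient inside a single equation of a uniform sub-disjunction $\BigOr_{i\in I_t}(\vec a^{(t)}\cd\vec x=\ell^{(t)}_i)$ momentarily destroys the common-linear-form property. My plan is to process each such sub-disjunction in \emph{passes}: one pass subtracts (or adds) one copy of $(x_{n+1}=b)$ to every equation of the sub-disjunction in turn. Mid-pass the sub-disjunction splits into exactly two groups --- the already-updated equations (sharing one new linear form) and the not-yet-updated equations (sharing the original one) --- so the overall partition count temporarily grows by at most one. Since \RZ0-coefficients are bounded by a constant, only constantly many passes are needed to eliminate $x_{n+1}$ entirely from each sub-disjunction, and the total number of partition classes never exceeds $k+1$ throughout the procedure. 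Hence every intermediate line remains an \RZ0-line, completing the verification.
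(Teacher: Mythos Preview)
Your proposal is correct and follows exactly the route the paper indicates: the paper's own proof of the corollary is the single sentence that one verifies, by inspecting the construction in Theorem~\ref{thm-impl-comp}, that when the premises $D_1,\ldots,D_m$ and the conclusion $D_0$ are all \RZ0-lines, every intermediate line produced is an \RZ0-line as well. You have simply carried out that inspection with care --- in particular, your pass-by-pass treatment of step~(a), keeping the partition count bounded by $k+1$, is the right way to make the ``inspection'' precise.
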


\begin{remark}
Corollary \ref{cor-impl-compl-RZ0} states that any \RZ0-line that
is semantically implied by a set of initial \RZ0-lines, is in fact
derivable in \RZ0 from the initial \RZ0-lines.
On the other hand, it is possible that a certain proof of the same
\RZ0-line inside R(lin) will be significantly shorter
than the proof inside \RZ0. 
Indeed, we shall see in Section \ref{sec-lower-bounds} that for certain
CNF formulas R(lin) has a super-polynomial speed-up over \RZ0.
\end{remark}

\section{Short Proofs for Hard Tautologies}\label{sec-hard}

In this section we show that \RZ0 is already enough to admit small
proofs for ``hard" counting principles like the pigeonhole principle and
the Tseitin graph formulas for constant degree graphs.
On the other hand, as we shall see in Section \ref{sec-lower-bounds},
\RZ0 inherits the same weakness that cutting planes proofs have
with respect to the clique-coloring tautologies.
 Nevertheless, we can efficiently prove the
clique-coloring principle in (the stronger system) R(lin),
but not by using R(lin) ``ability to count",
rather by using its (straightforward) ability to simulate Res(2) proofs (that is,
resolution proofs extended to operate with 2-DNF formulas, instead of clauses).

\subsection{The Pigeonhole Principle Tautologies in R$^{\mathbf 0}$(lin)}\label{sec-php}

This subsection illustrates \ps0 \RZ0 proofs of the pigeonhole principle.
This will allow us to establish \ps0 \ml0 proofs operating with
depth-$3$ \ml0 formulas of the pigeonhole principle (in Section \ref{sec-multilinear}).

The \emph{$m$ to $n$ pigeonhole principle} states that $m$ pigeons cannot be mapped
one-to-one into $n<m$ holes. The negation of the pigeonhole principle,
denoted $\neg$PHP$_n^m$, is formulated as an unsatisfiable CNF formula as follows
(where clauses are translated to disjunctions of linear equations):

\begin{definition}\label{def-PHP}
The $\neg$PHP$_n^m$ is the following set of clauses:
\begin{enumerate}
\item Pigeons axioms:\;\; $(x_{i,1} = 1)\Or \cdots \Or(x_{i,n} = 1)$,\; for all $1\le i \le
m$;
\item Holes axioms: \;\; $(x_{i,k}=0)\Or(x_{j,k}=0)$,\;\; for all $1\le i < j\le m$
and for all $1\le k\le
n$.
\end{enumerate}
The intended meaning of each propositional variable $x_{i,j}$
is that the $i$th pigeon is mapped to the $j$th hole.
\end{definition}

We now describe a polynomial-size in $n$ refutation of $\neg$PHP$_n^m$ inside
\RZ0.
For this purpose it is sufficient to prove a polynomial-size refutation of the
pigeonhole principle when the number of pigeons $m$ equals $n+1$ (because the
set of clauses pertaining to $\neg$PHP$_n^{n+1}$ is already contained in the set of
clauses pertaining to $\neg$PHP$_n^m$, for any $m>n$). Thus, we fix $m=n+1$.
In this subsection we shall say a proof in \RZ0
is of \emph{polynomial-size}, always intending
\emph{polynomial-size in $n$} (unless otherwise stated).

By Lemma \ref{lem-counting-in-R0(lin)}, for all $i\in[m]$ we can derive from the
Pigeon axiom (for the $i$th pigeon):
\begin{equation}\label{eq-PHP-first-sum}
(x_{i,1}+\ldots+x_{i,n}=1)\Or\cdots\Or(x_{i,1}+\ldots+x_{i,n}=n)
\end{equation}
with a polynomial-size \RZ0 proof.

By Lemma \ref{lem-holes-counting-in-R0(lin)}, from the Hole axioms we can
derive, with a polynomial-size \RZ0 proof
\begin{equation}\label{eq-PHP-second-sum}
(x_{1,j}+\ldots+ x_{m,j} = 0)\Or (x_{1,j}+\ldots+ x_{m,j} = 1),
\end{equation}
for all $j\in[n]$.

Let $S$ abbreviate the sum of all formal variables $x_{i,j}$. In other words,
$$S:=\sum\limits_{i\in[m],j\in[n]}{x_{i,j}}\,.$$

\begin{lemma}\label{lem-PHP-first-sum}
There is a polynomial-size \RZ0 proof from (\ref{eq-PHP-first-sum}) (for all $i\in[m]$)
 of
$$(S=m)\Or(S=m+1)\cdots\Or(S=m\cd n).$$
\end{lemma}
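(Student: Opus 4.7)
The plan is to iteratively combine the $m$ disjunctions (\ref{eq-PHP-first-sum}), one pigeon at a time, using Corollary \ref{lem-basic-count-combine-two-lines}. Each such combination adds the linear form for a new pigeon to the accumulated linear form, yielding a disjunction of consecutive integer values of the free terms. After $m$ steps, we arrive precisely at the desired disjunction $(S=m)\Or(S=m+1)\OrDots(S=m\cd n)$.

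More precisely, I will prove by induction on $k\in[m]$ that there is an \RZ0-derivation, of size polynomial in $n$, from the disjunctions (\ref{eq-PHP-first-sum}) for pigeons $1\CommaDots k$ of
\begin{equation*}
	\left(\sum_{i=1}^{k}\sum_{j=1}^{n} x_{i,j}=k\right)
	\OrDots
	\left(\sum_{i=1}^{k}\sum_{j=1}^{n} x_{i,j}=k\cd n\right)\,.
\end{equation*}
The base case $k=1$ is just (\ref{eq-PHP-first-sum}) for $i=1$. For the induction step, the two lines to be combined are the inductive line (values running consecutively over $[k,kn]$) and (\ref{eq-PHP-first-sum}) for pigeon $k+1$ (values running consecutively over $[1,n]$). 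Both are disjunctions of linear equations with constant coefficients (all coefficients equal to $1$) that differ only in their free terms, i.e., both are \RZ0-lines. Applying the second part of Corollary \ref{lem-basic-count-combine-two-lines} with $z_1=\sum_{i\le k,j\le n} x_{i,j}$ and $z_2=\sum_{j\le n} x_{k+1,j}$ yields, with polynomial-size \RZ0 proof, the disjunction of $(z_1+z_2=\alpha)$ for $\alpha$ ranging consecutively over $[k+1,(k+1)n]$, which is precisely the inductive line for $k+1$.

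Setting $k=m$ gives the statement of the lemma. Since each of the $m$ inductive steps contributes only a polynomial-in-$n$ increase in proof size (and $m=n+1$), the overall derivation is of polynomial size in $n$. The main thing to verify is that every intermediate line is indeed an \RZ0-line; but as already noted, each accumulated line consists of a single sub-disjunction of linear equations with constant coefficients differing only in their free terms, which fits Definition \ref{def-RZ0-clause}. There is no real obstacle here beyond bookkeeping: the proof is essentially a clean inductive pipeline built on Corollary \ref{lem-basic-count-combine-two-lines}.
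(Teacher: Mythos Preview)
Your proposal is correct and follows essentially the same approach as the paper: both proceed by induction on the number of pigeons combined so far, at each step invoking Corollary~\ref{lem-basic-count-combine-two-lines} to add the next pigeon's linear form to the accumulated sum, yielding consecutive free-term ranges $[k,kn]$ until $k=m$.
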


\begin{proof}
For every $i\in[m]$ fix the abbreviation $z_i:=x_{i,1}\PlusDots x_{i,n}$. Thus,
by (\ref{eq-PHP-first-sum}) we have $(z_{i}=1)\Or\cdots\Or(z_{i}=n)$.

Consider $(z_{1}=1)\Or\cdots\Or(z_{1}=n)$ and $(z_{2}=1)\Or\cdots\Or(z_{2}=n)$.
By Corollary \ref{lem-basic-count-combine-two-lines},
we can derive from these two lines
\begin{equation}\label{eq-PHP-first-sum1}
(z_1+z_2=2) \Or(z_1+z_2=3)\OrDots (z_1+z_2=2n)
\end{equation}
with a polynomial-size \RZ0 proof.

Now, consider $(z_3=1)\OrDots(z_3=n)$ and (\ref{eq-PHP-first-sum1}).
By Corollary \ref{lem-basic-count-combine-two-lines} again,
from these two lines we can derive with a \ps0 \RZ0 proof:
\begin{equation}
(z_1+z_2+z_3=3)\Or(z_1+z_2+z_3=4) \OrDots (z_1+z_2+z_3=3n)
\,.
\end{equation}

Continuing in the same way, we eventually arrive at
\[
	(z_1\PlusDots z_m=m) \Or(z_1\PlusDots z_m=m+1)\OrDots (z_1\PlusDots z_m=m\cd n)\,,
\]
which concludes the proof,
since $S$ equals $z_1\PlusDots z_m$.
\end{proof}

\begin{lemma}\label{lem-PHP-second-sum}
There is a polynomial-size \RZ0 proof from (\ref{eq-PHP-second-sum}) of
$$(S=0)\Or\cdots\Or(S=n).$$
\end{lemma}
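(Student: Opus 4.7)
The plan is to proceed in direct analogy with the proof of Lemma~\ref{lem-PHP-first-sum}, but working column-by-column instead of row-by-row. For each $j\in[n]$, abbreviate $w_j := x_{1,j}+\ldots+x_{m,j}$, so that by hypothesis \eqref{eq-PHP-second-sum} we have the disjunction $(w_j=0)\Or(w_j=1)$ available for every $j$. Note that $S = w_1+\ldots+w_n$.

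First I would combine $(w_1=0)\Or(w_1=1)$ with $(w_2=0)\Or(w_2=1)$ by a single application of Corollary~\ref{lem-basic-count-combine-two-lines} (taking $a_0=b_0=0$, $a_1=b_1=1$). Since both input disjunctions have constant coefficients, the corollary produces a polynomial-size \RZ0 derivation of
\[
(w_1+w_2=0)\Or(w_1+w_2=1)\Or(w_1+w_2=2).
\]
Next I would combine the result with $(w_3=0)\Or(w_3=1)$, again by Corollary~\ref{lem-basic-count-combine-two-lines}, to obtain $(w_1+w_2+w_3=0)\OrDots(w_1+w_2+w_3=3)$, and so on inductively, until after $n-1$ applications I reach
\[
(w_1+\ldots+w_n=0)\OrDots(w_1+\ldots+w_n=n),
\]
which is literally $(S=0)\OrDots(S=n)$.

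For the size bound, each of the $n-1$ applications of Corollary~\ref{lem-basic-count-combine-two-lines} is polynomial in the sizes of its two input disjunctions; the intermediate disjunctions have at most $n+1$ equations on a linear form of size at most $n$, hence total size polynomial in $n$. Moreover, since all coefficients of the $x_{i,j}$ appearing throughout are constants (they are all $1$), Corollary~\ref{lem-basic-count-combine-two-lines} guarantees that the entire derivation stays inside \RZ0. There is no real obstacle: the only thing to verify is that each intermediate line is a legitimate \RZ0-line, which is immediate because it is a disjunction of equations sharing the same linear form (with constant coefficients) and differing only in their free-terms.
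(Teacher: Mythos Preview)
Your proposal is correct and follows essentially the same approach as the paper: both proofs abbreviate the column sums and combine the disjunctions $(w_j=0)\Or(w_j=1)$ one at a time to build up $(S=0)\OrDots(S=n)$. The only cosmetic difference is that you invoke Corollary~\ref{lem-basic-count-combine-two-lines} directly (mirroring the proof of Lemma~\ref{lem-PHP-first-sum}), whereas the paper spells out the same combination step via the case-analysis Lemma~\ref{lem-R0(lin)-case-anl}; one minor slip is that the intermediate linear form $w_1+\ldots+w_k$ has size up to $mk\le (n+1)n$ rather than $n$, but this is still polynomial and does not affect the argument.
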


\begin{proof}
For all $j\in[n]$, fix the abbreviation $y_j:=x_{1,j}\PlusDots x_{m,j}$. Thus,
by (\ref{eq-PHP-second-sum}) we have $(y_{j}=0)\Or(y_{j}=1)$, for all $j\in[n]$.
Now the proof is similar to the proof of Lemma \ref{lem-basic-count-all-possibilities},
except that here single variables are abbreviations of linear forms.

If $y_1=0$ then we can add $y_1$ to the two sums in $(y_2=0)\Or(y_2=1)$, and
reach $(y_1+y_2=0)\Or(y_1+y_2=1)$ and if $y_1=1$ we can do the same and reach
$(y_1+y_2=1)\Or(y_1+y_2=2)$. So, by Lemma \ref{lem-R0(lin)-case-anl}, we
can derive with a polynomial-size \RZ0 proof
\begin{equation}\label{eq-PHP-three-cases}
(y_1+y_2=0)\Or(y_1+y_2=1)\Or(y_1+y_2=2)\,.
\end{equation}

Now, we consider the three cases in (\ref{eq-PHP-three-cases}):
$y_1+y_2=0$ or $y_1+y_2=1$  or $y_1+y_2=2$,
and the clause $(y_3=0)\Or(y_3=1)$. We arrive in a similar manner at
$(y_1+y_2+y_3=0)\OrDots(y_1+y_2+y_3=3)$. We continue in the same way
until we arrive at $(S=0)\Or\cdots\Or(S=n)$.
\end{proof}

\begin{theorem}\label{thmPHP}
There is a polynomial-size \RZ0 refutation of the $m$ to $n$ pigeonhole principle
$\neg$PHP$^m_n$.
\end{theorem}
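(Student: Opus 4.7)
The plan is to assemble the refutation from the ingredients already prepared in the subsection. Steps 1--4 are just bookkeeping: for every pigeon $i\in[m]$ I would invoke Lemma \ref{lem-counting-in-R0(lin)} on the $i$th Pigeon axiom to derive line (\ref{eq-PHP-first-sum}), and for every hole $j\in[n]$ I would invoke Lemma \ref{lem-holes-counting-in-R0(lin)} on the Hole axioms to derive line (\ref{eq-PHP-second-sum}); each derivation is polynomial-size in $n$. Then Lemma \ref{lem-PHP-first-sum} produces
\[
D_1 := (S=m)\Or(S=m+1)\OrDots\Or(S=mn)
\]
and Lemma \ref{lem-PHP-second-sum} produces
\[
D_2 := (S=0)\Or(S=1)\OrDots\Or(S=n),
\]
each in polynomial size. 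Since we have fixed $m=n+1$, the integer value sets $\{m,\ldots,mn\}$ and $\{0,\ldots,n\}$ are disjoint, so $D_1\wedge D_2$ is semantically unsatisfiable.

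The remaining step is to refute $D_1\wedge D_2$, and this is the one place where a short explicit argument is needed. My plan is to apply case analysis on $D_1$ via Lemma \ref{lem-R0(lin)-case-anl}, taking $z$ to be the (constant-coefficient) linear form $S$ and the distinct values $a_1,\ldots,a_{m(n-1)+1}$ to be $m,m+1,\ldots,mn$, with $E_i=\Box$ (the empty disjunction) for every $i$. For each fixed $a\in\{m,\ldots,mn\}$ I need a polynomial-size \RZ0 derivation of $\Box$ from $(S=a)$ together with $D_2$. This is straightforward: resolve (subtract) the premise $(S=a)$ with $D_2$ over $(S=a)$ and $(S=0)$ to obtain
\[
(0=a)\Or(S=1)\OrDots\Or(S=n),
\]
and since $a\ge m\ge 1$, use Simplification to discard $(0=a)$. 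Re-use the premise $(S=a)$ and resolve with the current line over $(S=1)$; the residue $(0=a-1)$ is again nonzero (as $a-1\ge m-1=n\ge 1$) and is simplified away. Iterating $n+1$ times eliminates every disjunct from $D_2$ and yields $\Box$, in $O(n)$ inference steps per value of $a$.

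Once each of these sub-derivations is in hand, Lemma \ref{lem-R0(lin)-case-anl} combines them into a single \RZ0 derivation of $\bigvee_i E_i=\Box$ from $D_1$ and $D_2$, with polynomial overhead. Since $\Box$ is trivially an \RZ0-line and $S$ has constant coefficients, the hypotheses of Lemma \ref{lem-R0(lin)-case-anl} are met. Putting the four polynomial-size derivations together (Pigeon step, Hole step, Lemmas \ref{lem-PHP-first-sum} and \ref{lem-PHP-second-sum}, and the final case-analysis contradiction on $S$) gives an \RZ0 refutation of $\neg$PHP$^{n+1}_n$ of total size polynomial in $n$, which also refutes $\neg$PHP$^m_n$ for every $m>n$.

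I do not expect any serious obstacle: the only content beyond the previous lemmas is observing that for $m=n+1$ the two derived value ranges for $S$ are disjoint, and translating this arithmetic gap into a bounded iteration of the Resolution and Simplification rules. The only mild point requiring care is ensuring that the intermediate lines in the final case-analysis stay within \RZ0 (they do, since they are sub-disjunctions of $D_2$, each of whose equations involves the same linear form $S$ with constant coefficients, plus the constant equations $(0=a-b)$ that are immediately discarded).
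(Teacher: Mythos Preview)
Your proposal is correct and follows essentially the same approach as the paper: derive $D_1$ and $D_2$ via Lemmas \ref{lem-PHP-first-sum} and \ref{lem-PHP-second-sum}, then use case analysis (Lemma \ref{lem-R0(lin)-case-anl}) together with Resolution and Simplification to exploit the disjointness of the two value ranges for $S$. The only cosmetic difference is that the paper case-splits on $D_2$ (the shorter disjunction with $n+1$ cases) rather than on $D_1$ as you do, but both choices yield polynomial-size refutations and the argument is otherwise identical.
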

\begin{proof}
By Lemmas \ref{lem-PHP-first-sum} and \ref{lem-PHP-second-sum} above,
all we need is to
show a polynomial-size refutation of $(S=m)\Or\cdots\Or(S=m\cd n)$ and
$(S=0)\Or\cdots\Or(S=n)$.

Since $n<m$, for all $0\le k\le n$, if $S=k$ then using the Resolution and
Simplification rules we can cut-off all the sums in
$(S=m)\Or\cdots\Or(S=m\cd n)$ and arrive at the empty clause.
Thus, by Lemma \ref{lem-R0(lin)-case-anl},
 there is a polynomial-size \RZ0 proof of the
empty clause from $(S=0)\Or\cdots\Or(S=n)$ and $(S=m)\Or\cdots\Or(S=m\cd n)$.
\end{proof}
\FullSpace

\subsection{Tseitin mod $p$ Tautologies in R$^{\mathbf 0}$(lin)}\label{sec-Tseitin}

This subsection establishes polynomial-size \RZ0 proofs of Tseitin
graph tautologies (for constant degree graphs).
This will allow us (in  Section \ref{sec-multilinear})
to extend the multilinear proofs of the Tseitin mod $p$ tautologies
to any field of characteristic $0$
(the proofs in \cite{RT05}
required working over a field containing a primitive $p$th root of unity
when proving the Tseitin mod $p$ tautologies; for more details
see Section \ref{sec-multilinear}).

Tseitin mod $p$ tautologies (introduced in \cite{BGIP01})
are generalizations of the (original, mod $2$)
Tseitin graph tautologies (introduced in \cite{Tse68}).
To build the intuition for the generalized version,
we start by describing the (original) Tseitin mod $2$ principle.
Let $G=(V,E)$ be a connected undirected graph with an \emph{odd} number
of vertices $n$.
The Tseitin mod $2$ tautology states that there is no sub-graph $G'=(V, E')$,
where $E'\subseteq E$,
so that for \emph{every} vertex $v\in V$,
the number of edges from $E'$ incident to $v$ is odd.
This statement is valid, since otherwise,
summing the degrees of all the vertices in $G'$ would amount to an odd number
(since $n$ is odd), whereas this sum also counts every edge in $E'$ twice, and
so is even.

As mentioned above, the Tseitin mod $2$ principle was generalized
by Buss \emph{et al.} \cite{BGIP01} to obtain the Tseitin mod $p$ principle.
Let $p\geq 2$ be some fixed integer and let $G=(V,E)$ be
a connected undirected $r$-regular graph with $n$
vertices and no double edges. Let $G'=(V,E')$ be the corresponding
\emph{directed} graph that results from $G$ by replacing every (undirected)
edge in $G$ with two opposite directed edges.
Assume that $n\equiv 1$ (mod $p$).
Then, the Tseitin mod $p$ principle states that
there is no way to assign to every edge in $E'$ a value from
$\set{0, \ldots, p-1}$, so that:
\begin{description}
  \item[(i)] For every pair of opposite directed edges $e,
  \bar{e}$ in $E'$, with assigned values $a, b$,
respectively, $a+b \equiv 0$ (mod $p$); and
  \item[(ii)] For every vertex $v$ in $V$, the sum of the values assigned to the
  edges in $E'$ coming out of $v$ is congruent to $1$ (mod $p$).
\end{description}

The Tseitin mod $p$ principle is valid,
since if we sum the values assigned to all edges of $E'$
in pairs we obtain  $0$ (mod $p$) (by (i)), where summing them by vertices we
arrive at a total value of $1$ (mod $p$)
(by (ii) and since $n\equiv 1$ (mod $p$)).
We shall see in what follows, that this simple counting argument can be
carried on in a natural (and efficient) way already inside \RZ0.

As an unsatisfiable propositional formula (in CNF form) the negation of
the Tseitin mod $p$ principle is formulated by assigning a variable
$x_{e,i}$ for every edge $e\in E'$ and every residue $i$ modulo $p$.
The variable $x_{e,i}$ is an indicator variable for the
fact that the edge $e$ has an associated value $\,i$.
The following are the clauses of the Tseitin mod $p$ \,CNF formula
(as translated to disjunctions of linear equations).

\begin{definition}[\textbf{Tseitin mod $p$ formulas} (\Tse0)]
\label{def-Tse}
Let $p\geq 2$ be some fixed integer and let  $G=(V,E)$ be a connected
undirected $r$-regular graph with $n$ vertices and no double edges, and assume
that $\,n\equiv 1$ ({\rm{mod}} $p$). Let $G'=(V,E')$ be the corresponding
directed graph that results from $G$ by replacing every (undirected) edge in
$G$ with two opposite directed edges.

Given a vertex $v\in V$, denote the edges in $E'$ coming
out of $v$ by $e[v,1], \ldots, e[v,r]$
and define the following set of (translation of) clauses:
\[
{\rm{MOD}}_{{p,1}}(v)\!:=\left\{{\BigOr\limits_{k = 1}^r { (x_{e[v,k] ,i_k }=0) }\;
{\biggl|\biggr.}\; i_1 , \ldots , i_r  \in \{ 0, \ldots ,p - 1\} {\mbox{ and
}}\sum\limits_{k = 1}^r {i_k \not\equiv 1\!\!\mod p}}\right\}.
\]
The {\rm Tseitin mod $p$} formula, denoted  {\rm \Tse0},
consists of the following (translation) of clauses:
\[
\begin{array}{l}
 {\mbox{1.}}\,\BigOr\limits_{i = 0}^{p - 1} {(x_{e,i}=1)} \,{\mbox{,  for all }}e \in E'\;
 \\
 {\mbox{(expresses that every edge is assigned at least one value from $0,\ldots,p-1$);}}
  \\
 {\mbox{2.}}\; (x_{e,i}=0) \Or (x_{e,j}=0)\,{\mbox{,   for all  }}i \ne j \in
 \{ 0, \ldots ,p - 1\}
{\mbox{ and all }}\,
 e\in E' 
 \\ {\mbox{(expresses that every edge is assigned at most one value
from
 $0,\ldots,p-1$);}} 
 \\
{\mbox{3.}}\; (x_{e,i}=1)\Or (x_{\bar e,p - i}=0)\;
 {\mbox{     and     }}\; (x_{e,i}=0)\Or(x_{\bar e,p - i}=1),\footnotemark
  \\ \qquad{\mbox{    for all two opposite directed edges }}
  e,\bar e \in E' {\mbox{ and all }}i \in \{ 0, \ldots ,p - 1\}
   \\ {\mbox{    (expresses condition (i) of the Tseitin mod $p$ principle
above); }}
  \\
 {\mbox{4.}}\,{\mbox{\rm{ MOD}}}_{p,1} (v)\,,{\mbox{   for all }}v \in V 
 \\
 {\mbox{(expresses condition (ii) of the Tseitin mod $p$ principle above).}}
 \end{array}
\]
\footnotetext{If $i=0$ then $x_{\bar e,p - i}$ denotes $x_{\bar e,0}$.}
\end{definition}
Note that for every edge $e\in E'$, the polynomials of (1,2) in Definition
\ref{def-Tse}, combined with the Boolean axioms of \RZ0,
force any collection of edge-variables $x_{e,0}, \ldots, x_{e,p-1}\,$
to contain exactly one $i\in\set{0, \ldots, p-1}$ so that $x_{e,i}=1$.
Also, it is easy to verify that, given a vertex $v\in V$, any assignment
$\sigma$ of $0,1$ values (to the relevant variables) satisfies both the disjunctions
of (1,2) and the disjunctions of MOD$_{p,1}(v)$ if and only if $\sigma$ corresponds
to an assignment of values from $\set{0, \ldots, p-1}$ to the edges coming out
of $v$ that sums up to $1$ (mod $p$).\QuadSpace

Until the rest of this subsection we fix
an integer $p\geq 2$ and a connected undirected $r$-regular graph
$G=(V,E)$ with $n$ vertices and no double edges,
such that $n\equiv 1 \mod p$ ~and $r$ is a constant.
As in Definition \ref{def-Tse},
we let $G'=(V,E')$ be the corresponding directed graph that results
from $G$ by replacing every (undirected) edge in $G$ with two opposite directed edges.
We now proceed to refute \Tse0 inside \RZ0 with a polynomial-size (in $n$)
refutation.

Given a vertex $v\in V$, and the edges in $E'$ coming
out of $v$, denoted $e[v,1], \ldots, e[v,r]$,
 define the following abbreviation:
\begin{equation}\label{eq-Tse-abbreviate}
\alpha_v := \sum_{j=1}^r \sum_{i=0}^{p-1} i\cd x_{e[v,j], i} \,.
\end{equation}

\begin{lemma}\label{lem-Tse-impl-compl}
Let $v\in V$ be any vertex in $G'$.
Then there is a constant-size \RZ0 proof from \Tse0
of the following disjunction:
\begin{equation}\label{eq-Tse-impl-compl}
\BigOr_{\ell=0}^{r-1}(\alpha_v = 1+\ell\cd p)\,.
\end{equation}
\end{lemma}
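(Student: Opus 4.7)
The plan is to invoke the implicational completeness of \RZ0 (Corollary \ref{cor-impl-compl-RZ0}) and observe that, because $p$ and $r$ are constants, everything relevant is ``local to $v$''. All variables occurring in $\alpha_v$ are the $rp$ variables $x_{e[v,j],i}$ with $j\in[r]$ and $i\in\{0,\ldots,p-1\}$, and the only \Tse0 axioms needed are (i) the axiom $\BigOr_{i=0}^{p-1}(x_{e[v,j],i}=1)$ for each of the $r$ edges at $v$, (ii) the $r\binom{p}{2}$ axioms $(x_{e[v,j],i}=0)\Or(x_{e[v,j],i'}=0)$ for $i\ne i'$, and (iii) the at most $p^r$ clauses of $\mathrm{MOD}_{p,1}(v)$. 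Together with the Boolean axioms, this is a constant-sized subsystem over constantly many variables.

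First I would check that the target disjunction $\BigOr_{\ell=0}^{r-1}(\alpha_v=1+\ell p)$ is an \RZ0-line: all of its equations share the single linear form $\alpha_v$, whose coefficients are the constants $0,1,\ldots,p-1$, and they differ only in their free terms $1+\ell p$. Next I would verify the semantic implication. Any Boolean assignment satisfying (i) and (ii) must, for each $j$, have $x_{e[v,j],i_j}=1$ for exactly one $i_j\in\{0,\ldots,p-1\}$, so $\alpha_v=\sum_j i_j$; axiom (iii) then forces $\sum_j i_j\equiv 1\pmod p$, and since $0\le\sum_j i_j\le r(p-1)$, the integer $\ell$ with $\sum_j i_j=1+\ell p$ satisfies $0\le\ell\le r-1$. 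Hence the target is semantically implied by (i)--(iii) and the Boolean axioms.

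Corollary \ref{cor-impl-compl-RZ0} then supplies an \RZ0 proof; to obtain the constant-size bound I would trace the inductive construction of Theorem \ref{thm-impl-comp}, in which the proof roughly doubles per additional variable. Applied on $rp=O(1)$ variables, this yields a proof of size $2^{O(rp)}=O(1)$, independent of the global parameter $n$. An equivalent explicit construction is nested Boolean case-analysis on the $rp$ relevant variables via Lemma \ref{lem-R0(lin)-case-anl} with $z=x_{e[v,j],i}$ and free terms drawn from the Boolean axiom $(x_{e[v,j],i}=0)\Or(x_{e[v,j],i}=1)$: in each leaf the assignment is fully determined, and one either derives the specific equation $(\alpha_v=1+\ell p)$ by summing the singleton hypotheses weighted by the coefficients $i$ and then Weakens to the target, or derives the empty disjunction by resolving the matching $\mathrm{MOD}_{p,1}(v)$ clause against the $r$ singletons $(x_{e[v,k],i_k}=1)$ and simplifying. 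The main delicacy is that Lemma \ref{lem-R0(lin)-case-anl} requires a disjunction of a single linear form with distinct free terms, which is why one must split Boolean-wise on individual variables rather than directly on $\BigOr_i(x_{e[v,j],i}=1)$; this inflates the leaf count from $p^r$ to $2^{rp}$, still constant.
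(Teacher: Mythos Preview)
Your proof is correct and follows essentially the same approach as the paper: isolate the constant-size subset of \Tse0 axioms local to $v$ (forms (1), (2), (4) in the paper's numbering), verify the semantic implication to the target \RZ0-line, and then invoke the implicational completeness of \RZ0 (Corollary \ref{cor-impl-compl-RZ0}) together with the observation that constantly many variables with constant coefficients yield a constant-size proof. Your additional explicit case-analysis construction and the check that the target is an \RZ0-line are not in the paper's proof but are harmless elaborations of the same argument.
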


\begin{proof}
Let $T_v\subseteq \Tse0$ be the set of all disjunctions of the form
(1,2,4) from Definition \ref{def-Tse} that contain only variables pertaining to
vertex $v$ (that is, all the variables $x_{e,i}$, where $e\in E'$ is an edge
coming out of $v$, and $i\in\set{0\CommaDots p-1}$).

\begin{claim}\label{cla-Tse-impl-compl}
$T_v$ semantically implies (\ref{eq-Tse-impl-compl}), that is:\footnotemark
$$T_v \models \BigOr_{\ell=0}^{r-1}(\alpha_v = 1+\ell\cd p)\,.$$
\end{claim}
\footnotetext{Recall that we only consider assignments of $0,1$ values to variables
when considering the semantic implication relation $\models$.}
\begin{proofclaim}
Let $\sigma$ be an assignment of $0,1$ values to the variables in $T_v$
that satisfies both the disjunctions of (1,2) and
the disjunctions of MOD$_{p,1}(v)$ in Definition \ref{def-Tse}.
 As mentioned above (the comment after Definition \ref{def-Tse}),
such a $\sigma$ corresponds to an assignment of values from $\set{0, \ldots, p-1}$
to the edges coming out of $v$, that sums up to $1 \mod p$.
 This means precisely that $\alpha_v = 1 \mod p$ under the assignment $\sigma$.
Thus, there exists a nonnegative integer $k$,
such that $\alpha_v=1+kp$ under $\sigma$.

It remains to show that $k\le r-1$
(and so the only possible values that $\alpha_v$ can get under $\sigma$
are $1,1+p,1+2p,\ldots,1+(r-1)p$).
Note that because $\sigma$ gives the value $1$ to only
one variable from $x_{e[v,j],0},\ldots,x_{e[v,j],p-1}$ (for every $j\in[r]$),
then the maximal value that $\alpha_v$ can have under $\sigma$ is $r(p-1)$.
Thus, $1+kp\le rp-r$ and so $k\le r-1$.
\end{proofclaim}

From Claim \ref{cla-Tse-impl-compl} and from the implicational completeness
of \RZ0 (Corollary \ref{cor-impl-compl-RZ0}),
there exists an \RZ0 derivation of (\ref{eq-Tse-impl-compl}) from $T_v$.
It remains to show that this derivation is of constant-size.

Since the degree $r$ of $G'$ and the modulus $p$ are both constants,
both $T_v$ and (\ref{eq-Tse-impl-compl}) have constant number of variables and
constant coefficients (including the free-terms).
Thus, 
there is a constant-size \RZ0 derivation of (\ref{eq-Tse-impl-compl})
from $T_v$.
\end{proof}


\begin{lemma}\label{lem-Tse-sum1}
There is a \ps0 (in $n$) \RZ0 derivation from \Tse0 of the following disjunction:
\begin{equation*}
\BigOr_{\ell=0}^{(r-1)\cd n}\left(\sum_{v\in V}\alpha_v = n+\ell\cd p\right)\,.
\end{equation*}
\end{lemma}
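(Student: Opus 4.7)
The plan is to derive for every vertex $v\in V$ the disjunction $D_v:=\BigOr_{\ell=0}^{r-1}(\alpha_v=1+\ell\cd p)$ via Lemma~\ref{lem-Tse-impl-compl} (paying only a constant per vertex, hence $O(n)$ in total), and then iteratively ``add up'' these $n$ disjunctions using Lemma~\ref{lem-basic-cnt-combine-2-lines-all-possible}. Concretely, enumerate $V=\{v_1,\dots,v_n\}$ and prove by induction on $k$ that there is a polynomial-size \RZ0 derivation of
\begin{equation*}
S_k \;:=\; \BigOr_{\ell=0}^{k(r-1)}\Bigl(\sum_{i=1}^{k}\alpha_{v_i}=k+\ell\cd p\Bigr).
\end{equation*}
The base case $k=1$ is just $D_{v_1}$. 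For the inductive step, apply the ``moreover'' part of Lemma~\ref{lem-basic-cnt-combine-2-lines-all-possible} to $S_k$ and $D_{v_{k+1}}$; this is legitimate because all coefficients involved are constants (each $\alpha_v$ has coefficients from $\{0,\dots,p-1\}$, and since each directed edge $e\in E'$ exits exactly one vertex, the variables in the $\alpha_v$'s are disjoint, so the coefficients of $\sum_{i\le k}\alpha_{v_i}$ stay bounded by $p-1$). The resulting disjunction ranges over all sums $(k+\ell p)+(1+\ell' p)=(k+1)+(\ell+\ell')p$ for $0\le\ell\le k(r-1)$ and $0\le\ell'\le r-1$, which, after the usual discarding of duplicate equations inside a disjunction, collapses exactly to $S_{k+1}$. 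Taking $k=n$ gives the desired disjunction.

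It remains to verify two things. First, each $S_k$ is indeed an \RZ0-line: its linear form $\sum_{i\le k}\alpha_{v_i}$ has constant coefficients, and all equations in $S_k$ share this linear form and differ only in their free-terms, so $S_k$ is a single sub-disjunction of the allowed type. Second, the size bound: $|S_k|$ contains at most $k(r-1)+1=O(n)$ equations (since $r$ and $p$ are constants), each of constant size per coefficient, so $|S_k|=\mathrm{poly}(n)$; and each inductive step invokes Lemma~\ref{lem-basic-cnt-combine-2-lines-all-possible} on two disjunctions of polynomial size, costing $\mathrm{poly}(n)$ lines. Summed over $n$ steps, the total derivation is of size polynomial in $n$, and together with the $n$ constant-size applications of Lemma~\ref{lem-Tse-impl-compl} to produce the initial $D_v$'s, this yields the claimed \RZ0 proof.

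There is no real obstacle here — everything is a direct application of the counting machinery built up in Section~5. The one point that requires a sanity check is the bookkeeping of the index range: one must observe that combining $S_k$ (with $\ell\in\{0,\dots,k(r-1)\}$) with $D_{v_{k+1}}$ (with $\ell'\in\{0,\dots,r-1\}$) produces sums $\ell+\ell'\in\{0,\dots,(k+1)(r-1)\}$ with \emph{every} value in that range realized, so that no gaps appear and the collapsed disjunction is precisely $S_{k+1}$.
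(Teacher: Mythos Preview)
Your proof is correct and follows essentially the same approach as the paper's own proof: both derive the per-vertex disjunctions via Lemma~\ref{lem-Tse-impl-compl} and then accumulate them inductively using Lemma~\ref{lem-basic-cnt-combine-2-lines-all-possible}. Your version is in fact slightly more careful than the paper's, explicitly verifying that the accumulated line is an \RZ0-line (via disjointness of the edge-variables across vertices, so the coefficients remain bounded by $p-1$) and checking that the free-term index range $\{0,\dots,(k+1)(r-1)\}$ is hit without gaps.
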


\begin{proof}
Simply add successively all the equations pertaining to
disjunctions (\ref{eq-Tse-impl-compl}), for all vertices $v\in V$.
Formally, we show that for every subset of vertices ${\mathcal V}\subseteq V$,
with $|{\mathcal V}|=k$, there is a \ps0 (in $n$) \RZ0 derivation from \Tse0 of
\begin{equation}\label{eq-Tse-sum1-ind}
\BigOr_{\ell=0}^{(r-1)\cd k}\left(\sum_{v\in{\mathcal V}}
\alpha_v = k+\ell\cd p\right)\,,
\end{equation}
and so putting ${\mathcal V}=V$, will conclude the proof.

We proceed by induction on the size of $\mathcal V$.
The base case, $|{\mathcal V}|=1$, is immediate from Lemma \ref{lem-Tse-impl-compl}.

Assume that we already derived (\ref{eq-Tse-sum1-ind}) with a \ps0 (in $n$)
\RZ0 proof,
for some ${\mathcal V}\subset V$,
such that $|{\mathcal V}|=k<n$.
Let $u\in V \sm {\mathcal V}$.
By Lemma \ref{lem-Tse-impl-compl}, we can derive
\begin{equation}\label{eq-Tse-one-vertex}
\BigOr_{\ell=0}^{r-1}(\alpha_u = 1+\ell\cd p)
\end{equation}
from \Tse0 with a constant-size proof.
Now, by Lemma \ref{lem-basic-cnt-combine-2-lines-all-possible},
each linear equation in (\ref{eq-Tse-one-vertex}) can be added
to each linear equation in (\ref{eq-Tse-sum1-ind}),
with a polynomial-size (in $n$) \RZ0 proof.
This results in the following disjunction:
\[
\BigOr_{\ell=0}^{(r-1)\cd(k+1)}
\left(\sum_{v\in{\mathcal V}\cup\set{u}}\alpha_v = k+1+\ell\cd p\right)\,,
\]
which is precisely what we need to conclude the induction step.
\end{proof}

\begin{lemma}\label{lem-Tse-sum2}
Let $e,\bar e$ be any pair of opposite directed edges in $G'$
and let $i\in\set{0\CommaDots p-1}$.
Let $T_e\subseteq \Tse0$ be the set of all disjunctions of the form
(1,2,3) from Definition \ref{def-Tse} that contain only
variables pertaining to edges $e,\bar e$
(that is, all the variables $x_{e,j}, x_{\bar e,j}$,
for all $j\in\set{0\CommaDots p-1}$).
Then, there is a constant-size \RZ0 proof from $T_e$
of the following disjunction:
\begin{equation}\label{eq-Tse-sum2}
\left(i\cd x_{e,i}+(p-i)\cd x_{\bar e, p-i} = 0\right) \Or
\left(i\cd x_{e,i}+(p-i)\cd x_{\bar e, p-i} = p \right)\,.
\end{equation}
\end{lemma}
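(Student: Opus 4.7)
The plan is to prove the disjunction (\ref{eq-Tse-sum2}) by case analysis on the Boolean value of $x_{e,i}$, using the Boolean axiom $(x_{e,i}=0)\Or(x_{e,i}=1)$ together with the two disjunctions of type (3) from Definition \ref{def-Tse}, namely $(x_{e,i}=1)\Or(x_{\bar e,p-i}=0)$ and $(x_{e,i}=0)\Or(x_{\bar e,p-i}=1)$. Since $p$ (and hence every coefficient appearing) is a fixed constant, each of the steps below uses only a constant number of proof-lines and a constant number of inference rule applications.

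First I would handle the case $x_{e,i}=0$. Resolving $(x_{e,i}=0)$ against $(x_{e,i}=1)\Or(x_{\bar e,p-i}=0)$ (by subtracting the two distinguished equations and using Simplification to discard the unsatisfiable $(0=1)$) yields $(x_{\bar e,p-i}=0)$. Now scale this equation to $((p-i)\cd x_{\bar e,p-i}=0)$ by adding it to itself $p-i-1$ times via the Resolution rule (this is a constant number of additions), and analogously obtain $(i\cd x_{e,i}=0)$ from $(x_{e,i}=0)$. Adding these two linear equations by one further application of the Resolution rule produces $(i\cd x_{e,i}+(p-i)\cd x_{\bar e,p-i}=0)$, which is the left disjunct of (\ref{eq-Tse-sum2}).

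Next I would handle the case $x_{e,i}=1$, symmetrically. Resolving $(x_{e,i}=1)$ against $(x_{e,i}=0)\Or(x_{\bar e,p-i}=1)$, again with a Simplification step, yields $(x_{\bar e,p-i}=1)$. Scaling in the same way gives $(i\cd x_{e,i}=i)$ and $((p-i)\cd x_{\bar e,p-i}=p-i)$, and summing these two equations produces $(i\cd x_{e,i}+(p-i)\cd x_{\bar e,p-i}=p)$, the right disjunct of (\ref{eq-Tse-sum2}). Finally, I would apply Lemma \ref{lem-R0(lin)-case-anl} with the Boolean axiom $(x_{e,i}=0)\Or(x_{e,i}=1)$ playing the role of the hypothesis disjunction $(z=a_1)\Or(z=a_2)$ (with $z=x_{e,i}$), combining the two case-derivations into an \RZ0 proof of (\ref{eq-Tse-sum2}); the target disjunction is clearly an \RZ0-line, as it has constant integer coefficients and consists of two linear equations differing only in their free-terms, so the hypotheses of Lemma \ref{lem-R0(lin)-case-anl} are met.

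There is no real obstacle here: the only mildly subtle point is the degenerate case $i=0$, where the left-hand-side linear form becomes $p\cd x_{\bar e,0}$ and the target disjunction $(p\cd x_{\bar e,0}=0)\Or(p\cd x_{\bar e,0}=p)$ follows directly from the Boolean axiom $(x_{\bar e,0}=0)\Or(x_{\bar e,0}=1)$ by scaling both equations by $p$. In every case the bound on $p,i$ as constants guarantees that the whole derivation has size bounded by a constant independent of $n$, as required.
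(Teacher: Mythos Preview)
Your proof is correct, but it takes a different route from the paper's. The paper does not construct the derivation explicitly: it simply observes that $T_e$ semantically implies the intermediate line $(x_{e,i}+x_{\bar e,p-i}=0)\Or(x_{e,i}+x_{\bar e,p-i}=2)$, that this in turn semantically implies (\ref{eq-Tse-sum2}), and then appeals to the implicational completeness of \RZ0 (Corollary \ref{cor-impl-compl-RZ0}) together with the fact that only a constant number of variables with constant coefficients are involved, so the resulting derivation is constant-size.

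Your approach is more elementary and more explicit: you actually build the derivation by hand using only the two type-(3) axioms and the Boolean axiom for $x_{e,i}$, via case analysis and Lemma \ref{lem-R0(lin)-case-anl}. This has the advantage of not relying on the implicational completeness result at all, and it makes transparent which axioms are really doing the work (types (1) and (2) are not needed). The paper's approach, on the other hand, is quicker to write and illustrates the general pattern used elsewhere in the section (e.g.\ Lemma \ref{lem-Tse-impl-compl}): once everything in sight is constant-size, semantic implication plus implicational completeness immediately yields a constant-size proof.
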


\begin{proof}
First note that $T_e$ 
semantically implies
\begin{equation}\label{eq-Tse-opposite-edges}
(x_{e,i}+x_{\bar e,p-i} = 0) \Or (x_{e,i}+x_{\bar e,p-i} = 2)\,.
\end{equation}
The number of variables in $T_e$ and (\ref{eq-Tse-opposite-edges}) is constant.
Hence, 
there is a constant-size \RZ0-proof of (\ref{eq-Tse-sum2}) from $T_e$.
Also note that
\begin{equation}\label{eq-Tse-sum2-LHS}
\begin{array}{l}
(x_{e,i}+x_{\bar e,p-i} = 0) \Or (x_{e,i}+x_{\bar e,p-i} = 2)
\models \\
\qquad\qquad\qquad\qquad\qquad\left(i\cd x_{e,i}+(p-i)\cd x_{\bar e, p-i} = 0\right) \Or
\left(i\cd x_{e,i}+(p-i)\cd x_{\bar e,p-i} = p \right)\,.
\end{array}
\end{equation}
Therefore, there is also an \RZ0-proof of constant-size
from $T_e$ of the lower line in
(\ref{eq-Tse-sum2-LHS}).
\end{proof}

We are now ready to complete the \ps0 \RZ0 refutation of \Tse0.
Using the two prior lemmas, the refutation idea is simple, as we now explain.
Observe that
\begin{equation}\label{eq-Tse-sum1=sum2}
\sum_{v\in V}\alpha_v
= \sum_{\{e,\bar e\}\subseteq E' \atop i\in\set{0\CommaDots p-1}}
 \left(i\cd x_{e,i}+(p-i)\cd x_{\bar e, p-i}\right) \,,
\end{equation}
where by $\{e,\bar e\}\subseteq E'$ we mean that $e,\bar e$ is
pair of opposite directed edges in $G'$.

Derive by Lemma \ref{lem-Tse-sum1} the disjunction
\begin{equation}\label{eq-Tse-sum1}
\BigOr_{\ell=0}^{(r-1)\cd n}\left(\sum_{v\in V}\alpha_v = n+\ell\cd p\right)\,.
\end{equation}
This disjunction expresses the fact that $\sum_{v\in V}\alpha_v = 1\mod p$
(since $n=1\mod p$).
 On the other hand, using Lemma \ref{lem-Tse-sum2},
we can ``sum together" all the equations (\ref{eq-Tse-sum2})
(for all $\{e,\bar e\}\subseteq E'$ and all $i\in\set{0,\ldots,p-1}$),
to obtain a disjunction expressing the statement that
\[
	\sum_{
		\{e,\bar e\}\subseteq E' \atop i\in
										\set{
											0\CommaDots p-1
											}
		}
		(i\cd x_{e,i}+(p-i)\cd x_{\bar e, p-i})
												= 0\mod p
\,.
\]
By Equation (\ref{eq-Tse-sum1=sum2}), we then obtain the desired contradiction.
This idea is formalized in the proof of the following theorem:

\begin{theorem}\label{thm-Tse}
Let $G=(V,E)$ be an $r$-regular graph with $n$ vertices, where $r$ is a constant.
Fix some modulus $p$.
Then, there are \ps0 (in $n$) \RZ0 refutations of \Tse0.
\end{theorem}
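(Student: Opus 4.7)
The plan is to combine Lemmas \ref{lem-Tse-sum1} and \ref{lem-Tse-sum2} to produce two disjunctions about the same linear form $S:=\sum_{v\in V}\alpha_v$, one saying $S\equiv 1\pmod p$ and another saying $S\equiv 0\pmod p$, and then to close the refutation by case analysis and Simplification.

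First I would apply Lemma \ref{lem-Tse-sum1} directly to obtain, in polynomial size in $n$, the disjunction
\[
D_1\;:=\;\BigOr_{\ell=0}^{(r-1)n}\bigl(S = n+\ell\cd p\bigr).
\]
Next, for every pair of opposite directed edges $\{e,\bar e\}\subseteq E'$ and every $i\in\{0,\dots,p-1\}$, I would invoke Lemma \ref{lem-Tse-sum2} to derive, in constant size, the two-equation disjunction with free terms $0$ and $p$. There are $|E|\cd p = \BigO(rn\cd p)$ such disjunctions, each of constant size. I would then iteratively apply Lemma \ref{lem-basic-cnt-combine-2-lines-all-possible} (whose \RZ0 version applies, since all the coefficients $i,p-i$ are constants) to add these disjunctions together one by one; at every intermediate step the current disjunction lists all attainable sums, and since each summand contributes either $0$ or $p$, every attainable sum is a multiple of $p$ bounded by $|E|\cd p^2$. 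After all $|E|\cd p$ combinations the result is
\[
D_2\;:=\;\BigOr_{k=0}^{|E|\cd p}\Bigl(\sum_{\{e,\bar e\}\subseteq E',\,i}(i\cd x_{e,i}+(p-i)\cd x_{\bar e,p-i}) = k\cd p\Bigr),
\]
and by the identity (\ref{eq-Tse-sum1=sum2}) the linear form on the left-hand side is \emph{syntactically} equal to $S$, so $D_2$ is $\BigOr_{k=0}^{|E|\cd p}(S = k\cd p)$. Both $D_1$ and $D_2$ are \RZ0-lines: in each, all variable-coefficients are constants (at most $p-1$) and the entire disjunction consists of equations differing only in their free-terms.

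To finish, I would derive the empty disjunction by case analysis. Fix any $k\in\{0,\dots,|E|\cd p\}$ and assume $S=k\cd p$. For each $\ell\in\{0,\dots,(r-1)n\}$ resolve $(S=n+\ell\cd p)$ with $(S=k\cd p)$ to obtain $(0=n+\ell\cd p - k\cd p)$; since $n\equiv 1\pmod p$ the right-hand side is nonzero, so Simplification eliminates it. Iterating over all $\ell$ (as in the proof of Lemma \ref{lem-R(lin)-case-analysis}) cuts every equation of $D_1$ and yields the empty disjunction from $D_1$ together with the assumption $S=k\cd p$; this sub-derivation has size polynomial in $n$. Then Lemma \ref{lem-R0(lin)-case-anl} applied to $D_2$ with $z:=S$ and $a_k:=k\cd p$ packages these $|E|\cd p+1$ sub-derivations into a single polynomial-size \RZ0 refutation.

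The only real obstacle is bookkeeping: verifying that every line produced is genuinely an \RZ0-line (a constant number of sub-disjunctions, each of clause type or free-term type, with constant variable-coefficients), and that the iterative combination in step giving $D_2$ really stays polynomial in $n$. Both follow transparently because $r$ and $p$ are constants, so the coefficients stay bounded and the number of free-terms in any intermediate disjunction grows only additively (bounded by $|E|\cd p^2$) rather than multiplicatively.
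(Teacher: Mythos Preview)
Your proof is correct and follows the same high-level plan as the paper: invoke Lemma~\ref{lem-Tse-sum1} to force $S\equiv 1\pmod p$, invoke Lemma~\ref{lem-Tse-sum2} for each pair $(e,\bar e,i)$, and then extract a contradiction. The one organizational difference is that you first aggregate all the two-term disjunctions from Lemma~\ref{lem-Tse-sum2} into a single line $D_2=\bigvee_k(S=kp)$ via iterated applications of Lemma~\ref{lem-basic-cnt-combine-2-lines-all-possible}, and only then resolve $D_1$ against $D_2$ by case analysis. The paper never builds $D_2$: it instead subtracts each two-term disjunction from $D_1$ one at a time (by case analysis on whether the value is $0$ or $p$), so that after all triples $(e,\bar e,i)$ have been processed every variable has been cancelled and $D_1$ has collapsed to a disjunction of equations $(0=\gamma)$ with $\gamma\equiv 1\pmod p$, which Simplification then empties. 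Both routes are valid and of the same polynomial order; yours is slightly more modular (build the ``$\equiv 0$'' witness, then clash it with the ``$\equiv 1$'' witness), while the paper's avoids the separate accumulation pass and keeps only one evolving disjunction throughout.
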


\begin{proof}
First, use Lemma \ref{lem-Tse-sum1} to derive
\begin{equation}\label{eq-Tse-sum1-thm}
\BigOr_{\ell=0}^{(r-1)\cd n}\left(\sum_{v\in V}\alpha_v = n+\ell\cd p\right)\,.
\end{equation}
Second, use Lemma \ref{lem-Tse-sum2} to derive
\begin{equation}\label{eq-Tse-sum2-thm}
\left(i\cd x_{e,i}+(p-i)\cd x_{\bar e, p-i} = p\right) \Or
\left(i\cd x_{e,i}+(p-i)\cd x_{\bar e,p-i} = 0 \right)\,,
\end{equation}
for every pair of opposite directed edges in $G'=(V,E')$ (as in Definition
\ref{def-Tse}) and every
residue $i\in\set{0\CommaDots p-1}$.

We now reason inside \RZ0. Pick a pair of opposite directed edges
$e,\bar e$ and a residue $i\in\set{0\CommaDots p-1}$.
If $i\cd x_{e,i}+(p-i)\cd x_{\bar e,p-i} = 0$, then
subtract this equation successively from every equation in
(\ref{eq-Tse-sum1-thm}).
We thus obtain a new disjunction, similar to that of (\ref{eq-Tse-sum1-thm}),
but which does not contain the $x_{e,i}$ and $x_{\bar e,p-i}$ variables,
and with the same free-terms.

Otherwise, $i\cd x_{e,i}+(p-i)\cd x_{\bar e,p-i} = p$,
then subtract this equation successively from every equation in
(\ref{eq-Tse-sum1-thm}).
Again, we obtain a new disjunction, similar to that of (\ref{eq-Tse-sum1-thm}),
but which does not contain the $x_{e,i}$ and $x_{\bar e,p-i}$ variables,
and such that $p$ is subtracted from every free-term in every equation.
Since, by assumption, $n\equiv 1\mod p$, the free-terms in every equation are
(still) equal $1\mod p$.

So overall, in both cases ($i\cd x_{e,i}+(p-i)\cd x_{\bar e,p-i} = 0$ and
$i\cd x_{e,i}+(p-i)\cd x_{\bar e,p-i} = p$) we obtained a new
disjunction with all the free-terms in equations equal $1\mod p$.

We now continue the same process for every pair $e,\bar e$
of opposite directed edges in $G'$ and every residue $i$.
Eventually, we discard all the variables $x_{e,i}$ in the
equations, for every $e\in E'$ and $i\in\set{0\CommaDots p-1}$,
while all the free-terms in every equation remain to be equal $1\mod p$.
 Therefore, we arrive at a disjunction of equations of the form
$(0=\gamma)$ for some $\gamma = 1\mod p$. By using the Simplification
rule we can cut-off all such equations, and arrive finally at the
empty disjunction.
\end{proof}

\subsection{The Clique-Coloring Principle in R(lin)}
\label{sec-clique}

In this section we observe that there are polynomial-size R(lin)
proofs of the clique-coloring principle (for certain, weak, parameters).
This implies, in particular, that R(lin) does not possess the feasible monotone
interpolation property (see more details on the interpolation method in Section
\ref{sec-interpo}).

Atserias, Bonet \& Esteban \cite{ABE02} demonstrated
polynomial-size \Res 2 refutations of the clique-coloring formulas
(for certain weak parameters; Theorem \ref{thm-ABE02}).
Thus, it is sufficient to show that R(lin) polynomially-simulates \Res 2 proofs
(Proposition \ref{prop-R(lin)-sim-Res2}).
This can be shown in a straightforward manner.
As noted in the first paragraph of Section \ref{sec-hard},
because the proofs of the clique-coloring formula we discuss here
only follow the proofs inside \Res 2,
then in fact these proofs do not take any advantage of the capacity
``to count'' inside R(lin) (this capacity is exemplified, for instance,
 in Section \ref{sec-basic-count-R(lin)}).

We start with the clique-coloring formulas (these formulas will also be used
in Section \ref{sec-lower-bounds}).
These formulas express the clique-coloring principle that
has been widely used
in the proof complexity literature (cf.,
\cite{BPR97}, \cite{Pud97}, \cite{Kra97-Interpolation},
\cite{Kra98-Discretely}, \cite{ABE02}, \cite{Kra07}).
This principle is based on the following basic combinatorial idea.
Let $G=(V,E)$ be an undirected graph with $n$ vertices and let $k'<k$ be
two integers.
Then, one of the following must hold:
\begin{description}
\item[(i)] The graph $G$ does not contain a \emph{clique with $k$ vertices};

\item[(ii)] The graph $G$ is not a \emph{complete $k'$-partite graph}.
In other words, there is no way to partition $G$ into $k'$ subgraphs
$G_1\CommaDots G_{k'}$,
such that every $G_i$ is an independent set, and for all $i\ne j \in[k']$,
all the vertices in $G_i$ are connected by edges (in $E$) to all the vertices in $G_j$.
\end{description}

Obviously, if Item (ii) above is false
(that is, if $G$ is a complete $k'$-partite graph),
then there exists a $k'$-coloring of
the vertices of $G$; hence the name \emph{clique-coloring} for the principle.

The propositional formulation of the (negation of the)
clique-coloring principle is as follows.
Each variable $\vr p i j$, for all $i\ne j \in [n]$,
is an indicator variable for the fact that there is an edge in $G$
between vertex $i$ and vertex $j$.
Each variable $\vr q \ell i$, for all $\ell\in[k]$ and all $i\in[n]$,
is an indicator variable for the fact that the vertex $i$ in $G$
is the $\ell$th vertex in the $k$-clique.
Each variable $\vr r \ell i$, for all $\ell\in[k']$ and all $i\in[n]$,
is an indicator variable for the fact that the vertex $i$ in $G$
pertains to the independent set $G_\ell$.

\begin{definition}\label{def-clique-color}
The negation of the clique-coloring principle consists of the following
unsatisfiable collection of clauses (as translated to disjunctions of
linear equations), denoted \clique n k {k'}:
\renewcommand{\theenumi}{\roman{enumi}}   
\begin{enumerate}
\item $(q_{\ell,1}=1)\OrDots (q_{\ell,n}=1), \mbox{ for all } \ell\in[k]$

(expresses that there exists at least one vertex in $G$ which constitutes
the $\ell$th vertex of the $k$-clique); \label{eq-clique-q-pigeons}

\item $(q_{\ell,i}=0)\Or(q_{\ell,j}=0),  \mbox{ for all } i\ne j \in [n],\;
	  \ell\in[k]$

(expresses that there exists at most one vertex in $G$ which constitutes
the $\ell$th vertex of the $k$-clique); \label{eq-clique-functional-q}

\item $(q_{\ell,i}=0)\Or(q_{\ell',i}=0), \mbox{ for all } i\in [n],\;
	  \ell\ne \ell'\in[k]$

(expresses that the $i$th vertex of $G$ cannot be both the $\ell$th
and the $\ell'$th vertex of the $k$-clique);\label{eq-clique-holes-q}

\item $(q_{\ell,i}=0)\Or(q_{\ell',j}=0)\Or(p_{i,j}=1), \mbox{ for all }
\ell\ne\ell'\in[k], i\ne j\in[n]$

(expresses that if both the vertices $i$ and $j$ in $G$ are in the $k$-clique,
then there is an edge in $G$ between $i$ and $j$);
\label{eq-clique-p-and-q-vars}

\item $(r_{1,i}=1)\OrDots(r_{k',i}=1), \mbox{ for all } i\in[n]$

(expresses that every vertex of $G$ pertains to at least one independent set);
\label{eq-clique-r-pigeons}

\item  $(r_{\ell,i}=0)\Or(r_{\ell',i}=0), \mbox{ for all } \ell\ne\ell\in[k'], i\in[n]$

(expresses that every vertex of $G$ pertains to at most one independent set);
\label{eq-clique-holes-r}

\item $(p_{i,j}=0)\Or(r_{t,i}=0)\Or(r_{t,j}=0), \mbox{ for all } i\ne j\in[n],
t\in[k']$

(expresses that if there is an edge between vertex $i$ and $j$ in $G$, then
$i$ and $j$ cannot be in the same independent set);
\label{eq-clique-final-contradictory-clauses}
\end{enumerate}
\renewcommand{\theenumi}{(\Roman{enumi})}   
\end{definition}

\begin{remark}
Our formulation of the clique-coloring formulas above is similar to the one
used by \cite{BPR97}, except that we consider also the $\vr p i j$ variables
(we added the (\ref{eq-clique-p-and-q-vars}) clauses and changed accordingly
the (\ref{eq-clique-final-contradictory-clauses}) clauses).
This is done for the sake of clarity of the contradiction itself,
and also to make it clear that the formulas are in the appropriate form
required by the interpolation method
(see Section \ref{sec-interpo} for details on the interpolation method).
By resolving over the $p_{i,j}$ variables in (\ref{eq-clique-p-and-q-vars}) and
(\ref{eq-clique-final-contradictory-clauses}), one can obtain precisely
the collection of clauses in \cite{BPR97}.
\end{remark}

Atserias, Bonet \& Esteban \cite{ABE02} demonstrated \ps0 (in $n$) \Res 2
refutations of \clique n k {k'},
when $k=\sqrt n$ and $k'=(\log n )^2/8 \log\log n$.
These are rather weak parameters,
but they suffice to establish the fact that \Res 2 does not possess the
feasible monotone interpolation property. 

The \Res 2 proof system (also called \emph{$2$-DNF resolution}), first considered
in \cite{Kra01-Fundamenta},
is resolution extended to operate with $2$-DNF formulas, defined as follows.

A \emph{$2$-term} is a conjunction of up to two literals.
A $2$-DNF is a disjunction of $2$-terms.
The size of a $2$-term is the number of literals in it (that is, either
$1$ or $2$).
The \emph{size of a $2$-DNF} is the total size of all the $2$-terms in it.

\begin{definition}[\Res 2]\label{def-res2}
A \emph{\Res {$2$} proof of a $2$-DNF $D$ from a collection $K$ of $2$-DNFs}
is a sequence of $2$-DNFs $D_1,D_2,\ldots,D_s\,$,
such that $D_s=D$, and every $D_j$ is either from $K$ or
was derived from previous line(s)
in the sequence by the following inference rules:
\begin{description}
\item[\quad Cut] Let $A,B$ be two $2$-DNFs.

From $A\Or \BigAnd_{i=1}^2 l_i$ and $B\Or \BigOr_{i=1}^2 \neg l_i$
derive $A \Or B$, where the $l_i$'s are (not necessarily distinct)
literals (and $\neg l_i$ is the negation of the literal $l_i$).

\item[\quad AND-introduction] Let $A,B$ be two $2$-DNFs and $l_1,l_2$ two literals.

From $A\Or l_1$ and $B\Or l_2$ derive
$A\Or B\Or \BigAnd_{i=1}^2 l_i$.

\item[\quad Weakening] From a $2$-DNF $A$ derive
$A \Or \BigAnd_{i=1}^2 l_i$\,, where the $l_i$'s are (not necessarily distinct)
literals.
\end{description}
A \Res {$2$} \emph{refutation} of a collection of $2$-DNFs $K$ is
a \Res {$2$} proof of the empty disjunction $\Box$ from $K$
(the empty disjunction stands for \textsc{false}).
The \emph{size} of a \Res {$2$} proof is the total size of all
the $2$-DNFs in it.
\end{definition}

Given a collection $K$ of $2$-DNFs we translate it into a collection of
disjunctions of linear
equations via the following translation scheme.
For a literal $l$, denote by $\widehat l$
the translation that maps a variable $x_i$ into $x_i$, and $\neg x_i$
into $1-x_i$.
A $2$-term $l_1\And l_2$ is first transformed into the equation
$\widehat l_1 + \widehat l_2 =2$,
and then moving the free-terms in the left
hand side of $\;\widehat l_1 + \widehat l_2 =2$
(in case there are such free-terms) to the right hand side;
So that the final translation of $l_1\And l_2$ has only a single
free-term in the right hand side.
A disjunction of $2$-terms (that is, a $2$-DNF)
$D= \BigOr_{i\in I}(l_{i,1}\And l_{i,2})$ is translated into
the disjunction of the translations of the $2$-terms,
denoted by $\widehat D$.
It is clear that every assignment satisfies a $2$-DNF $D$ if and only if it satisfies
$\widehat D$.

\begin{proposition}\label{prop-R(lin)-sim-Res2}
R(lin) polynomially simulates \Res {$2$}.
In other words, if $\pi$ is a \Res {$2$} proof of $D$ from a collection of $2$-DNFs
$K_1,\ldots,K_t$,
then there is an R(lin) proof of $\widehat D$ from
$\widehat K_1,\ldots,\widehat K_t$ whose size is polynomial in the size of $\pi$.
\end{proposition}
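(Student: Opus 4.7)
The plan is to proceed by induction on the length $s$ of the \Res{2} proof $\pi=(D_1,\ldots,D_s)$, constructing for each $i\in[s]$ an R(lin) derivation of $\widehat{D_i}$ from $\widehat{K_1},\ldots,\widehat{K_t}$. The base case is immediate: an initial 2-DNF $D_i\in K$ translates to $\widehat{D_i}$, which is already one of the assumptions of the R(lin) proof. For the inductive step I simulate each of the three \Res{2} inference rules by an R(lin) derivation of size $O(1)$ (that is, independent of $|\widehat A|$ and $|\widehat B|$) sitting on top of the derivations of the premises.

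The Weakening rule is trivial via R(lin)'s own Weakening. For AND-introduction, observe that a 1-term $l_i$ translates (after moving the free-term to the right-hand side) to a single equation encoding $\widehat{l_i}=1$, while the 2-term $l_1\And l_2$ translates to the single equation encoding $\widehat{l_1}+\widehat{l_2}=2$. Hence a single application of the R(lin) Resolution rule by addition, applied to the two translated premises $\widehat A\Or(\widehat{l_1}=1)$ and $\widehat B\Or(\widehat{l_2}=1)$, produces $\widehat A\Or\widehat B\Or(\widehat{l_1}+\widehat{l_2}=2)$, which is exactly the translated conclusion.

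The main obstacle is the Cut rule, where we must derive $\widehat A\Or\widehat B$ from $\widehat A\Or L$ (with $L$ translating $l_1\And l_2$) and $\widehat B\Or L_1'\Or L_2'$ (with $L_i'$ translating the 1-term $\Not l_i$). My plan is to case-analyse the Boolean values of the linear form $\widehat{l_1}+\widehat{l_2}$ appearing in $L$, which takes at most three values on $\zo$-valued assignments to the (at most two) underlying variables of $l_1,l_2$. Applying the counting machinery of Section \ref{sec-basic-count-R(lin)} (specifically Corollary \ref{lem-basic-count-combine-two-lines}) to the relevant Boolean axioms, I first derive in constant size a disjunction enumerating the possible values of this linear form, and then invoke Lemma \ref{lem-R(lin)-case-analysis} on that disjunction. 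In each case where the value is incompatible with $L$, subtracting the case equation from $L$ (via R(lin) Resolution by subtraction) yields $(0=k)$ with $k\ne 0$, which Simplification discards, leaving $\widehat A$. In the unique case where the value forces $l_1\And l_2$ to hold, the Boolean axioms combined with the case equation derive $\widehat{l_1}=1$ and $\widehat{l_2}=1$ in constant-size R(lin) sub-derivations; resolving each with the corresponding $L_i'$ in Premise~2 by addition produces a simplifiable $(0=1)$, thereby discarding both $L_1'$ and $L_2'$ and leaving $\widehat B$. Combining the cases via Lemma \ref{lem-R(lin)-case-analysis} and discarding duplicate occurrences of $\widehat A$ yields $\widehat A\Or\widehat B$; the degenerate sub-cases in which $l_1,l_2$ share a variable are handled analogously (with two values, or directly by Simplification when $l_1=\Not l_2$ makes $L$ unsatisfiable).

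Since the Cut simulation involves only the at most two variables underlying $l_1$ and $l_2$ and uses only the premises $\widehat A\Or L$, $\widehat B\Or L_1'\Or L_2'$, and the relevant Boolean axioms, its cost is $O(1)$ independent of $|\widehat A|$ and $|\widehat B|$. Summing this constant overhead over the $s$ proof-lines of $\pi$, the resulting R(lin) derivation of $\widehat D$ from $\widehat{K_1},\ldots,\widehat{K_t}$ has size polynomial (in fact, linear) in $|\pi|$, establishing the polynomial simulation.
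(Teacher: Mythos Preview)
Your proof is correct and follows the same inductive template the paper sketches (the paper omits the details, stating only that the argument is ``straightforward and similar to the simulation of resolution by R(lin)''). Your treatment of AND-introduction via a single R(lin) resolution-by-addition step, and of Cut via case analysis on the value of the linear form $\widehat{l_1}+\widehat{l_2}$ using Lemma~\ref{lem-R(lin)-case-analysis}, is exactly the natural way to fill in those details.

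One minor imprecision: your claim that the Cut simulation has ``cost $O(1)$ independent of $|\widehat A|$ and $|\widehat B|$'' is not literally true as a \emph{size} bound, since each proof-line in the sub-derivation carries $\widehat A$ or $\widehat B$ as a sub-disjunction. What is true is that the \emph{number} of additional R(lin) lines is $O(1)$, each of size $O(|\widehat A|+|\widehat B|)$; this still yields the polynomial (indeed linear) overall bound you state at the end.
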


The proof of Proposition \ref{prop-R(lin)-sim-Res2}
proceeds by induction on the length (that is, the number of proof-lines)
in the \Res 2 proof.
This is pretty straightforward and similar to the simulation of
resolution by R(lin), as illustrated in the proof of
Proposition \ref{prop-R(lin)-sim-resolution}.
We omit the details.

\begin{theorem}[\cite{ABE02}]\label{thm-ABE02}
Let $k=\sqrt{n}$ and $k'=(\log n )^2/8 \log\log n$. Then \clique n k { k'} has
\Res {$2$} refutations of size polynomial in $n$.
\end{theorem}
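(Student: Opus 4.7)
The plan is to reduce \clique n k {k'} in \Res 2 to a weak-pigeonhole-style contradiction: namely, to the principle ``$k$ clique-positions are colored by $k'$ colors with $k'<k$, yet no two clique vertices share a color''. The parameter choice $k=\sqrt n$, $k'=(\log n)^2/(8\log\log n)$ ensures $k$ is super-polylogarithmically larger than $k'$, which is exactly the regime where the weak pigeonhole principle admits short \Res 2 refutations (a result of Maciel--Pitassi--Woods / Atserias--Bonet--Pitassi style, using the polylogarithmic excess of pigeons over holes).

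The key auxiliary objects are the 2-DNFs
\[
C_{\ell,t} \;:=\; \BigOr_{i\in[n]} (q_{\ell,i}\And r_{t,i}),
\qquad \ell\in[k],\; t\in[k'],
\]
read as ``the $\ell$th clique-vertex receives color $t$''. I would first derive, for each $\ell\in[k]$, the disjunction $C_{\ell,1}\OrDots C_{\ell,k'}$. This follows by repeated AND-introduction from the clique-membership clauses of type (\ref{eq-clique-q-pigeons}) and the coloring clauses of type (\ref{eq-clique-r-pigeons}): on each vertex $i$, resolve $q_{\ell,i}$ with $r_{1,i}\OrDots r_{k',i}$ to obtain $\BigOr_t(q_{\ell,i}\And r_{t,i})$, and then combine over $i$. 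Next, I would derive the ``no monochromatic edge'' facts $\neg(C_{\ell,t}\And C_{\ell',t})$ for all $\ell\ne\ell'\in[k]$, $t\in[k']$, using the edge axioms (\ref{eq-clique-p-and-q-vars}) together with the independent-set axioms (\ref{eq-clique-final-contradictory-clauses}): for each pair of vertices $i\ne j$, the conjunction $q_{\ell,i}\And r_{t,i}$ together with $q_{\ell',j}\And r_{t,j}$ forces both $p_{i,j}=1$ and ($r_{t,i}=0$ or $r_{t,j}=0$), a contradiction derivable in constantly many \Res 2 steps per pair.

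After these polynomial-size preparations, the remaining task is to derive a contradiction from
\[
\BigAnd_{\ell\in[k]}\bigl(C_{\ell,1}\OrDots C_{\ell,k'}\bigr)
\;\land\;
\BigAnd_{\ell\ne\ell'\in[k],\, t\in[k']} \neg(C_{\ell,t}\And C_{\ell',t}).
\]
Treating the $C_{\ell,t}$'s as ``macro-variables'', this is precisely $\neg\mathrm{PHP}^{k}_{k'}$. The main obstacle, and the technical heart of the theorem, is to produce a $\mathrm{poly}(n)$-size \Res 2 refutation of this weak PHP at the chosen parameters. The strategy I would follow is the one from \cite{ABE02}: encode integers in $[k']$ using $O(\log k')=O(\log\log n)$ bits via auxiliary 2-DNFs, and simulate a divide-and-conquer counting argument (roughly, a boosted resolution proof of PHP in the style of Maciel--Pitassi--Woods), where each counting step is representable by a 2-DNF and each inference is a single AND-introduction or cut. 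One shows inductively that at each level of the counting tree the proof size grows by at most a polynomial factor; since the tree has depth $O(\log k')$, the final bound is $n^{O(1)}$ provided $k\ge k'\cdot (\log n)^{\Theta(1)}$, which holds for $k=\sqrt n$ and $k'=(\log n)^2/(8\log\log n)$.

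Finally, one applies the macro-variable substitution $C_{\ell,t}$ for the pigeon variables to lift the $\Res 2$ refutation of $\neg\mathrm{PHP}^{k}_{k'}$ into an actual $\Res 2$ refutation of \clique n k {k'}. Each pigeon clause of $\neg\mathrm{PHP}$ has already been derived, and each functionality clause has the form $\neg(C_{\ell,t}\And C_{\ell',t})$, also derived; the remaining steps of the lifted refutation are uses of Cut, AND-introduction, and Weakening on 2-DNFs over the macros, which expand to genuine 2-DNFs over the underlying $p,q,r$ variables at only a polynomial cost. The hardest and most delicate part is undoubtedly the short \Res 2 refutation of the weak PHP; the clique-coloring side of the argument is essentially a bookkeeping reduction.
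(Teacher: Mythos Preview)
The paper does not prove this theorem; it is quoted from \cite{ABE02}, and the only commentary is the Remark following Corollary~\ref{cor-RC0-no-interp}. Your outline matches that Remark: reduce \clique n k {k'} to the $k$-to-$k'$ weak pigeonhole principle via the ``composite'' variables $q_{\ell,i}\wedge r_{t,i}$, and then invoke a short refutation of the weak PHP. One point of difference: the Remark attributes the short weak-PHP refutation to Buss--Pitassi \cite{BP97} and stresses that it is already a \emph{resolution} proof (polynomial in $k$, with $k'$ exponentially smaller), whereas you invoke a Maciel--Pitassi--Woods style $\Res 2$ argument; either source suffices for the parameters here.

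One technical point you pass over quickly: your macros $C_{\ell,t}=\bigvee_i(q_{\ell,i}\wedge r_{t,i})$ are 2-DNFs, not 2-terms, so substituting them for the PHP variables in an arbitrary resolution proof does \emph{not} automatically produce a $\Res 2$ proof (a clause containing $\neg C_{\ell,t}$ would be a 2-CNF inside a disjunction). The actual \cite{ABE02} argument is organized so that every step after substitution remains a genuine 2-DNF --- essentially because the hole axioms $\neg C_{\ell,t}\vee\neg C_{\ell',t}$ unfold into the polynomially many 2-DNFs $\neg(q_{\ell,i}\wedge r_{t,i})\vee\neg(q_{\ell',j}\wedge r_{t,j})$, and the Buss--Pitassi proof only ever cuts on single (macro-)literals. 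You flag this as ``the hardest and most delicate part'', which is accurate, but the specific obstacle is this one, not the counting itself.
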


Thus, Proposition \ref{prop-R(lin)-sim-Res2} yields the following:
\begin{corollary}\label{cor-R(lin)-proof-clique}
Let $k,k'$ be as in Theorem \ref{thm-ABE02}.
Then \clique n k { k'} has R(lin) refutations of size polynomial in $n$.
\end{corollary}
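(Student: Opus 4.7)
The plan is to chain together the two results stated just above: Theorem \ref{thm-ABE02} supplies polynomial-size (in $n$) \Res 2 refutations of \clique n k {k'} for the specified parameters $k=\sqrt n$ and $k'=(\log n)^2/8\log\log n$, and Proposition \ref{prop-R(lin)-sim-Res2} shows that \RL0 polynomially simulates \Res 2. Composing the two immediately yields polynomial-size \RL0 refutations of \clique n k {k'}.

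First I would verify that the initial clauses of \clique n k {k'} (as presented in Definition \ref{def-clique-color}) match the premises needed to apply Proposition \ref{prop-R(lin)-sim-Res2}. Each such clause $C$ is a disjunction of literals and can be viewed as a degenerate $2$-DNF whose $2$-terms are single literals. Under this view, the translation $\widehat{C}$ from $2$-DNFs to disjunctions of linear equations (defined in the paragraph preceding Proposition \ref{prop-R(lin)-sim-Res2}) coincides with the standard clause translation $\widetilde{C}$ used in Definition \ref{def-clique-color}, since the $2$-term translation reduces to the literal translation $\widehat l$ when applied to a singleton. Hence the set of disjunctions of linear equations comprising \clique n k {k'} is precisely the image under $\widehat{\cdot}$ of the corresponding $2$-DNF premises refuted in Theorem \ref{thm-ABE02}.

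Next I would take $\pi$ to be a \Res 2 refutation of \clique n k {k'} of size polynomial in $n$, as guaranteed by Theorem \ref{thm-ABE02}. Applying Proposition \ref{prop-R(lin)-sim-Res2} (with $D$ the empty disjunction, so that $\widehat{D}$ is again the empty disjunction) then produces an \RL0 refutation of \clique n k {k'} whose size is polynomial in $|\pi|$, and therefore polynomial in $n$.

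There is essentially no genuine obstacle here: the corollary is a direct composition of two prior named results. The only mild care needed is the compatibility of translations (clauses to disjunctions of linear equations versus $2$-DNFs to disjunctions of linear equations), and this is immediate from the definitions. It is worth stressing, as noted at the start of Section \ref{sec-clique}, that the resulting \RL0 refutation makes no use of the counting capabilities developed in Section \ref{sec-basic-count-R(lin)}; it exploits only the fact that \RL0 trivially subsumes \Res 2.
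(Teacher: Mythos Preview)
Your proposal is correct and matches the paper's approach exactly: the corollary is stated immediately after Theorem \ref{thm-ABE02} with the phrase ``Thus, Proposition \ref{prop-R(lin)-sim-Res2} yields the following,'' so the paper's proof is nothing more than the composition you describe. Your additional check on the compatibility of the clause translation $\widetilde{C}$ with the $2$-DNF translation $\widehat{C}$ is a nice bit of care that the paper leaves implicit.
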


The following corollary is important
(we refer the reader to Section \ref{sec-mono-interp} in the Appendix for the
necessary relevant definitions concerning
the \emph{feasible monotone interpolation
property}
and to Section \ref{sec-interpo} for explanation and definitions concerning
the general [non-monotone] interpolation method).

\begin{corollary}\label{cor-RC0-no-interp}
\RL0 does not possess the feasible monotone interpolation property.
\end{corollary}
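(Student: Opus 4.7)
The plan is to derive a contradiction by combining the polynomial upper bound from Corollary \ref{cor-R(lin)-proof-clique} with a known super-polynomial monotone circuit lower bound on separating $k$-cliques from $k'$-colorable graphs, for the parameters $k=\sqrt n$ and $k'=(\log n)^2/8\log\log n$. This is the standard route by which \cite{ABE02} derives the analogous statement for \Res 2, and we would follow it here.

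First, I would verify that \clique n k {k'} is presented in the ``split'' form required by monotone interpolation: the edge indicator variables $p_{i,j}$ are shared between the two sides, the clique-witness variables $q_{\ell,i}$ appear only in clauses (\ref{eq-clique-q-pigeons})--(\ref{eq-clique-p-and-q-vars}), and the coloring-witness variables $r_{t,i}$ appear only in clauses (\ref{eq-clique-r-pigeons})--(\ref{eq-clique-final-contradictory-clauses}). Crucially, the shared variables $p_{i,j}$ occur with positive polarity in the $q$-side clauses (\ref{eq-clique-p-and-q-vars}) and with negative polarity in the $r$-side clauses (\ref{eq-clique-final-contradictory-clauses}), which is precisely the polarity condition under which the feasible monotone interpolation property, if it held for \RL0, would extract a monotone Boolean circuit computing an interpolant. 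This is exactly why the formulation in Definition \ref{def-clique-color} was designed with the $p_{i,j}$ variables made explicit.

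Next, suppose for contradiction that \RL0 has the feasible monotone interpolation property. Applying this property to the polynomial-size \RL0 refutation of \clique n k {k'} given by Corollary \ref{cor-R(lin)-proof-clique}, I would obtain a monotone Boolean circuit $C(\vec p)$ of size polynomial in $n$ with the following separation property: $C$ outputs $1$ on every (edge-indicator encoding of a) graph on $n$ vertices that contains a $k$-clique, and outputs $0$ on every complete $k'$-partite graph on $n$ vertices.

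Finally, I would invoke the classical monotone circuit lower bounds of Razborov and Alon--Boppana for the clique/coloring separation, in the refined form used in \cite{BPR97,ABE02}: for the parameters $k=\sqrt n$ and $k'=(\log n)^2/8\log\log n$, any monotone circuit separating $k$-cliqueable graphs from $k'$-colorable graphs has super-polynomial size. This contradicts the polynomial upper bound obtained in the previous step and yields the corollary. The only delicate point to double-check will be that the parameter window in which the \Res 2 (and hence R(lin)) upper bound of \cite{ABE02} holds coincides with the window in which the monotone lower bound applies; but this coincidence is precisely what makes the parameters in Theorem \ref{thm-ABE02} the ``right'' choice, so no genuine new difficulty arises beyond citing the two matching results.
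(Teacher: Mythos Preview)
Your proposal is correct and follows exactly the standard route the paper has in mind: the corollary is stated without proof immediately after Corollary~\ref{cor-R(lin)-proof-clique}, and the intended argument is precisely the one you spell out---combine the polynomial-size \RL0 refutation of \clique n k {k'} with the Alon--Boppana monotone lower bound (Theorem~\ref{thm-Alon-Boppana87}) for the same parameters to contradict the existence of a polynomial-size monotone interpolant. Your verification that the $\vec p$ variables appear with the right polarities in clauses (\ref{eq-clique-p-and-q-vars}) and (\ref{eq-clique-final-contradictory-clauses}) is exactly the monotonicity check required by Definition~\ref{def-mono-interp}, and your closing remark about the parameter windows coinciding is the only point that needs care, which is already handled by the choice in Theorem~\ref{thm-ABE02}.
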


\begin{remark}
The proof of \clique n k {k'} inside \Res {$2$}
demonstrated in \cite{ABE02}
 (and hence, also
the corresponding proof inside R(lin))
proceeds along the following lines.
First reduce \clique n k {k'} to the $k$ to $k'$ pigeonhole principle.
For the appropriate values of the parameters $k$ and $k'$
--- and specifically, for the values in Theorem \ref{thm-ABE02} ---
there is a short \emph{resolution} proof of the $k$ to $k'$ pigeonhole principle
(this was shown by Buss \& Pitassi \cite{BP97});
(this resolution proof is polynomial in the number of pigeons $k$,
but not in the number of holes $k'$, which is exponentially smaller than $k$).\footnotemark
~Therefore, in order to conclude the refutation of \clique n k {k'} inside \Res 2
(or inside R(lin)),
it suffices to simulate the short resolution refutation of the $k$ to $k'$
pigeonhole principle.\footnotetext{Whenever $k\ge 2k'$ the $k$ to $k'$ pigeonhole principle is
referred to as the \emph{weak pigeonhole principle}.}
It is important to emphasize this point:
After reducing, inside R(lin), \clique n k {k'} to the pigeonhole principle,
one simulates the \emph{resolution} refutation of the pigeonhole principle,
and this has nothing to do with the small-size \RZ0 refutations of the
pigeonhole principle
demonstrated in Section \ref{sec-php}.
This is because, the reduction (inside R(lin)) of
\clique n k {k'} to the $k$ to $k'$
pigeonhole principle, results in a \emph{substitution instance} of the
pigeonhole principle formulas; in other words, the reduction results in a
collection of disjunctions that are similar to the pigeonhole principle
disjunctions \emph{where each original pigeonhole principle variable is
substituted by some big formula}
(and, in particular, these disjunctions are not \RZ0-lines at all).
(Note that \RZ0 does not admit short proofs
of the clique-coloring formulas as we show in Section \ref{sec-lower-bounds}.)
\end{remark}

\section{Interpolation Results for R$^{\mathbf 0}$(lin)}\label{sec-interpo}

In this section we study the applicability of the
feasible (non-monotone) interpolation technique to \RZ0 refutations.
In particular, we show that \RZ0 admits a polynomial
(in terms of the \RZ0-proofs) upper bound on the (non-monotone)
circuit-size of interpolants.
In the next section we shall give a polynomial upper bound on the \emph{monotone}
circuit-size of interpolants,
but only in the case that the interpolant corresponds
to the clique-coloring formulas (whereas, in this section we are interested in
the general case; that is, upper bounding circuit-size of
interpolants corresponding to any formula [of the prescribed type; see below]).
First, we shortly describe the feasible interpolation method 
and explain how this method can be applied to obtain (sometime, conditional)
lower bounds on proof size.
Explicit usage of the interpolation method in proof complexity
goes back to \cite{Kra94-Lower}.

Let $A_i(\vec p, \vec q)$, $i\in I$,
and $B_j(\vec p, \vec r)$, $j\in J$, ($I$ and $J$ are sets of indices)
be a collection of formulas (for instance, a collection of
disjunctions of linear equations)
in the displayed variables only.
Denote by $A(\vec p,\vec q)$ the conjunction of all $A_i(\vec p, \vec q)$, $i\in I$,
and by $B(\vec p, \vec r)$, the conjunction of all $B_j(\vec p, \vec r)$, $j\in J$.
Assume that $\vec p,\vec q,\vec r$ are pairwise disjoint
sets of distinct variables,
and that there is no assignment that satisfies
both $A(\vec p,\vec q)$ and $B(\vec p, \vec r)$.
Fix an assignment $\vec \alpha$ to the variables in $\vec p$. The $\vec p$ variables
are the\emph{ only common variables} of the $A_i$'s and the $B_j$'s.
Therefore, either $A(\vec \alpha, \vec q)$ is unsatisfiable
or $B(\vec \alpha, \vec r)$ is unsatisfiable.

The interpolation technique transforms a refutation of
$A(\vec p,\vec q)\And B(\vec p,\vec r)$,
in some proof system, into a circuit (usually a Boolean circuit)
separating those assignments $\vec \alpha$ (for $\vec p$) for which
$A(\vec \alpha,\vec q)$ is unsatisfiable,
from those assignments $\vec \alpha$  for which  $B(\vec \alpha,\vec r)$ is
unsatisfiable (the two cases are not necessarily exclusive,
so if both cases hold for an assignment, the circuit
can output either that the first case holds or that the second case holds).
In other words, given a refutation of $A(\vec p,\vec q)\And B(\vec p,\vec r)$,
we construct a circuit $C(\vec p)$, called \emph{the interpolant}, such that
\begin{equation}\label{eq-interp-circuit-C}
	\begin{array}{lll}
		C(\vec \alpha ) = 1 \quad & \Longrightarrow &\quad A(\vec \alpha ,\vec q)
		\,\,\,{\rm{is}}\,{\rm{unsatisfiable}}, {\rm and}\\
		C(\vec \alpha ) = 0 \quad & \Longrightarrow  &\quad B(\vec \alpha ,\vec r)
		\,\,\,{\rm{is}}\,{\rm{unsatisfiable}}{\rm{.}}
	\end{array}
\end{equation}
(Note that if $U$ denotes the set of those assignments $\vec \alpha$
for which $A(\vec \alpha, \vec q)$ is \emph{satisfiable},
and $V$ denotes the set of those assignments $\vec \alpha$ for which
$B(\vec \alpha, \vec r)$ is \emph{satisfiable},
then $U$ and $V$ are disjoint
[since $A(\vec p,\vec q)\And B(\vec p,\vec r)$ is unsatisfiable],
and $C(\vec p)$ separates $U$ from $V$; see Definition
\ref{def-separating-circ} below.)

Assume that for a proof system $\mathcal P$ the transformation from refutations
of $A(\vec p,\vec q), B(\vec p,\vec r)$ into the corresponding interpolant
circuit $C(\vec p)$ results in a circuit whose size is
polynomial in the size of the refutation.
Then, an exponential lower bound on circuits for which
(\ref{eq-interp-circuit-C}) holds, implies an exponential lower bound on $\mathcal
P$-refutations of $A(\vec p,\vec q), B(\vec p,\vec r)$.
\QuadSpace

\subsection{Interpolation for Semantic Refutations}\label{sec-interp-sem-refs}

We now lay out the basic concepts needed to formally describe the feasible
interpolation technique.
We use the general notion of \emph{semantic refutations}
(which generalizes any standard propositional refutation system).
We shall use a close terminology to that in \cite{Kra97-Interpolation}.

\begin{definition}[Semantic refutation]\label{def-semantic-refs}
Let $N$ be a fixed natural number and let $E_1,\ldots, E_k\se\zo^N$,
where $\bigcap_{i=1}^k E_i=\emptyset$.
A \emph{semantic refutation} from $E_1,\ldots, E_k$
is a sequence $D_1,\ldots,D_m\se\zo^N$ with $D_m=\emptyset$ and such that
for every $i\in[m]$, $D_i$ is either one of the $E_j$'s or
is deduced from two previous $D_j,D_\ell$, $1\le j,\ell<i$, by the following
\emph{semantic inference rule}:

\begin{itemize}
\item From $A,B\se\zo^N$ deduce any $C$, such that $C\supseteq (A\cap B)$.
\end{itemize}
\end{definition}

Observe that any standard propositional refutation (with inference rules
that derive from at most two proof-lines, a third line) can be regarded as a
semantic refutation: just substitute each refutation-line by the set
of its satisfying assignments; and by the soundness of the inference rules applied
in the refutation, it is clear that each refutation-line (considered as the set
of assignments that satisfy it) is deduced by the semantic inference rule
from previous refutation-lines.

\begin{definition}[Separating circuit]\label{def-separating-circ}
Let $\mathcal U,\mathcal V\se\zo^n$, where $\mathcal U\cap \mathcal V=\emptyset$,
be two disjoint sets.
A Boolean circuit $C$ with $n$ input variables is said
to \emph{separate $\mathcal U$ from $\mathcal V$} if $C(\vec x)=1$ for
every $\vec x\in \mathcal U$,
and $C(\vec x)=0$ for every $\vec x\in \mathcal V$.
In this case we also say that
$\mathcal U$ and $\mathcal V$ are \emph{separated by $C$}.
\end{definition}

\begin{convention}
In what follows we sometime identify a Boolean formula with the set of its
satisfying assignments.
\end{convention}

\begin{notation}
For two (or more) binary strings $u,v\in\zo^*$, we write $(u,v)$ to denote
the concatenation of the $u$ with $v$ (where $v$ comes to the right of $u$,
obviously).
\end{notation}

Let $N=n+s+t$ be fixed from now on.
Let $A_1,\ldots,A_k\se\zo^{n+s}$ and let $B_1,\ldots, B_\ell\se\zo^{n+t}$.
Define the following two sets of assignments of length $n$ (formally,
$0,1$ strings of length $n$) that can be extended to satisfying assignments of
$A_1,\ldots,A_k$ and $B_1,\ldots, B_\ell$, respectively
(formally, those $0,1$ string of length $n+s$ and $n+t$,
that are contained in all $A_1,\ldots,A_k$ and $B_1,\ldots, B_\ell$, respectively):
\[\mathcal U_A:=\set{u\in\zo^n \;{\biggl|\biggr.}\; \exists q\in\zo^s,\,(u,q)\in
\bigcap_{i=1}^k A_i}\,,\]
\[\mathcal V_B:=\set{v\in\zo^n \;{\biggl|\biggr.}\; \exists r\in\zo^t,\,(v,r)\in
\bigcap_{i=1}^\ell B_i}\,.\]

\begin{definition}[\textbf{polynomial upper bounds on interpolants}]
\label{def-quasi-interp-UB}
Let $\mathcal P$ be a propositional refutation system.
Assume that $\vec p,\vec q,\vec r$ are pairwise disjoint
sets of distinct variables, where $\vec p$ has $n$ variables,
$\vec q$ has $s$ variables and $\vec r$ has $t$ variables.
Let $A_1(\vec p, \vec q),\ldots,A_k(\vec p, \vec q)$ and
$B_1(\vec p, \vec r),\ldots, B_\ell(\vec p, \vec r)$
be two collections of formulas with the displayed variables only.
Assume that for any such $A_1(\vec p, \vec q),\ldots,A_k(\vec p, \vec q)$
and $B_1(\vec p, \vec r),\ldots, B_\ell(\vec p, \vec r)$,
if there exists a $\mathcal P$-refutation of size $S$
for
$A_1(\vec p, \vec q)\And\cdots\And A_k(\vec p, \vec q)\And
B_1(\vec p, \vec r)\And\ldots\And B_\ell(\vec p, \vec r)$
then there exists
a Boolean circuit separating $\mathcal U_A$
from $\mathcal V_B$ of size polynomial in $S$.\footnotemark
~In this case we say that $\mathcal P$ has
a \emph{polynomial upper bound on interpolant circuits}.
\end{definition}
\footnotetext{Here $\mathcal U_A$ and $\mathcal V_B$ are defined as above,
by identifying the $A_i(\vec p, \vec q)$'s and the
$B_i(\vec p, \vec r)$'s with the sets of assignments that satisfy them.}

\subsubsection{The Communication Game Technique}
The \emph{feasible interpolation via communication game technique}
is based on transforming proofs into Boolean circuits, where the size of the resulting
circuit depends on the communication complexity of each proof-line.
This technique goes back to \cite{IPU94} and \cite{Razb95-Unprovability}
and was subsequently applied and extended in \cite{BPR97} and \cite{Kra97-Interpolation}
(\cite{IPU94} and \cite{BPR97} did not use explicitly the notion of interpolation of
tautologies or contradictions).
We shall employ the interpolation theorem of Kraj{\'i}\v{c}ek in \cite{Kra97-Interpolation},
that demonstrates how to transform a small semantic refutation with
each proof-line having low communication complexity into a small Boolean circuit
separating the corresponding sets.

The underlying idea of the interpolation via communication game technique
is that a (semantic) refutation,
where each proof-line is of small (that is, logarithmic) communication complexity,
can be transformed into an efficient communication protocol for the
\emph{Karchmer-Wigderson game} (following \cite{KW88}) for two players.
In the Karchmer-Wigderson game the first player knows some binary
string $u\in U$ and the second player knows
some different binary string $v\in V$,
where $U$ and $V$ are disjoint sets of strings.
The two players communicate by sending information bits to one another (following
a protocol previously agreed on).
The goal of the game is for the two players to decide on an index $i$ such that
the $i$th bit of $u$ is different from the $i$th bit of $v$.
An efficient Karchmer-Wigderson protocol (by which we mean a protocol that
requires the players to exchange at most a logarithmic number of bits in the
worst-case)
can then be transformed into a small circuit separating $U$ from $V$
(see Definition \ref{def-separating-circ}).
This efficient transformation from protocols for Karchmer-Wigderson games (described
in a certain way) into circuits,
was demonstrated by Razborov in \cite{Razb95-Unprovability}.
 So overall, given a semantic refutation with proof-lines of low communication
complexity, one can obtain a small circuit for separating the corresponding sets.

First, we need to define the concept of \emph{communication complexity}
in a suitable way for the interpolation theorem.

\begin{definition}[Communication complexity]\label{def-CC}
Let $N=n+s+t$ and $A\se\zo^N$. Let $u,v\in\zo^n$, $q^u\in\zo^s$, $r^v\in\zo^t$.
Denote by $u_i$, $v_i$ the $i$th bit of $u$, $v$, respectively, and let
$(u,q^u,r^v)$ and $(v,q^u,r^v)$ denote the concatenation of strings
$u,q^u,r^v$ and $v,q^u,r^v$, respectively.
Consider the following three tasks:

\begin{enumerate}
\item Decide whether $(u,q^u,r^v)\in A$;\label{it-task1}

\item Decide whether $(v,q^u,r^v)\in A$;\label{it-task2}

\item If one of the following holds:

{\rm(i)} $(u,q^u,r^v)\in A$ and $(v,q^u,r^v)\not\in A$; or

{\rm(ii)} $(u,q^u,r^v)\not\in A$ and $(v,q^u,r^v)\in A$,

then find an $i\in[n]$, such that $u_i\neq v_i$;\label{it-task3}
\end{enumerate}
Consider a game between two players, Player I and Player II,
where Player I knows $u\in\zo^n,q^u\in\zo^s$
and Player II knows $v\in\zo^n,r^v\in\zo^t$.
The two players communicate by exchanging bits of information between them
(following a protocol previously agreed on).
The \emph{communication complexity of $A$},
denoted $CC(A)$, is the minimal (over all protocols) number of bits that players
I and II need to exchange in the worst-case in solving each of Tasks
\ref{it-task1}, \ref{it-task2} and \ref{it-task3} above.\footnotemark
\footnotetext{In other words, $CC(A)$ is the minimal number $\zeta$, for which there exists
a protocol, such that for every input ($u,q^u$ to Player I and $v,r^v$ to Player II)
and every task (from Tasks \ref{it-task1}, \ref{it-task2} and \ref{it-task3}),
the players need to exchange at most $\zeta$ bits in order to solve the task.}
\end{definition}

For $A\se\zo^{n+s}$ define
\[\dot A:=\set{(a,b,c) \;{\bigl|\bigr.}\;  (a,b)\in A \mbox{ and } c\in\zo^t}\,,\]
where $a$ and $b$ range over $\zo^n$ and $\zo^s$, respectively.
Similarly, for $B\se\zo^{n+t}$ define
\[\dot B:=\set{(a,b,c) \;{\bigl|\bigr.}\; (a,c)\in B \mbox{ and } b\in\zo^t}\,,\]
where $a$ and $c$ range over $\zo^n$ and $\zo^t$, respectively.\QuadSpace

\begin{theorem}[\cite{Kra97-Interpolation}]\label{thm-CC-interpolation}
Let $A_1,\ldots,A_k\se\zo^{n+s}$ and $B_1,\ldots, B_\ell\se\zo^{n+t}$.
Let $D_1,\ldots, D_m$ be a semantic refutation from
$\dot A_1,\ldots,\dot A_k$ and $\dot B_1,\ldots, \dot B_\ell$.
Assume that $\,CC(D_i)\le \zeta$, for all $i\in[m]$.
Then, the sets $\mathcal U_A$ and $\mathcal V_B$ (as defined above)
can be separated by a Boolean circuit of size ~$(m+n)2^{O(\zeta)}$.
\end{theorem}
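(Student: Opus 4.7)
The plan is to follow Kraj{\'i}\v{c}ek's original strategy: turn the semantic refutation into an efficient communication protocol for an associated Karchmer--Wigderson game and then invoke Razborov's protocol-to-circuit transformation from~\cite{Razb95-Unprovability}. The game is set up as follows. Player~I holds an assignment $u \in \mathcal{U}_A$ together with an extension $q^u \in \zo^s$ witnessing $(u,q^u) \in \bigcap_{i=1}^k A_i$, and Player~II holds $v \in \mathcal{V}_B$ with an extension $r^v \in \zo^t$ witnessing $(v,r^v) \in \bigcap_{i=1}^\ell B_i$. Let $x_I := (u,q^u,r^v)$ and $x_{II} := (v,q^u,r^v)$. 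By construction $x_I \in \dot A_i$ for every $i$ and $x_{II} \in \dot B_j$ for every $j$; since $\bigcap_i \dot A_i \cap \bigcap_j \dot B_j = \emptyset$ (this is what makes the refutation sound) we must have $u \neq v$, and the players' goal is to output an index $i \in [n]$ with $u_i \neq v_i$.

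The protocol walks backwards through the refutation $D_1,\ldots,D_m$, maintaining the invariant that the current line $D_j$ satisfies both $x_I \notin D_j$ and $x_{II} \notin D_j$. The base case $j=m$ is trivial since $D_m = \emptyset$. For the inductive step, suppose $D_j$ was inferred from $D_a,D_b$ by the semantic rule, so $D_j \supseteq D_a \cap D_b$; then $x_I \notin D_a$ or $x_I \notin D_b$, and symmetrically for $x_{II}$. Each player uses Tasks~\ref{it-task1} and~\ref{it-task2} of Definition~\ref{def-CC} (on both $D_a$ and $D_b$) to decide their own memberships, exchanging $O(\zeta)$ bits; after one further round they either agree on a common premise where both of $x_I,x_{II}$ fail (in which case they recurse into it, preserving the invariant), or they discover a premise $D_c \in \{D_a,D_b\}$ on which $x_I$ and $x_{II}$ disagree. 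In the latter case the players invoke Task~\ref{it-task3} on $D_c$ at an additional cost of $O(\zeta)$ bits to extract the desired index $i$. The recursion cannot reach an axiom without first triggering Task~\ref{it-task3}, because any axiom $\dot A_i$ contains $x_I$ and any axiom $\dot B_j$ contains $x_{II}$, which would violate the invariant; hence the walk terminates within at most $m$ steps.

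It remains to convert this protocol into a separating circuit. This is done exactly via Razborov's construction \cite{Razb95-Unprovability}: each step of the walk is a $O(\zeta)$-bit sub-protocol computing three membership tests on a single line $D_j$, which expands into a Boolean gadget of size $2^{O(\zeta)}$ (corresponding to the $2^{O(\zeta)}$ possible transcripts), and the gadgets are wired together to simulate the backwards traversal; an additional $O(n)$ wires route the final index into the output. Summing over the $m$ proof-lines, plus the $O(n)$ input-selection overhead, yields total circuit size $(m+n)\,2^{O(\zeta)}$, separating $\mathcal{U}_A$ from $\mathcal{V}_B$.

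The main delicate point will be the accounting in the protocol-to-circuit conversion: one must ensure that the branching decisions at each of the $m$ steps translate into only an additive $2^{O(\zeta)}$ gates per step rather than a multiplicative $2^{O(\zeta m)}$ blow-up, and that the fall-through behavior at axioms produces a legitimate separating output on every input. Both points are precisely the features of Razborov's transformation, so once the invariant-preserving protocol is in place the circuit bound follows without further effort.
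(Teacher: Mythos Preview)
The paper does not actually prove this theorem: it is quoted from Kraj{\'i}\v{c}ek~\cite{Kra97-Interpolation} and used as a black box, with only a high-level prose description of the idea in the paragraph preceding the statement (refutation $\to$ Karchmer--Wigderson protocol $\to$ Razborov's protocol-to-circuit construction). Your sketch is a faithful and essentially correct expansion of that outline and of Kraj{\'i}\v{c}ek's original argument, so there is nothing to compare beyond noting that you have supplied detail the paper deliberately omitted.
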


In light of Theorem \ref{thm-CC-interpolation},
to demonstrate that a certain propositional refutation system $\mathcal P$
possesses a polynomial upper bound on interpolant circuits (see Definition
\ref{def-quasi-interp-UB})
it suffices to show that any proof-line of $\mathcal P$ induces a set of
assignments
with at most a logarithmic (in the number of variables) communication
complexity (Definition \ref{def-CC}).

\subsection{Polynomial Upper Bounds on Interpolants for R$^{\mathbf 0}$(lin)}
\label{sec-quasi-poly-interp-UB}

Here we apply Theorem \ref{thm-CC-interpolation} to show that \RZ0 has
polynomial upper bounds on its interpolant circuits.
Again, in what follows we sometime identify a disjunction of linear equations
with the set of its satisfying assignments.

\begin{theorem}\label{thm-RZ0-interpolation}
\RZ0 has a polynomial upper bounds on interpolant circuits
(Definition \ref{def-quasi-interp-UB}).
\end{theorem}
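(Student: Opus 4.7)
The plan is to invoke Kraj\'{\i}\v{c}ek's communication-game interpolation theorem (Theorem~\ref{thm-CC-interpolation}) applied to the semantic refutation underlying an \RZ0-proof of $A(\vec p,\vec q)\wedge B(\vec p,\vec r)$. Once I bound $CC(D_i)\le \zeta=O(\log N)$ for every proof-line $D_i$ of an $m$-line refutation (with $N=n+s+t$), that theorem delivers a Boolean circuit of size $(m+n)\cdot 2^{O(\zeta)}=\mathrm{poly}(m,n)$ separating $\mathcal U_A$ from $\mathcal V_B$, and since $m$ is controlled by the refutation size, Definition~\ref{def-quasi-interp-UB} is satisfied.

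The technical core is therefore to exhibit, for an arbitrary \RZ0-line $D$, one deterministic protocol of cost $O(\log N)$ settling all three tasks of Definition~\ref{def-CC}. Using Definition~\ref{def-RZ0-clause} I would decompose $D$ into a constant number $k$ of sub-disjunctions $D_1,\ldots,D_k$, each either a (translation of a) clause or a disjunction $\bigvee_{i\in I_j}(\vec a^{(j)}\cdot\vec x=\ell^{(j)}_i)$ sharing one constant-coefficient linear form. Partitioning $\vec x=(\vec p,\vec q,\vec r)$ splits each such form as $\vec a_p^{(j)}\cdot\vec p+\vec a_q^{(j)}\cdot\vec q+\vec a_r^{(j)}\cdot\vec r$, and since the coefficients are constants each partial sum takes integer values of magnitude $O(N)$, hence fits in $O(\log N)$ bits. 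Tasks~\ref{it-task1} and~\ref{it-task2} are then handled sub-disjunction by sub-disjunction: one player transmits her partial sum, the other compares it against the publicly known free-term list, with clause sub-disjunctions costing only a single bit per player. Disjoining over the $k=O(1)$ sub-disjunctions preserves the $O(\log N)$ total cost.

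The step I expect to be the main obstacle is Task~\ref{it-task3}: given the promise that exactly one of $(u,q^u,r^v)$, $(v,q^u,r^v)$ lies in $D$, return an index $i\in[n]$ with $u_i\ne v_i$. My plan is for the players to first identify, by reusing the Task~\ref{it-task1}/\ref{it-task2} protocol, a distinguished sub-disjunction $D_j$ whose truth-value flips between the two inputs; since $\vec q$ and $\vec r$ are held fixed across the two inputs, the flip must originate in the $\vec p$-part. For clause-type $D_j$ some $\vec p$-literal is satisfied by exactly one of $u,v$, and binary search over the publicly listed $\vec p$-literals locates an index in $O(\log n)$ bits. For linear-form-type $D_j$ one has $\vec a_p^{(j)}\cdot u\ne\vec a_p^{(j)}\cdot v$, and the players binary-search on $S=\{i:a_{p,i}^{(j)}\ne 0\}$, at each level comparing the partial sums $\sum_{i\in S_L}a_{p,i}^{(j)}u_i$ and $\sum_{i\in S_L}a_{p,i}^{(j)}v_i$ over a half $S_L\subseteq S$ and recursing on a half where they disagree (at least one exists since the totals disagree). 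The delicate point is to keep the total cost at $O(\log N)$ rather than $O(\log^2 N)$; the saving exploits the constant-size coefficient alphabet so that each comparison round can be reduced to an $O(1)$-bit exchange against a publicly computed pivot, amortising the $O(\log N)$-bit prefix information across the whole binary-search tree. Feeding the resulting $\zeta=O(\log N)$ into Theorem~\ref{thm-CC-interpolation} then yields a separating circuit polynomial in the \RZ0-refutation size, as required.
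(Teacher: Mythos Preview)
Your overall strategy --- reduce to Theorem~\ref{thm-CC-interpolation} by bounding $CC(D)\le O(\log N)$ for each \RZ0-line $D$, and handle the three tasks of Definition~\ref{def-CC} sub-disjunction by sub-disjunction --- is exactly the paper's route, and your treatment of Tasks~\ref{it-task1}, \ref{it-task2}, and the clause case of Task~\ref{it-task3} is essentially the same as Lemma~\ref{lem-low-CC}.

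The gap is in Task~\ref{it-task3} for a linear-form sub-disjunction. Your binary search, as described, costs $O(\log N)$ bits \emph{per round} (one player must transmit a partial sum of magnitude $O(N)$ so the other can compare), over $O(\log N)$ rounds, giving $O(\log^2 N)$ total. That would feed $\zeta=O(\log^2 N)$ into Theorem~\ref{thm-CC-interpolation} and yield only a quasi-polynomial bound $(m+n)\cdot 2^{O(\log^2 N)}$, not the polynomial bound the theorem asserts. Your sentence about ``reducing each comparison round to an $O(1)$-bit exchange against a publicly computed pivot, amortising the $O(\log N)$-bit prefix information'' does not explain what the pivot is or how the amortisation works; deterministically deciding whether two $O(\log N)$-bit partial sums agree already requires $\Omega(\log N)$ bits in the worst case, so a naive per-round saving is impossible. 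Getting an $O(\log N)$ protocol here is genuinely nontrivial --- it is equivalent to showing the Boolean function $[\vec a\cdot\vec x=\ell]$ has $O(\log N)$-depth circuits (essentially that symmetric/threshold functions lie in $\mathrm{NC}^1$).

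The paper sidesteps the explicit protocol entirely: once the players have identified the offending linear form (constant many bits, since an \RZ0-line has only constantly many distinct forms), it invokes the Karchmer--Wigderson equality $\mathrm{dp}(f)=\mathrm{CC}'(f)$ and the fact that $[\vec a\cdot\vec p+\vec b\cdot\vec q+\vec c\cdot\vec r=\ell]$ has Boolean circuits of depth $O(\log N)$ (via $O(\log N)$-depth threshold circuits, using the constant-coefficient assumption to reduce to the all-ones case). That immediately gives the required $O(\log N)$ protocol for the ``find a differing coordinate'' sub-task, without designing one by hand. You should replace your binary-search sketch with this appeal to Karchmer--Wigderson and the $\mathrm{NC}^1$ depth bound.
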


According to the paragraph after Theorem \ref{thm-CC-interpolation},
all we need in order to establish Theorem \ref{thm-RZ0-interpolation} is
the following lemma:

\begin{lemma}\label{lem-low-CC}
Let $D$ be an \RZ0-line with $N$ variables and let $\widetilde D$ be
the set of assignments that satisfy $D$.\footnotemark
Then, $CC(\widetilde D) \le O(\log N)$.
\end{lemma}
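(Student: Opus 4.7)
The plan is to design, for any \RZ0-line $D$ on $N$ variables, a protocol that solves each of the three tasks of Definition~\ref{def-CC} using only $O(\log N)$ bits. By Definition~\ref{def-RZ0-clause}, I may decompose $D = D_1\Or\cdots\Or D_k$ with $k=O(1)$, where each component $D_t$ is either (a) a disjunction $\BigOr_{i\in I_t}(\vec a^{(t)}\cd\vec x = \ell_i^{(t)})$ whose common linear form has coefficients bounded by an absolute constant $c$ and whose terms differ only in their free terms $\ell_i^{(t)}$, or (b) a translated clause $\BigOr_{j\in J_t}(x_j = b_j)$. Partition the $N$ variables of $D$ into the common block $\vec p$ of $n$ variables, Player~I's $\vec q$ of $s$ variables, and Player~II's $\vec r$ of $t$ variables, and split each linear form as $\vec a^{(t)}\cd\vec x = V_{\vec p}^{(t)}+V_{\vec q}^{(t)}+V_{\vec r}^{(t)}$. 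The central observation is that each component admits an $O(\log N)$-bit summary determining its satisfaction: for type~(a), each of $V_{\vec p}^{(t)},V_{\vec q}^{(t)},V_{\vec r}^{(t)}$ is an integer in $[-cN,cN]$ and the component holds iff their total lies in the globally known set $L_t=\set{\ell_i^{(t)} : i\in I_t}$; for type~(b) each player can locally evaluate, in one bit, whether any of her private disjuncts is satisfied.

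Given this, Task~1 is solved by having Player~II transmit $V_{\vec r}^{(t)}(r^v)$ for every type-(a) component (each $O(\log N)$ bits) together with her 1-bit clause indicators; Player~I already knows $V_{\vec p}^{(t)}(u)+V_{\vec q}^{(t)}(q^u)$ and her own clause evaluations, so she can locally decide the satisfaction of every component and of $D$, returning a single bit. Task~2 is handled symmetrically, with Player~I sending $V_{\vec q}^{(t)}(q^u)$ and her clause indicators while Player~II decides. Both tasks cost $O(\log N)$ bits since $k=O(1)$, and by exchanging $O(1)$ additional bits the players make every component's satisfaction status common knowledge.

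For Task~3, I reuse this exchange and have the players deterministically select a distinguishing component $t^*$---one satisfied by exactly one of the two assignments. If $t^*$ is type~(a), then $V_{\vec p}^{(t^*)}(u)\neq V_{\vec p}^{(t^*)}(v)$ (otherwise the two totals, sharing $V_{\vec q}^{(t^*)}(q^u)$ and $V_{\vec r}^{(t^*)}(r^v)$, would coincide, contradicting the distinguishability of $t^*$), so the support $T\se[n]$ of the common-variable part of $\vec a^{(t^*)}$ contains a coordinate where $u$ and $v$ differ; if $t^*$ is type~(b), the clause restricted to the common-variable indices $T\se[n]$ evaluates differently on $u$ and $v$, again locating a differing bit in a known subset $T$. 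The players then bisect $T$: in each round they partition the current subset into halves, exchange a short summary of the partial sum (or partial clause evaluation) on each half, and recurse into the half on which the distinguishing-difference invariant---nonzero weighted-sum difference for type~(a), clause-value mismatch for type~(b)---persists. Since this invariant cannot vanish on both halves simultaneously, the bisection converges in $O(\log n)$ rounds to a singleton $\set{i}$ with $u_i\neq v_i$.

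The main obstacle I anticipate is a tight bit-accounting for Task~3 so that the bisection total stays within $O(\log N)$ rather than the naive $O(\log^2 N)$. For type-(b) components each bisection round needs only $O(1)$ bits, so the total is $O(\log n)$, already within budget. For type-(a), the bounded-coefficient structure keeps every partial sum within $[-cN,cN]$, and I expect that---by fixing a canonical bisection schedule in advance and reusing the full values already implicitly communicated in the Task-1/2 phase---each bisection round can be reduced to a single comparison bit, yielding an $O(\log N)$ aggregate cost. Combined with the $O(\log N)$ already spent on Tasks~1 and~2, this delivers the required $O(\log N)$ bound and hence, by Theorem~\ref{thm-CC-interpolation}, Theorem~\ref{thm-RZ0-interpolation}.
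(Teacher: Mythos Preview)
Your treatment of Tasks~1 and~2, and of the clause case in Task~3, is essentially the same as the paper's. The genuine gap is in Task~3 for the linear-form components (your type~(a)).

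You correctly identify that naive bisection costs $O(\log N)$ bits per round for $O(\log N)$ rounds, giving $O(\log^2 N)$ total. You then assert that ``by fixing a canonical bisection schedule in advance and reusing the full values already implicitly communicated in the Task-1/2 phase\ldots each bisection round can be reduced to a single comparison bit,'' but this is not justified and, as stated, does not work. First, nothing about the $\vec p$-parts $V_{\vec p}^{(t^*)}(u)$ or $V_{\vec p}^{(t^*)}(v)$ was communicated in your Task-1/2 phase (you only exchanged the $\vec q$- and $\vec r$-contributions and the clause bits), so there is nothing to reuse for the bisection on the common variables. Second, deciding which half carries a nonzero difference requires testing equality of two integers in $[-cN,cN]$ held by different players; deterministic equality on $O(\log N)$-bit inputs needs $\Omega(\log N)$ bits, not $O(1)$. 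Without a further idea you are stuck at $O(\log^2 N)$, which via Theorem~\ref{thm-CC-interpolation} yields only a quasipolynomial interpolant bound, not the polynomial bound of Theorem~\ref{thm-RZ0-interpolation}.

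The paper closes exactly this gap by abandoning explicit bisection and invoking the Karchmer--Wigderson equivalence ${\rm dp}(f) = {\rm CC}'(f)$: once the players agree on the relevant linear form (a constant number of bits, since an \RZ0-line has only constantly many linear forms) and on the target value $\ell'$ (an additional $O(\log N)$ bits), they are playing the KW game for the Boolean function $g(\vec p)=[\vec a\cdot\vec p=\ell']$. Since $g$ has an $O(\log N)$-depth Boolean circuit (it is a threshold-type predicate, and constant coefficients reduce to the all-ones case by duplicating variables), the KW game has an $O(\log N)$-bit protocol. This black-box step is the missing ingredient in your argument; if you prefer an explicit protocol, you would effectively have to reconstruct a log-depth circuit for threshold, which is considerably more than the one-line handwave you currently give.
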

\footnotetext{The notation $\widetilde D$ has nothing to do with
the same notation used in Section \ref{sec-systems-definitions}.}
\begin{proof}
Let $N=n+s+t$ (and so $\widetilde D\in\zo^{n+s+t}$).
For the sake of convenience we shall assume that the $N$ variables in $D$ are
partitioned into (pairwise disjoint)
three groups $\vec p := (p_1\ldots, p_n)$,\;
$\vec q:=(q_1,\ldots,q_s)$ and $\vec r:=(r_1,\ldots, r_t)$.
Let $u,v\in\zo^n$, $q^u\in\zo^s$, $r^v\in\zo^t$.
Assume that Player I knows $u, q^u$ and Player II knows
$v, r^v$.

By the definition of an \RZ0-line (see Definition \ref{def-RZ0-clause})
we can partition the disjunction $D$ into a \emph{constant number} of disjuncts,
where one disjunct is a (possibly empty, translation of a) clause
in the $\vec p,\vec q,\vec r$ variables
(see Section \ref{sec-clauses-of-linear-equations}),
and all other disjuncts have the following form:
\begin{equation}\label{eq-interp-MCC-disjunct}
\BigOr_{i\in I}
 \left(
   \vec a \cd \vec p + \vec b \cd \vec q + \vec c \cd \vec r
   =\ell_i
 \right)\,,
\end{equation}
where $I$ is (an unbounded) set of indices,
$\ell_i$ are integer numbers, for all $i\in I$, and
$\vec a, \vec b, \vec c$ denote vectors of $n,s$ and $t$ constant coefficients,
respectively.%

Let us denote the (translation of the) clause from $D$
in the $\vec p,\vec q,\vec r$ variables by
\[
	P\Or Q\Or R\,,
\]
where $P$, $Q$ and $R$ denote the (translated) sub-clauses consisting
of the $\vec p$, $\vec q$ and $\vec r$ variables, respectively.
\HalfSpace

We need to show that by exchanging $O(\log N)$ bits, the players
can solve each of Tasks \ref{it-task1}, \ref{it-task2} and
\ref{it-task3} from Definition \ref{def-CC}, correctly.

\para{Task \ref{it-task1}:}
The players need to decide whether $(u,q^u,r^v)\in \widetilde D$.
Player II, who knows $r^v$, computes the numbers $\vec c \cdot\, r^v$,
for every $\vec c$ pertaining to every disjunct of the form
shown in Equation (\ref{eq-interp-MCC-disjunct}) above.
Then, Player II sends the (binary representation of) these numbers
to Player I.
Since there are only a constantly many such numbers and the coefficients in every
$\vec c$ are also constants,
this amounts to $O(\log t)\le O(\log N)$ bits that Player II sends to Player I.
Player II also computes the truth value of the sub-clause $R$, and sends this (single-bit)
value to Player I.

Now, it is easy to see that Player I has sufficient data to
compute by herself/himself whether $(u,q^u,r^v)\in \widetilde D$
(Player I can then send a single bit informing Player II
whether $(u,q^u,r^v)\in \widetilde D$).

\para{Task \ref{it-task2}:} This is analogous to Task \ref{it-task1}.

\para{Task \ref{it-task3}:}

Assume that $(u,q^u,r^v)\in \widetilde D$ and $(v,q^u,r^v)\not\in \widetilde D$
(the case $(u,q^u,r^v)\not\in \widetilde D$ and $(v,q^u,r^v)\in \widetilde D$
is analogous).

The first rounds of the protocol are completely similar to that described in
Task \ref{it-task1} above:
Player II, who knows $r^v$, computes the numbers $\vec c \cdot\, r^v$,
for every $\vec c$ pertaining to every disjunct of the form
shown in Equation (\ref{eq-interp-MCC-disjunct})
above. Then, Player II sends the (binary representation of) these numbers
to Player I.
Player II also computes the truth value of the sub-clause $R$,
and sends this (single-bit) value to Player I.
Again, this amounts to $O(\log N)$ bits
that Player II sends to Player I.

By assumption (that $(u,q^u,r^v)\in \widetilde D$ and
$(v,q^u,r^v)\not\in \widetilde D$) the players need to deal
only with the following two cases:\smallskip

\emph{Case 1:} The assignment $(u,q^u,r^v)$ satisfies the
clause $P\Or Q\Or R$ while
$(v,q^u,r^v)$ falsifies $P\Or Q\Or R$.
Thus, it must be that $\vec u$ satisfies the sub-clause $P$ while $\vec v$
falsifies $P$. This means that for any $i\in[n]$ such that $u_i$
sets to $1$ a literal in $P$ (there ought to exist at least one such $i$),
it must be that $u_i\neq v_i$.
Therefore, all that
Player I needs to do is to send the (binary representation of)
index $i$ to Player II. (This amounts to $O(\log N)$ bits
that Player I sends to Player II.)\smallskip

\emph{Case 2:}
There is some linear equation
\begin{equation}\label{eq-CC-satisfied-eqn}
\vec a \cd \vec p + \vec b \cd \vec q + \vec c \cd \vec r   =\ell
\end{equation}
in $D$, such that
$ \vec a \cd u + \vec b \cd q^u + \vec c \cd r^v =\ell$.
Note that (by assumption that $(v,q^u,r^v)\not\in \widetilde D$)
it must also hold that:
$ \vec a \cd v + \vec b \cd q^u + \vec c \cd r^v \ne \ell$
(and so there is an $i\in[n]$, such that $u_i\ne v_i$).
Player I can find linear equation (\ref{eq-CC-satisfied-eqn}),
as he/she already received from
Player II all the possible
values of $\vec c\cd \vec r$ (for all possible $\vec c$\,'s in $D$).

Recall that the left hand side of a linear equation $\vec d \cd \vec x=\ell$
is called the \emph{linear form} of the equation.
By the definition of an \RZ0-line there are only constant many distinct
linear forms in $D$.
Since both players know these linear forms, we can assume that each linear form
has some index associated to it by both players.
 Player I sends to Player II the index of the linear form
$\vec a \cd \vec p + \vec b \cd \vec q + \vec c \cd \vec r$ from
(\ref{eq-CC-satisfied-eqn}) in $D$.
Since there are only \emph{constantly} many such linear forms
in $D$, it takes only constant number of bits to send this index.

Now both players need to apply a protocol for finding
an $i\in [n]$ such that $u_i\ne v_i$, where
$\vec a \cd \vec u + \vec b \cd q^u + \vec c \cd r^v = \ell$ and
$\vec a \cd \vec v + \vec b \cd q^u + \vec c \cd r^v \neq \ell$.
Thus, it remains only to prove the following claim:
\begin{claim}
There is a communication protocol in which Player I and Player II need at most
$O(\log N)$ bits of communication in order to find an $i\in [n]$ such
that $u_i\ne v_i$ (under the above conditions).
\end{claim}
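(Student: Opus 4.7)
My plan is to split the task into a ``localize the disagreement'' phase followed by a Karchmer--Wigderson game on a threshold function. First, subtracting the second given condition from the first cancels the $\vec b\cd q^u$ and $\vec c\cd r^v$ parts and yields $\vec a\cd u\neq\vec a\cd v$. Because $\vec a$ has only constant integer entries, both sides are integers of absolute value $O(n)$, and in particular $u\neq v$, so a coordinate witnessing the disagreement certainly exists; the task is to find one efficiently.

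Next I would exploit the bounded range of the $a_i$ to isolate the disagreement to a single ``color class'' of coordinates. Let $C$ bound $|a_i|$ and, for each $c\in\{-C,\ldots,C\}\setminus\{0\}$, put $I_c:=\{i\in[n]\such a_i=c\}$, $U_c:=\sum_{i\in I_c}u_i$, and $V_c:=\sum_{i\in I_c}v_i$. The identity $\vec a\cd u-\vec a\cd v=\sum_{c\neq 0}c(U_c-V_c)\neq 0$ forces at least one $c^*$ with $U_{c^*}\neq V_{c^*}$. Player I transmits all of the $U_c$'s --- constantly many integers each in $[0,n]$, so $O(\log n)$ bits in total --- and Player II replies with (the constantly-many bits encoding) a choice of such $c^*$ together with $V_{c^*}$. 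Now both players publicly know the color class $I_{c^*}$ containing a witness, together with a threshold $k$ (say $k:=\min(U_{c^*},V_{c^*})+1$) strictly separating $\sum_{I_{c^*}}u_i$ from $\sum_{I_{c^*}}v_i$.

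The residual subtask is then exactly the monotone Karchmer--Wigderson game for the threshold function $T_k$ on the variables indexed by $I_{c^*}$: one player holds a $1$-input, the other a $0$-input, and they must find a coordinate on which the inputs differ. By the Karchmer--Wigderson theorem \cite{KW88}, the deterministic communication complexity of this game equals the monotone formula depth of $T_k$, which is $O(\log n)$ by Valiant's majority formula construction (or via an AKS-style sorting network). This accounts for the final $O(\log n)$ bits, so the overall protocol uses $O(\log n)=O(\log N)$ bits, as required.

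The main obstacle, as I see it, is securing $O(\log n)$ rather than $O((\log n)^2)$ bits in the last phase. A naive ``halve and recurse'' protocol --- Player I sends the partial sum $\sum_{i\in S_1}u_i$ over some current left half $S_1$, Player II checks whether the sum-discrepancy invariant still holds there, and so on --- costs $\Theta(\log n)$ bits per round across $\Theta(\log n)$ rounds, giving only $O((\log n)^2)$ total; substituted into Theorem~\ref{thm-CC-interpolation} this would yield merely quasi-polynomial interpolant circuits and so would not establish Theorem~\ref{thm-RZ0-interpolation}. The argument therefore genuinely relies on the deeper combinatorial fact that threshold functions admit $O(\log n)$-depth monotone formulas, together with the Karchmer--Wigderson correspondence translating that formula depth into deterministic communication.
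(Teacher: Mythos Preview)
Your proof is correct and rests on the same core ingredient as the paper's argument --- the Karchmer--Wigderson correspondence together with the $O(\log n)$-depth formulas for threshold functions --- but your decomposition is organized somewhat differently. The paper applies (non-monotone) KW directly to the Boolean function $f(\vec p,\vec q,\vec r)=[\,\vec a\cd\vec p+\vec b\cd\vec q+\vec c\cd\vec r=\ell\,]$, arguing that $f$ has an $O(\log N)$-depth circuit by reducing constant coefficients to all-$1$ coefficients via variable duplication and then invoking the standard log-depth circuits for threshold. You instead first cancel the $\vec q,\vec r$ contributions to reduce to the condition $\vec a\cd u\neq\vec a\cd v$ on the $\vec p$-coordinates alone, then use the color-class partition to isolate a single coefficient value $c^*$ with $\sum_{I_{c^*}}u_i\neq\sum_{I_{c^*}}v_i$, and finally invoke \emph{monotone} KW on a single threshold $T_k$ restricted to $I_{c^*}$. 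Your route is a bit more explicit about each player actually holding a complete input to the relevant KW game (in the paper's formulation Player~I does not know $r^v$ and Player~II does not know $q^u$, so strictly speaking one must first reduce to a function of $\vec p$ only, exactly as you do). Your observation that a naive halving search gives only $O(\log^2 n)$ bits --- and hence that the log-depth threshold fact is genuinely needed --- is correct and matches the paper's implicit reasoning.
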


\begin{proofclaim}
We invoke the well-known connection between Boolean circuit-depth and communication
complexity.
Let $f:\zo^N\to\zo$ be a Boolean function. Denote by ${\rm dp}(f)$
the minimal depth of a Boolean circuit computing $f$.
Consider a game between two players:
Player I knows some $\vec x\in\zo^N$ and
Player II knows some other $\vec y\in\zo^N$, such that $f(\vec x)=1$ while $f(\vec y)=0$.
The goal of the game is to find an $i\in[N]$ such that $x_i\neq y_i$.
Denote by ${\rm CC'}(f)$ the minimal number of bits needed for the two players to
communicate (in the worst case\footnote{Over all inputs $\vec x,\vec y$ such that
$f(\vec x)=1$ and $f(\vec y)=0$.}) in order to
solve this game.\footnote{The measure $CC'$ is basically the
same as $CC$ defined earlier.}
Then, for any function $f$
it is known  that ${\rm dp}(f) =
{\rm CC'}(f)$ (see \cite{KW88}).

Therefore, to conclude the proof of the
claim it is enough to establish that the function
$f:\zo^N\to\zo$ that receives the input variables $\vec p,\vec q,\vec r$ and
computes the truth value of
$\vec a \cd \vec p + \vec b \cd \vec q + \vec c \cd \vec r = \ell$ has Boolean
circuit of depth $O(\log N)$.
In case all the coefficients in $\vec a,\vec b,\vec c$ are $1$,
it is easy to show\footnote{Using the known $O(\log N)$-depth Boolean
circuits for the threshold functions.} that there is a Boolean
circuit of depth $O(\log N)$ that computes the function $f$.
In the case that
the coefficients in $\vec a,\vec b,\vec c$ are all constants,
it is easy to show, by a reduction to the case where all coefficients
are $1$,\footnote{For instance, consider the
simple case where we have only a single variable.
That is, let $c$ be a constant and assume that
we wish to construct a circuit that computes $c\cd x =\ell$,
for some integer $\ell$.
Then, we take a circuit that computes the
function $f:\zo^c\to\zo$ that outputs the truth value of
$y_1+\ldots+y_c=\ell$ (thus, in $f$ all coefficients are $1$'s);
and to compute $c\cd x =\ell$ we only have to substitute each $y_i$ in the
circuit with the variable $x$.}
   that there is a Boolean
circuit of depth $O(\log N)$ that computes the function $f$.
We omit the details.
\end{proofclaim}
\end{proof}

%
\section{Size Lower Bounds}\label{sec-lower-bounds}

In this section we establish an exponential-size lower bound on \RZ0 refutations
of the clique-coloring formulas.
We shall employ the theorem of Bonet, Pitassi \&
Raz  in \cite{BPR97} that provides exponential-size lower bounds
for any semantic refutation of the
clique-coloring formulas, having low communication complexity in each
refutation-line.

First we recall the strong lower bound obtained by Alon \& Boppana \cite{AB87}
(improving over \cite{Razb85}; see also \cite{And85})
for the (monotone) \emph{clique separator} functions, defined as follows
(a function $f:\zo^n\to\zo$ is called \emph{monotone} if for all
$\alpha \in\zo^n$, $\alpha'\ge\alpha$ implies  $f(\alpha')\ge f(\alpha)$):

\begin{definition}[Clique separator]
A monotone boolean function $Q^n_{k,k'}$ is called a \emph{clique separator} if it
interprets its inputs as the edges of a graph on $n$ vertices,
and outputs $1$ on every input representing a $k$-clique, and $0$ on every
input representing a complete $k'$-partite graph (see Section \ref{sec-clique}).
\end{definition}

Recall that a \emph{monotone Boolean circuit} is a circuit that uses
only monotone Boolean gates (for instance, only the fan-in two gates $\And,\Or$).

\begin{theorem}[\cite{AB87}]\label{thm-Alon-Boppana87}
Let $k,k'$ be integers such that $3\le k'< k$ and $k\sqrt{k'}\le n/(8\log n)$,
then every monotone Boolean circuit 
that computes a clique separator function $Q^n_{k,k'}$ requires size at least
$$\frac{1}{8}\left(\frac{n}{4k\sqrt{k'}\log n}\right)^{\left(\sqrt{k'}+1\right)/2}\,.$$
\end{theorem}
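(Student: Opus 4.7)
The plan is to apply Razborov's approximation method, as refined by Alon and Boppana, to the monotone circuit $C$ purported to compute $Q^n_{k,k'}$. First I would fix two families of ``test inputs'': a positive family $\mathcal{P}$ consisting of indicator vectors of the edge sets of all $k$-cliques on $[n]$, and a negative family $\mathcal{N}$ consisting of indicator vectors of the edges of all complete $k'$-partite graphs on $[n]$ induced by ordered partitions of $[n]$ into $k'$ non-empty color classes. Since $C$ computes the clique separator exactly, it outputs $1$ on every element of $\mathcal{P}$ and $0$ on every element of $\mathcal{N}$.

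Next I would introduce the class $\mathcal{A}$ of \emph{approximators}. Each approximator is indexed by a family $\mathcal{F}$ of at most $m$ vertex subsets of $[n]$ of size at most $\ell$, and it evaluates to $1$ on an input graph $G$ iff $G$ contains a clique on at least one set in $\mathcal{F}$. I would then define two operations on $\mathcal{A}$ that approximate $\vee$ and $\wedge$: the approximate OR of $\mathcal{F}_1,\mathcal{F}_2$ is the union $\mathcal{F}_1\cup\mathcal{F}_2$, followed by a \emph{plucking step} that uses the sunflower lemma to replace any sunflower of $m+1$ sets by its core, iterating until at most $m$ sets remain; the approximate AND is the family of pairwise unions $\{S_1\cup S_2 : S_1\in\mathcal{F}_1,\,S_2\in\mathcal{F}_2\}$, followed (if necessary) by truncating each set to size $\ell$ and then plucking. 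These operations are designed so that plucking never removes positive tests accepted by the original approximator, and each gate replacement introduces a controlled number of new negative mistakes.

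Then I would bottom-up replace every gate of $C$ by its corresponding approximator, and show inductively that the total number of positive tests erroneously rejected by the final approximator plus the total number of negative tests erroneously accepted is at most $s\cdot \Delta$, where $s$ is the circuit size and $\Delta$ is a per-gate error bound computable from $n,k,k',\ell,m$ by a sunflower-style estimate. In parallel I would prove a \emph{no-single-approximator} lemma: any single approximator from $\mathcal{A}$ either rejects a substantial fraction of $\mathcal{P}$ or accepts a substantial fraction of $\mathcal{N}$. The lower half of this dichotomy is essentially a counting argument showing that a family of $m$ small sets cannot simultaneously hit all $k$-cliques while missing most multipartite graphs.

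Combining the two halves forces $s \ge (\text{positive error}+\text{negative error})/\Delta$, which, after optimizing $\ell$ and $m$ against the parameters $k,k',n$, yields the stated bound $\frac{1}{8}\bigl(n/(4k\sqrt{k'}\log n)\bigr)^{(\sqrt{k'}+1)/2}$. The main obstacle is the parameter tuning: one must choose $\ell$ roughly proportional to $\sqrt{k'}$ and $m$ roughly $\ell!$ so that the sunflower lemma makes each plucking cheap in negative mistakes while the counting lemma still forces a large fraction of errors on any single approximator. Getting the sharp exponent $(\sqrt{k'}+1)/2$ in the bound, rather than a weaker polynomial-in-$k'$ exponent, is precisely where the Alon--Boppana refinement over Razborov's original argument enters, through a more delicate analysis of the sunflower cores at the $\wedge$-gates.
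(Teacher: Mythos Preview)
The paper does not prove this theorem at all: it is stated with a citation to Alon and Boppana \cite{AB87} and then used as a black box in Section~\ref{sec-lower-bounds} (together with Theorem~\ref{thm-BPR97} and Lemma~\ref{lem-lowCC-for-BPR97}) to derive the \RZ0 lower bounds. So there is no ``paper's own proof'' to compare your attempt against.

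That said, your outline is a faithful sketch of the actual Alon--Boppana argument from \cite{AB87}: the approximation method with approximators given by small families of small vertex sets, the sunflower-based plucking to bound per-gate error at $\lor$ and $\land$ gates, the dichotomy lemma for any single approximator, and the final parameter optimization with $\ell \approx \sqrt{k'}$. If you were asked to supply a proof, this is the right plan; within the present paper, however, the theorem is simply quoted.
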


For the next theorem,
we need a slightly different (and weaker) version of communication complexity,
than that in Definition \ref{def-CC}.

\begin{definition}[Communication complexity (second definition)]
\label{def-CC-BPR97}
Let $X$ denote $n$ Boolean variables $x_1,\ldots,x_n$,
and let $S_1, S_2$ be a partition of $X$ into two disjoint sets of variables.
The communication complexity of a Boolean function $f:\zo^n\to\zo$ is the number of bits needed to
be exchanged by two players, one knowing the values given to the
$S_1$ variables and the other knowing the values given
to $S_2$ variables, in the worst-case,
over all possible partitions $S_1$ and $S_2$.
\end{definition}

\begin{theorem}[\cite{BPR97}]\label{thm-BPR97}
Every semantic refutation of \clique n k {k'} (for $k'<k$)
with $m$ refutation-lines and where each refutation-line
(considered as a the characteristic function of the line)
has communication complexity (as in Definition \ref{def-CC-BPR97}) $\zeta$,
can be transformed into a monotone circuit of size $m\cd 2^{3\zeta+1}$
that computes a separating function $Q^n_{k,k'}$.
\end{theorem}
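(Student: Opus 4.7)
The plan is to apply the monotone interpolation via communication game paradigm --- going back to Razborov's monotone circuit lower bound method and the Karchmer--Wigderson characterization of monotone circuit size by communication protocols --- to transform the given semantic refutation into a monotone circuit of the claimed size computing a clique separator $Q^n_{k,k'}$. First, I would recast the problem as a monotone Karchmer--Wigderson game: Player I receives a graph $u$ on $n$ vertices containing a $k$-clique together with a clique-witness $q^u$ (so that $(u,q^u)$ satisfies clauses (\ref{eq-clique-q-pigeons})--(\ref{eq-clique-p-and-q-vars}) of Definition \ref{def-clique-color}); Player II receives a complete $k'$-partite graph $v$ together with a $k'$-coloring $r^v$ (so that $(v,r^v)$ satisfies clauses (\ref{eq-clique-r-pigeons})--(\ref{eq-clique-final-contradictory-clauses})). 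Their joint goal is to locate an edge $(i,j)$ with $u_{i,j}=1>0=v_{i,j}$, which must exist because any $k$-clique forces two of its vertices to share a color under any purported $k'$-coloring (using $k'<k$ and pigeonhole), and two same-colored vertices are non-adjacent in any complete $k'$-partite graph.

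The refutation $D_1,\ldots,D_m$ is then used to guide the protocol. I would track the ``mixed'' assignment $\sigma:=(u,q^u,r^v)$. Axioms (\ref{eq-clique-q-pigeons})--(\ref{eq-clique-holes-q}) and (\ref{eq-clique-r-pigeons})--(\ref{eq-clique-holes-r}) depend on a single block of variables only and are satisfied by the part of $\sigma$ supplied by the appropriate player, while axiom (\ref{eq-clique-p-and-q-vars}) is satisfied by $(u,q^u)$ and hence by $\sigma$. Thus the only axioms that $\sigma$ can falsify are those of type (\ref{eq-clique-final-contradictory-clauses}), and a falsification at indices $(i,j,t)$ certifies exactly what we want: $r^v_{t,i}=r^v_{t,j}=1$, which together with Player II's satisfaction of (\ref{eq-clique-final-contradictory-clauses}) forces $v_{i,j}=0$, while $q^u_{\ell,i}=q^u_{\ell',j}=1$ for some $\ell\ne\ell'$ forces $u_{i,j}=1$ via axiom (\ref{eq-clique-p-and-q-vars}). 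The players walk backwards from $D_m=\emptyset$ (trivially failed by $\sigma$) maintaining the invariant $\sigma\notin D$ at the current line $D$. Whenever $D\supseteq A\cap B$ by the semantic rule, the invariant yields $\sigma\notin A$ or $\sigma\notin B$; the players decide which by running the assumed $\zeta$-bit membership protocols for $A$ and $B$ under the natural partition separating Player I's variables from Player II's, spending $O(\zeta)$ bits per step. After at most $m$ such steps the walk lands on an axiom of type (\ref{eq-clique-final-contradictory-clauses}), which delivers the target edge.

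The last step is to convert this protocol into a monotone circuit using the standard Karchmer--Wigderson / Razborov correspondence between protocols and monotone circuits. Each refutation-line $D_i$ is replaced by a gadget that enumerates the at most $2^{3\zeta+1}$ possible communication transcripts arising when probing $D_i$ and its two parents, branches accordingly, and realizes Player I's bits as $\vee$-gates and Player II's bits as $\wedge$-gates over the edge variables of a single shared input graph; composed along the $m$-node DAG of the refutation, this yields a monotone circuit of size $m\cd 2^{3\zeta+1}$ separating $k$-clique graphs from complete $k'$-partite graphs, i.e., computing $Q^n_{k,k'}$. The main obstacle will be the monotonicity bookkeeping: one must verify that each of Player I's bits depends monotone-increasingly on the edge variables and each of Player II's bits depends monotone-decreasingly, so that the gadgets assemble into a genuinely monotone circuit acting on a single graph input. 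This rests on the structural asymmetry that $q^u$ certifies an increasing event (presence of a clique) while $r^v$ certifies a decreasing event (absence of edges between same-colored vertices); extracting this asymmetry uniformly from an arbitrary semantic refutation, and fitting each step within the $2^{3\zeta+1}$ gate budget, is the delicate technical core of the argument.
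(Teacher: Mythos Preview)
The paper does not give its own proof of this theorem: it is quoted verbatim as a result of Bonet, Pitassi \& Raz \cite{BPR97} and used as a black box (see the sentence preceding the theorem and the absence of any proof environment after it). Your proposal is a faithful outline of the original \cite{BPR97} argument --- walking the mixed assignment $(u,q^u,r^v)$ backward through the semantic refutation, using the $\zeta$-bit protocols to decide which premise fails at each step, landing on a type-(\ref{eq-clique-final-contradictory-clauses}) axiom to extract a monotone Karchmer--Wigderson answer, and then invoking the Razborov/Karchmer--Wigderson translation to a monotone circuit --- so there is nothing to compare against in this paper beyond confirming that your sketch matches the cited source.
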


In light of Theorem \ref{thm-Alon-Boppana87},
in order to be able to apply Theorem \ref{thm-BPR97} to \RZ0,
and arrive at an exponential-size lower bound for \RZ0 refutations of the
clique-coloring formulas,
it suffices to show that \RZ0 proof-lines have logarithmic communication
complexity:

\begin{lemma}\label{lem-lowCC-for-BPR97}
Let $D$ be an \RZ0-line with $N$ variables.
Then, the communication complexity (as in Definition \ref{def-CC-BPR97})
of $D$ is at most $O(\log N)$ (where $D$ is identified here with
the characteristic function of $D$). 
\end{lemma}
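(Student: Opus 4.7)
The plan is to adapt the three-player-partition protocol from Lemma \ref{lem-low-CC} to an arbitrary two-player partition $S_1,S_2$ of the $N$ variables, since Definition \ref{def-CC-BPR97} takes the worst case over all such partitions. The proof will exploit the structural fact from Definition \ref{def-RZ0-clause} that an \RZ0-line $D$ is the disjunction of a \emph{constant} number $k$ of sub-disjunctions $D_1\CommaDots D_k$, where each $D_j$ is either (i) a (translation of a) clause in the $N$ variables, or (ii) a disjunction of linear equations sharing a common linear form $\vec a^{(j)}\cd \vec x$ with \emph{constant} integer coefficients, differing only in their free-terms.

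First, I would describe the protocol sub-disjunction by sub-disjunction. For a clause $D_j$, each player locally checks whether any literal of $D_j$ over her own half of the variables is satisfied, and then sends a single bit to the other; so $O(1)$ bits of communication suffice to evaluate $D_j$. For a linear-form sub-disjunction $D_j=\BigOr_{i\in I_j}(\vec a^{(j)}\cd \vec x=\ell_i^{(j)})$, Player~I computes her partial sum $\sigma_1^{(j)}:=\sum_{i\in S_1}a_i^{(j)}x_i$ and transmits it to Player~II. The crucial observation is that the coefficients in $\vec a^{(j)}$ are bounded by some constant $c$ independent of $N$, so $|\sigma_1^{(j)}|\le cN$ and thus $\sigma_1^{(j)}$ can be encoded with $O(\log N)$ bits. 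Once Player~II receives $\sigma_1^{(j)}$, she knows the full value of $\vec a^{(j)}\cd \vec x$ and can locally test whether it matches any of the (possibly unboundedly many) free-terms $\ell_i^{(j)}$, thereby evaluating $D_j$ herself.

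After evaluating each $D_j$ in this way, one additional bit per sub-disjunction is broadcast to decide whether $D=\BigOr_{j=1}^kD_j$ holds. The total communication is $k\cd O(\log N)+O(k)=O(\log N)$ because $k$ is a constant, which establishes the bound under any partition $(S_1,S_2)$ and hence bounds the quantity in Definition \ref{def-CC-BPR97}.

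The only genuinely load-bearing point (rather than a real obstacle) is the observation that constant coefficients in the linear forms make each player's partial evaluation an integer of magnitude $O(N)$, hence expressible in $O(\log N)$ bits; if the coefficients were allowed to grow with $N$, the partial sums could require polynomially many bits and the argument would collapse. This is precisely why the restriction on coefficients built into the definition of an \RZ0-line is essential, and it is the reason the communication-complexity lower-bound criterion of \cite{BPR97} (combined with Theorem \ref{thm-Alon-Boppana87}) can subsequently be invoked to yield the exponential-size lower bound on \RZ0 refutations of the clique-coloring formulas.
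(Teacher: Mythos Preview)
Your proposal is correct and follows essentially the same approach as the paper: the paper's proof is a one-line pointer back to the Task~\ref{it-task1} protocol in Lemma~\ref{lem-low-CC}, and you have faithfully unpacked that protocol for an arbitrary two-player partition $(S_1,S_2)$, using exactly the same idea that each of the constantly many linear forms contributes a partial sum of magnitude $O(N)$ (hence $O(\log N)$ bits) and each clause contributes $O(1)$ bits. One minor remark: your closing comment slightly overstates the fragility of the argument, since polynomially bounded (not just constant) coefficients would still yield $O(\log N)$-bit partial sums---a point the paper itself notes in the Comment following Corollary~\ref{cor-LBforSmallParameters}.
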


\begin{proof}
The proof is similar to the proof of Lemma \ref{lem-low-CC} for solving Task
\ref{it-task1} (and the analogous Task \ref{it-task2}) in Definition \ref{def-CC}.
\end{proof}

By direct calculations we obtain the following lower bound from Theorems \ref{thm-Alon-Boppana87},
\ref{thm-BPR97} and Lemma \ref{lem-lowCC-for-BPR97}:

\begin{corollary}
Let $k$ be an integer such that $3\le k'=k-1$ and assume that
$\frac{1}{2}\cd n/(8\log n) \le k\sqrt{k}\le n/(8\log n)$.
Then, for all $\eps<1/3$,
every \RZ0 refutation of \clique n k {k'} is of size at least
$2^{\Omega(n^\eps)}$.
\end{corollary}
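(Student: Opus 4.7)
The plan is to combine the three preceding ingredients: the monotone circuit lower bound of Alon--Boppana (Theorem \ref{thm-Alon-Boppana87}), the transformation of semantic refutations with low communication complexity into monotone circuits (Theorem \ref{thm-BPR97}), and the fact that \RZ0-lines have logarithmic communication complexity (Lemma \ref{lem-lowCC-for-BPR97}). Suppose, toward a contradiction, that $\pi$ is an \RZ0 refutation of \clique n k {k'} of size $s$ containing $m$ proof-lines; clearly $m\le s$. First I would observe that the total number of variables $N$ in \clique n k {k'} is polynomial in $n$ (of order $n^2$, counting the $p$-, $q$- and $r$-variables), so by Lemma \ref{lem-lowCC-for-BPR97} every proof-line of $\pi$ has communication complexity $\zeta=O(\log N)=O(\log n)$. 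Since an \RZ0 refutation is in particular a semantic refutation, Theorem \ref{thm-BPR97} then produces a monotone Boolean circuit of size at most $m\cd 2^{3\zeta+1}=m\cd\mbox{poly}(n)$ computing a clique separator function $Q^n_{k,k'}$.

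Next I would turn Theorem \ref{thm-Alon-Boppana87} into an explicit exponential lower bound under the stated parameter regime. From $k'=k-1\ge 3$ and $\frac{1}{2}\cd n/(8\log n)\le k\sqrt{k}\le n/(8\log n)$ one gets $k=\Theta\!\left((n/\log n)^{2/3}\right)$, and in particular $\sqrt{k'}=\Theta(\sqrt{k})=\Theta\!\left((n/\log n)^{1/3}\right)$. Moreover, the upper bound on $k\sqrt{k'}$ gives $n/(4k\sqrt{k'}\log n)\ge 2$, so Theorem \ref{thm-Alon-Boppana87} implies any monotone circuit computing $Q^n_{k,k'}$ has size at least
\[
\tfrac{1}{8}\cd 2^{(\sqrt{k'}+1)/2}\;=\;2^{\,\Omega\!\left((n/\log n)^{1/3}\right)}\;=\;2^{\Omega(n^\eps)}
\]
for every $\eps<1/3$ (absorbing the polylogarithmic loss into the exponent).

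Finally, combining the two bounds yields $m\cd\mbox{poly}(n)\ge 2^{\Omega(n^\eps)}$, whence $s\ge m\ge 2^{\Omega(n^\eps)}$ (again absorbing the polynomial factor by relaxing $\eps$). The overall logical structure is straightforward once the three ingredients are assembled; the only point that requires a bit of care --- and is arguably the ``hard'' step of the plan --- is the quantitative verification that the parameter regime $k'=k-1$ with $k\sqrt{k}=\Theta(n/\log n)$ feeds correctly into the Alon--Boppana bound to yield exponent $(n/\log n)^{1/3}$, which is what limits $\eps$ to the range $\eps<1/3$. Everything else is bookkeeping: bounding $\zeta$ through Lemma \ref{lem-lowCC-for-BPR97}, observing that $m\le s$, and absorbing polynomial factors into the exponential lower bound.
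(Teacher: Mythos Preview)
Your proposal is correct and follows exactly the approach the paper takes: the paper simply says the corollary follows ``by direct calculations'' from Theorems \ref{thm-Alon-Boppana87}, \ref{thm-BPR97} and Lemma \ref{lem-lowCC-for-BPR97}, and you have carried out precisely those calculations, correctly verifying that the parameter regime yields base at least $2$ in the Alon--Boppana bound and exponent $\Theta((n/\log n)^{1/3})$, which gives $2^{\Omega(n^\eps)}$ for every $\eps<1/3$.
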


When considering the parameters of Theorem \ref{thm-ABE02},
we obtain a super-polynomial separation between \RZ0 refutations and
R(lin) refutations, as described below.

From Theorems \ref{thm-Alon-Boppana87},\ref{thm-BPR97} and Lemma \ref{lem-lowCC-for-BPR97} we have (by
direct calculations):

\begin{corollary}\label{cor-LBforSmallParameters}
Let $k=\sqrt{n}$ and $k'=(\log n )^2/8 \log\log n$.
Then, every \RZ0 refutation of \clique n k {k'} has size at least
~$n^{\Omega\left(\frac{\log n}{\sqrt{\log\log n}}\right)}$.
\end{corollary}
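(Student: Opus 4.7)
The plan is a direct chaining of the three results cited in the paragraph preceding the corollary: Lemma \ref{lem-lowCC-for-BPR97}, Theorem \ref{thm-BPR97}, and Theorem \ref{thm-Alon-Boppana87}. Suppose $\pi$ is an \RZ0 refutation of \clique n k {k'} with $m$ proof-lines. The total number of underlying variables $N$ (the $p_{i,j}$, $q_{\ell,i}$ and $r_{\ell,i}$) is $O(n^2)$, hence $\log N = O(\log n)$. By Lemma \ref{lem-lowCC-for-BPR97}, each proof-line, viewed as its set of satisfying $0,1$ assignments, has communication complexity (in the sense of Definition \ref{def-CC-BPR97}) at most $\zeta = O(\log n)$.

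Next I would view $\pi$ as a semantic refutation of \clique n k {k'} by replacing each proof-line by its set of satisfying Boolean assignments; this is legitimate because the \RZ0-inference rules are sound, so each such set is deducible from the previous ones by the semantic rule of Definition \ref{def-semantic-refs}. Applying Theorem \ref{thm-BPR97} then produces a monotone Boolean circuit of size $m\cdot 2^{3\zeta+1} = m\cdot n^{O(1)}$ computing a clique separator function $Q^n_{k,k'}$.

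It remains to verify that for $k=\sqrt n$ and $k'=(\log n)^2/8\log\log n$ the hypotheses of Theorem \ref{thm-Alon-Boppana87} are met, and to evaluate the resulting monotone lower bound. One checks that $3\le k'<k$ and that $k\sqrt{k'}=\Theta(\sqrt n\cdot\log n/\sqrt{\log\log n})\le n/(8\log n)$ for all sufficiently large $n$. Under these conditions the base quantity in Theorem \ref{thm-Alon-Boppana87} satisfies
\[
\frac{n}{4k\sqrt{k'}\log n} \;=\; \Theta\!\left(\frac{\sqrt n\,\sqrt{\log\log n}}{(\log n)^2}\right) \;=\; n^{\Omega(1)},
\]
while the exponent is $(\sqrt{k'}+1)/2 = \Theta(\log n/\sqrt{\log\log n})$. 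Hence every monotone circuit computing $Q^n_{k,k'}$ has size at least $n^{\Omega(\log n/\sqrt{\log\log n})}$.

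Combining, $m\cdot n^{O(1)} \ge n^{\Omega(\log n/\sqrt{\log\log n})}$, and so $m \ge n^{\Omega(\log n/\sqrt{\log\log n})}$, which is the claimed bound. The only step requiring attention is the parameter bookkeeping in Alon--Boppana: the communication-to-circuit blow-up contributes only a polynomial factor $n^{O(1)}$, which is comfortably absorbed by the super-polynomial Alon--Boppana lower bound, so there is no tightness to worry about. The rest of the argument is essentially mechanical, which is why the paper records it simply as ``(by direct calculations)''.
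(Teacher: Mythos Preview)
Your proposal is correct and is precisely the argument the paper has in mind: chain Lemma \ref{lem-lowCC-for-BPR97} (logarithmic communication complexity of \RZ0-lines), Theorem \ref{thm-BPR97} (semantic refutation to monotone circuit with blow-up $m\cdot 2^{O(\zeta)}=m\cdot n^{O(1)}$), and Theorem \ref{thm-Alon-Boppana87} (monotone lower bound), then carry out the parameter verification and arithmetic. The paper records exactly this as ``by direct calculations'' from these three results, and your parameter check is accurate.
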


By Corollary \ref{cor-R(lin)-proof-clique},
R(lin) admits polynomial-size in $n$ refutations of \clique n k {k'}
under the parameters in Corollary \ref{cor-LBforSmallParameters}.
Thus we obtain the following separation result:
\begin{corollary}
{\rm R(lin)} is super-polynomially stronger than \RZ0.
\end{corollary}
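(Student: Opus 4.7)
The plan is to exhibit a single family of CNF formulas witnessing the super-polynomial separation, using the two corollaries just established. The only two things that must be verified, per Definition \ref{defnSim}, are (a) that \RL0 polynomially simulates \RZ0, and (b) that \RZ0 does not polynomially simulate \RL0.

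Part (a) is essentially free: every \RZ0-line is, by Definition \ref{def-RZ0-clause}, a legitimate disjunction of linear equations, and the three inference rules of \RZ0 (Resolution, Weakening, Simplification) are a literal restriction of the \RL0 rules from Definition \ref{def-R(lin)}. Hence any \RZ0 refutation of a CNF $F$ is also an \RL0 refutation of $F$ of the same size, so the simulation has constant blow-up.

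For part (b) I would take as separating family the clique-coloring tautologies \clique n k {k'} with the \cite{ABE02} parameters $k=\sqrt n$ and $k'=(\log n)^2/(8\log\log n)$. By Corollary \ref{cor-R(lin)-proof-clique} (which in turn invokes Proposition \ref{prop-R(lin)-sim-Res2} together with Theorem \ref{thm-ABE02}), \RL0 admits refutations of \clique n k {k'} of size polynomial in $n$ for this choice of $k,k'$. On the other hand, Corollary \ref{cor-LBforSmallParameters} gives a lower bound of
\[
n^{\Omega\left(\frac{\log n}{\sqrt{\log\log n}}\right)}
\]
on the size of any \RZ0 refutation of \clique n k {k'} under the same parameters. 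Since $\log n/\sqrt{\log\log n}\to\infty$, this bound exceeds $n^c$ for every fixed constant $c$ once $n$ is large enough, so it is super-polynomial in $n$.

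Combining (a) and (b) yields, by Definition \ref{defnSim}, that \RL0 is strictly stronger than \RZ0, and the quantitative gap is in fact super-polynomial, as required. There is no real obstacle here, because all the heavy lifting has already been done: the upper bound reduces to the \Res 2 simulation of Proposition \ref{prop-R(lin)-sim-Res2} together with \cite{ABE02}, and the lower bound was obtained in Corollary \ref{cor-LBforSmallParameters} by plugging the low-communication-complexity Lemma \ref{lem-lowCC-for-BPR97} into the monotone interpolation theorem of \cite{BPR97} and the clique-separator lower bound of \cite{AB87}. The argument therefore amounts to little more than citing the two corollaries and noting that $n^{\Omega(\log n/\sqrt{\log\log n})}$ dominates every polynomial.
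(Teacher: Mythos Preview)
Your proposal is correct and follows essentially the same approach as the paper: the separation is witnessed by \clique n k {k'} with $k=\sqrt n$ and $k'=(\log n)^2/(8\log\log n)$, invoking Corollary~\ref{cor-R(lin)-proof-clique} for the polynomial-size \RL0 upper bound and Corollary~\ref{cor-LBforSmallParameters} for the super-polynomial \RZ0 lower bound. Your explicit verification of part~(a) and the remark that $n^{\Omega(\log n/\sqrt{\log\log n})}$ dominates every polynomial are helpful elaborations, but the core argument is identical to the paper's one-line justification preceding the corollary.
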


\begin{comment}
Note that we do not need to assume that the coefficients in \RZ0-lines
are constants for the lower bound argument.
If the coefficients in \RZ0-lines are only
polynomially bounded (in the number of variables) then the same lower bound
as in Corollary \ref{cor-LBforSmallParameters} also applies.
This is because \RZ0-lines in which coefficients are polynomially bounded
integers,
still have low (that is, logarithmic)
communication complexity (as in Definition \ref{def-CC-BPR97}).
\end{comment}

\section{Applications to Multilinear Proofs}\label{sec-multilinear}

In this section we arrive at one of the main benefits
of the work we have done so far;
Namely, applying results on resolution over linear equations
in order to obtain new results for multilinear proof systems.
Subsection \ref{sec-fMC-background} that follows,
contains definitions, sufficient for the current paper,
concerning the notion of multilinear proofs introduced in
\cite{RT05}. 

\subsection{Background on Algebraic and Multilinear Proofs}
\label{sec-fMC-background}

\subsubsection{Arithmetic and Multilinear Formulas}\label{secDefnArit-circ}

\begin{definition}[Arithmetic formula]
Fix a field $\F$. An \emph{arithmetic formula} is a tree,
with edges directed from the leaves to the root,
and with unbounded (finite) fan-in.
Every leaf of the tree (namely, a node of fan-in $0$)
is labeled with either an input variable or a field element.
A field element can also label an edge of the tree.
Every other node of the tree is labeled with either $+$ or $\times$
(in the first case the node is a \emph{plus gate} and in the second case
a \emph{product gate}).
We assume that there is only one node of out-degree zero, called the \emph{root}.
The \emph{size }of an arithmetic formula $F$ is the total number of nodes in its graph
and is denoted by $|F|$.
An arithmetic formula computes a polynomial in the ring of polynomials
$\mathbb{F}[x_1,\ldots,x_n]$ in the following way.
A leaf just computes the input variable or field element that labels it.
A field element that labels an edge means that the polynomial computed at its
tail (namely, the node where the edge is directed from) is multiplied by this field element.
A plus gate computes the sum of polynomials computed by the tails of all incoming edges.
A product gate computes the product of the polynomials computed by the tails of
all incoming edges. (Subtraction is obtained using the constant $-1$.) The
output of the formula is the polynomial computed by the root.
The \emph{depth} of a formula $F$ is the maximal number of edges in a path from
a leaf to the root of $F$.
\end{definition}
We say that an arithmetic formula has a \emph{plus (resp., product) gate at
the root} if the root of the formula is labeled with a plus (resp., product) gate.
\QuadSpace

A polynomial is \emph{multilinear} if in each of its monomials the power of
every input variable is at most one.
\begin{definition}[Multilinear formula]\label{def-ml-fmla}
An arithmetic formula is \emph{a multilinear formula} (or equivalently,
\emph{multilinear arithmetic formula}) if the polynomial computed by
\emph{each} gate of the formula is multilinear (as a formal polynomial, that is,
as an element of $\,\mathbb{F}[x_1,\ldots,x_n]$).
\end{definition}

An additional definition we shall need is the following linear operator, called the
\emph{multilinearization operator}:

\begin{definition}[Multilinearization operator]\label{def-ml-operator}
Given a field $\;\mathbb{F}$ and a polynomial $q\in
\mathbb{F}[x_1,\ldots,x_n]$, we denote by $\ML{q}$ the unique multilinear
polynomial equal to $q$ modulo the ideal generated by all the polynomials
$\,x_i^2-x_i$, for all variables $x_i$.
\end{definition}

For example, if $q=x_1^2x_2+a x_4^3\,$ (for some $a\in \mathbb{F}$)
then $\,\ML{q}=x_1x_2+a x_4\,$. \bigs

The simulation of \RZ0 by multilinear proofs will rely
heavily on the fact that multilinear symmetric polynomials
have small depth-$3$ multilinear formulas over fields of characteristic $0$
(see \cite{SW01} for a proof of this fact).
To this end we define precisely the concept of symmetric polynomials.

A \emph{renaming} of the variables $x_1,\ldots,x_n$ is a permutation
$\sigma\in S_n$ (the symmetric group on $[n]$) such that $x_i$
is mapped to $x_{\sigma(i)}$ for every $1\le i\le n$.

\begin{definition}[Symmetric polynomial]\label{def-symetric-polys}
Given a set of variables $X=\set{x_1,\ldots,x_n}$,
a \emph{symmetric polynomial} $f$ over $X$ is a polynomial in
(all the variables of)
$X$ such that renaming of variables does not change the polynomial (as a formal
polynomial).
\end{definition}

\subsubsection{Polynomial Calculus with Resolution}

Here we define the PCR proof system, introduced by Alekhnovich \emph{et al.}
in \cite{ABSRW99}.

\begin{definition}[Polynomial Calculus with Resolution (PCR)]\label{def-PCR}
Let $\mathbb{F}$ be some fixed field and let $Q:= \set{Q_1,\ldots,Q_m}$ be
a collection of multivariate polynomials from the ring of polynomials
$\F[x_1,\ldots, x_n, \bar{x}_1, \ldots, \bar{x}_n]$.
The variables $\bar{x}_1, \ldots, \bar{x}_n$ are treated as new formal variables.
 Call the set of polynomials $x^2-x$,
for $x\in\set{x_1,\ldots, x_n, \bar{x}_1, \ldots, \bar{x}_n}$,
plus the polynomials $x_i+\bar{x}_i-1$,
for all $1\le i\le n$, the set of \emph{Boolean axioms of PCR}.
 A \emph{PCR proof} from  $Q$ of a polynomial $g$ is a finite sequence
$\pi =(p_1 ,...,p_\ell)$ of multivariate polynomials from
$\mathbb{F}[x_1,\ldots, x_n, \bar{x}_1, \ldots, \bar{x}_n]$
(each polynomial $p_i$ is interpreted as the polynomial equation $p_i=0$),
where $p_\ell=g$ and for each $i\in[\ell]$,
either $p_i = Q_j\,$ for some $j\in[m]$, or $p_i$ is a Boolean axiom,
or $p_i$ was deduced from $p_j,p_k\,$, where $j,k<i$,
by one of the following inference rules:
\begin{description}
  \item[Product]

  From $p$ deduce $x_i\cdot p$\,, for some variable $x_i\,$;

  From $p$ deduce $\bar x_i\cdot p$\,, for some variable $\bar x_i\,$;

  \item[Addition] From $p$ and $q$ deduce $\alpha \cd p + \beta \cd
  q$, for some $\alpha, \beta \in \F$.
\end{description}
A \emph{PCR refutation of} $Q$ is a proof of $\;1$ (which is interpreted as $1=0$)
from $Q$.
The \emph{number of steps} in a PCR proof is the number of proof-lines
in it (that is, $\l$ in the case of $\pi$ above).
\end{definition}
Note that the Boolean axioms of PCR have only $0,1$ solutions,
where $\bx_i=0$ if $x_i=1$ and $\bx_i=1$ if $x_i=0$.

\subsubsection{Multilinear Proof Systems}\label{sec-fMC-definitions}

In \cite{RT05} the authors introduced a natural (semantic)
algebraic proof system that
operates with multilinear arithmetic formulas
denoted fMC (which stands for \emph{formula multilinear calculus}),
defined as follows:

\begin{definition}[Formula Multilinear Calculus (fMC)]
\label{def-fMC}
 Fix  a field $\mathbb{F}$ and let $Q:= \set{Q_1,\ldots,Q_m}$ be a collection of multilinear
polynomials from $\F[x_1,\ldots, x_n, \bar{x}_1, \ldots, \bar{x}_n]$ (the variables $\bar{x}_1,
\ldots, \bar{x}_n$ are treated as formal variables). Call the set of polynomials consisting of
$x_i+\bar{x}_i-1$ and $x_i\cdot\bar{x}_i\,$ for $\,1\le i\le n\,$, the \emph{Boolean axioms of
fMC}.
 An \emph{\,fMC proof from  $Q$ of a polynomial $g$} is a finite sequence $\pi =(p_1
,...,p_\ell)$ of multilinear polynomials from $\F[x_1,\ldots, x_n, \bar{x}_1, \ldots,
\bar{x}_n]\,$, such that $p_\ell=g$ and for each $i\in[\ell]$, either $\,p_i = Q_j\,$ for some
$j\in[m]$, or $p_i$ is a Boolean axiom of \emph{fMC}, or $p_i$ was deduced by one of the
following inference rules using $p_j,p_k$ for $j,k<i$:
\begin{description}
  \item[\quad Product] from $p$ deduce $q\cdot p$\,,
  for some polynomial $q\in \F[x_1,\ldots, x_n, \bar{x}_1, \ldots, \bar{x}_n]$
  \emph{such that $p\cdot q$ is multilinear};

  \item[\quad Addition]   from $p$, $q$ deduce $\alpha\cdot p + \beta\cdot
  q$, for some $\alpha, \beta \in \mathbb{F}$.
\end{description}
All the polynomials in an \emph{fMC} proof are represented as multilinear formulas.
 (A polynomial $p_i$ in an \emph{fMC} proof is interpreted as the polynomial equation $p_i=0$.)
An \emph{fMC refutation of} $Q$ is a proof of $\;1$ (which is interpreted as $1=0$) from $Q$.
The \emph{size} of an {\rm fMC} proof $\pi$ is  defined as the total sum of all the formula sizes in
$\pi$ and is denoted by $|\pi|$.
\end{definition}

Note that the Boolean axioms have only $0,1$ solutions, where $\bx_i=0$ if $x_i=1$ and $\bx_i=1$
if $x_i=0$, for each $1\le i\le n\,$.

\begin{definition}[Depth-$k$ Formula Multilinear Calculus (depth-$k$ fMC)]
\label{def-depth-k-fMC}
For a natural number $k$, \emph{depth-$k$ fMC} denotes a restriction of the \emph{fMC} proof
system, in which proofs consist of multilinear polynomials from $\mathbb{F}[x_1,\ldots, x_n,
\bar{x}_1, \ldots, \bar{x}_n]$ represented as multilinear formulas of depth at most $k$.
\end{definition}

\subsection{From R(lin) Proofs to PCR Proofs}\label{sec-R(lin)-to-PCR}
We now demonstrate a general and straightforward translation
from R(lin) proofs into PCR proofs over fields of characteristic $0$.
We use the term ``translation'' in order to distinguish it
from a \emph{simulation};
since here we are not interested in the size of PCR proofs.
In fact we have not defined the size of PCR proofs at all.
We shall be interested only in the \emph{number of steps }in PCR proofs.

\emph{From now on, all polynomials and arithmetic formulas are considered
over some fix field $\F$ of characteristic $0$}.
Recall that any field of characteristic $0$ contains
(an isomorphic copy of) the integer numbers,
and so we can use integer coefficients in the field.

\begin{definition}[Polynomial translation of R(lin) proof-lines]\label{def-poly-trnas-of-R(lin)-lines}
Let $D$ be a disjunction of linear equations:
\begin{equation}\label{eq-R(lin)-sim-proof-line}
 \left(a^{(1)}_1 x_1+\ldots + a^{(1)}_{n} x_n=a^{(1)}_0\right) \OrDots
 \left(a^{(t)}_1 x_1+\ldots + a^{(t)}_{n} x_n=a^{(t)}_0\right).
\end{equation}
We denote by $\widehat D$ its translation into the following polynomial:\footnotemark
\footnotetext{This notation should not be confused with the same notation
in Section \ref{sec-clique}.}
\begin{equation}\label{eq-R(lin)-sim-algebraic-line}
 \left(a^{(1)}_1 x_1+\ldots + a^{(1)}_{n} x_n-a^{(1)}_0\right) \cdots
 \left(a^{(t)}_1 x_1+\ldots + a^{(t)}_{n} x_n-a^{(t)}_0\right).
\end{equation}
If $D$ is the \emph{empty disjunction}, we define $\widehat D$ to be the polynomial
$1$.
\end{definition}

It is clear that every $0,1$ assignment to the variables in $D$,
satisfies $D$, if and only if $\widehat D$ evaluates to $0$ under the assignment.
\QuadSpace

\begin{proposition}\label{prop-PCR-sim-R(lin)}
Let $\pi=(D_1,\ldots,D_\l)$ be an R(lin) proof sequence of $D_\l$,
from some collection of initial disjunctions of linear equations
$Q_1,\ldots,Q_m$.
Then, there exists a PCR proof of $\widehat D_\l$ from $\widehat Q_1,\ldots,
\widehat Q_m$ with at most a polynomial in $|\pi|$ number of steps.
\end{proposition}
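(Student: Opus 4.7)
The plan is to translate each R(lin) proof-line $D_i$ into the polynomial $\widehat{D_i}$ of Definition \ref{def-poly-trnas-of-R(lin)-lines} and show that each application of an R(lin) inference rule can be simulated in PCR using only a polynomial number of PCR proof-lines (in the size of its premises). Composing these simulations over all $\l$ steps of $\pi$ then yields a PCR derivation of $\widehat{D_\l}$ from $\widehat{Q_1},\ldots,\widehat{Q_m}$ with at most $\mathrm{poly}(|\pi|)$ lines. I would first observe that the R(lin) Boolean axiom $(x_i=0)\Or(x_i=1)$ translates to $\widehat{(x_i=0)\Or(x_i=1)} = x_i(x_i-1) = x_i^2-x_i$, which is already a PCR Boolean axiom and hence available for free.

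I would then dispatch the three R(lin) inference rules in turn. For \emph{Simplification}, note that $\widehat{A\Or(0=k)} = -k\cdot\widehat{A}$, so a single PCR addition step with scalar coefficient $-1/k$ (which exists in $\F$ since $k\ne 0$ and $\mathrm{char}(\F)=0$) yields $\widehat{A}$. For \emph{Weakening}, to derive $\widehat{A\Or L} = \widehat{A}\cd(\vec a\cd\vec x - a_0)$ from $\widehat{A}$, first apply the PCR product rule to produce $x_i\cd\widehat{A}$ for each $i$ with $a_i\ne 0$, and then combine these via successive PCR addition steps (each with the appropriate integer coefficient) to build $\sum_i a_i x_i \widehat{A} - a_0\widehat{A}$. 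This costs $O(n)$ PCR lines, independent of the magnitudes of the $a_i$, since the PCR addition rule allows any scalar in $\F$.

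The \emph{Resolution} rule is the main case and also the step whose simulation needs careful attention. The key identity is
\begin{equation*}
\widehat{A}\cd\widehat{B}\cd\bigl((\vec a+\vec b)\cd\vec x - (a_0+b_0)\bigr) \;=\; \widehat{B}\cd\widehat{A\Or L_1} \;+\; \widehat{A}\cd\widehat{B\Or L_2},
\end{equation*}
with the variant obtained by flipping a sign handling the $L_1-L_2$ case. So I would first multiply $\widehat{A\Or L_1}$ by the polynomial $\widehat{B}$, then multiply $\widehat{B\Or L_2}$ by $\widehat{A}$, and finally combine by one PCR addition step. The main obstacle to bounding the PCR step count is that $\widehat{B}$, when expanded into a sum of monomials, can have as many as $2^{|B|}$ terms, so a naive monomial-by-monomial multiplication would blow up exponentially. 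The crucial observation is that $\widehat{B} = \prod_{j=1}^{t}(\vec b_j\cd\vec x - b_{j,0})$ is already presented as a product of at most $|B|$ linear factors, so the multiplication can be performed factor-by-factor, applying the Weakening-style procedure once per factor at cost $O(n)$ each. This totals $O(|B|\cd n)$ PCR steps, and symmetrically $O(|A|\cd n)$ for the other term. Since $|A|,|B|,n \le |\pi|$, each R(lin) inference is simulated by $\mathrm{poly}(|\pi|)$ PCR steps, and summing over the $\l$ steps of $\pi$ gives a PCR derivation of $\widehat{D_\l}$ with a polynomial (in $|\pi|$) number of proof-lines.
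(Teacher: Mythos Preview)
Your proposal is correct and follows essentially the same approach as the paper. The only cosmetic difference is that the paper packages the ``multiply by a polynomial with a small formula'' step into a general claim (from $p$ derive $q\cdot p$ in polynomially many PCR steps, where the bound is in the arithmetic formula size of $q$, proved by induction on that size), whereas you spell out the factor-by-factor multiplication explicitly for the particular polynomials $\widehat{L}$, $\widehat{A}$, $\widehat{B}$ that arise; unwinding the paper's claim in these cases gives exactly your argument.
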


\begin{proof}
We proceed by induction on the number of lines in $\pi$.

\emph{The base case} is the translation of the axioms of R(lin) via the translation
scheme in Definition \ref{def-poly-trnas-of-R(lin)-lines}.
An R(lin) Boolean axiom $(x_i=0)\Or(x_i=1)$ is translated into
$x_i\cd(x_i-1)$ which is already a Boolean axiom of PCR.
\QuadSpace

For \emph{the induction step}, we translate every R(lin) inference rule
application into a polynomial-size
PCR proof sequence as follows.
We use the following simple claim:

\begin{claim}\label{cla-simple-PCR-prod}
Let $p$ and $q$ be two polynomials and let $s$ be the
minimal size of an arithmetic formula computing $q$.
Then one can derive in PCR,
with only a polynomial in $s$ number of steps,
from $p$ the product $q\cd p$.\footnotemark
\end{claim}
\footnotetext{Again, note that we only
require that the number of steps in the proof is polynomial.
We do not consider here the \emph{size} of the PCR proof.}

\begin{proofclaim}
By induction on $s$.
\end{proofclaim}

Assume that $D_i = D_j\Or L$ was derived from $D_j$ using
the Weakening inference rule of R(lin),
where $j<i\le\l$ and $L$ is some linear equation.
Then, by Claim \ref{cla-simple-PCR-prod},
$\widehat D_i=\widehat D_j\cd \widehat L$ can be derived from
$\widehat D_j$ with a derivation of at most polynomial in $|D_j\Or L|$ many steps.

Assume that $D_i$ was derived from $D_j$ where $D_j$ is $D_i\Or (0=k)$,
using the Simplification inference
rule of R(lin),
where $j<i\le\l$ and $k$ is a non-zero integer.
Then, $\widehat D_i$ can be derived from $\widehat D_j = \widehat
D_i \cd -k$ by multiplying with $-k^{-1}$ (via the Addition rule
of PCR).

Thus, it remains to simulate the \emph{resolution rule} application of R(lin).
Let $A,B$ be two disjunctions of linear equations and
assume that $A\Or B\Or((\vec a +\vec b)\cd \vec x = a_0+b_0)$
was derived in $\pi$ from $A\Or(\vec a\cd\vec x = a_0)$ and
$B\Or(\vec b\cd\vec x = b_0)$
(the case where $A\Or B\Or((\vec a -\vec b) \cd\vec x = a_0-b_0)$
was derived from
$A\Or(\vec a\cd\vec x = a_0)$ and
$B\Or(\vec b\cd\vec x = b_0)$, is similar).

We need to derive
$\widehat{A}\cd\widehat{B}\cd((\vec a+\vec b)\cd \vec x -a_0-b_0)$ from
$\widehat{A}\cd(\vec a\cd\vec x-a_0)$ and
$\widehat{B}\cd(\vec b\cd\vec x-b_0)$.
This is done by multiplying $\widehat{A}\cd(\vec a\cd\vec x-a_0)$
with $\widehat{B}$ and multiplying
$\widehat{B}\cd(\vec b\cd\vec x-b_0)$
with $\widehat{A}$ (using Claim \ref{cla-simple-PCR-prod}),
and then adding the resulted polynomials together.
\end{proof}

\begin{remark}
When translating R(lin) proofs into PCR proofs we actually do
not make any use of the ``negative" variables $\bar x_1,\ldots, \bar x_n$.
Nevertheless, the multilinear proof systems make use of these variables
in order to polynomially simulate PCR proofs
(see Theorem \ref{thm-RT05-PCRsim} and its proof in \cite{RT05}).
\end{remark}

We shall need the following corollary in the sequel:
\begin{corollary}\label{cor-PCR-RZ0-lines-translation}
Let $\pi = D_1,\ldots,D_\l$ be an \RZ0 proof of $D_\l$,
and let $s$ be the maximal size of an \RZ0-line in $\pi$.
Then there is a PCR proof $\pi'$ of $\widehat D_\l$
with polynomial-size in $|\pi|$ number
of steps and such that every line of $\pi'$ is a translation
(via Definition \ref{def-poly-trnas-of-R(lin)-lines})
of an $\RZ0$-line (Definition \ref{def-RZ0-clause}),
where the size of the \RZ0-line is polynomial in $s$.
\end{corollary}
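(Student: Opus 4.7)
The plan is to revisit the construction in the proof of Proposition \ref{prop-PCR-sim-R(lin)} and to verify, by induction on the number of lines of $\pi$, that when the input \RL0-proof is in fact an \RZ0-proof, every polynomial produced along the way has the form $\widehat{D'}$ for some \RZ0-line $D'$ of size polynomial in $s$. Since the step-count bound is already established by Proposition \ref{prop-PCR-sim-R(lin)}, the only work is to check the \RZ0-structure (a constant number of sub-disjunctions of the admissible form and constant coefficients on variables) and the size of the disjunction underlying each intermediate PCR-line.

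Three \RL0 inference rules need to be examined. Simplification is essentially trivial: from $\widehat{D_i \Or (0=k)}=-k\cd\widehat{D_i}$ one derives $\widehat{D_i}$ in a single PCR Addition step, and both lines are translations of \RZ0-lines of size at most $s$. For Weakening, we invoke Claim \ref{cla-simple-PCR-prod} to multiply $\widehat{D_j}$ by the single linear factor $\widehat{L}$; unwinding that claim's induction over the formula for the linear form of $L$ (whose coefficients are constant, since $D_j\Or L$ is an \RZ0-line) produces intermediate polynomials of the shape $\widehat{D_j}\cd q$, where $q$ is a partial linear form of $\widehat{L}$. This is precisely the translation of $D_j\Or(q'=0)$ for the corresponding partial equation $q'$, and adjoining a single equation of constant coefficients to $D_j$ adds at most one extra sub-disjunction to the partition of $D_j$, keeping the line within the constant bound that defines an \RZ0-line; its size is at most $|D_j|+O(1)\le O(s)$.

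The main case is Resolution, which derives $A\Or B\Or(L_1+L_2)$ from $A\Or L_1$ and $B\Or L_2$. Following Proposition \ref{prop-PCR-sim-R(lin)}'s construction, we multiply $\widehat{A}\cd\widehat{L_1}$ by $\widehat{B}=\prod_j\widehat{L_j^B}$ one linear factor at a time (and symmetrically $\widehat{B}\cd\widehat{L_2}$ by $\widehat{A}$), before a final PCR Addition. Expanding Claim \ref{cla-simple-PCR-prod} along both the outer product and the inner linear forms, every intermediate polynomial has the shape $\widehat{A}\cd\widehat{L_1}\cd\widehat{B'}\cd q$ with $B'\se B$ and $q$ a partial linear form of the next linear factor being processed. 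Reading this back via Definition \ref{def-poly-trnas-of-R(lin)-lines} yields the disjunction $A\Or L_1\Or B'\Or(q'=0)$. Since $A\Or L_1$ and $B\Or L_2$ are \RZ0-lines by hypothesis, each admits a partition into a constant number of admissible sub-disjunctions with constant variable-coefficients; the sub-disjunction $B'\se B$ inherits such a partition, and the total count for $A\Or L_1\Or B'\Or(q'=0)$ is bounded by the sum of two fixed constants plus one -- still constant. Its size is at most $|A\Or L_1|+|B|+O(1)\le O(s)$, and the PCR line closing this block is exactly $\widehat{A\Or B\Or(L_1+L_2)}$, the translation of the \RZ0-line $D_i$ itself.

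The one bookkeeping obstacle is to check that the transient partial-form sub-disjunctions introduced during Claim \ref{cla-simple-PCR-prod}'s inductive expansion never accumulate beyond a constant. This is handled by ordering the expansion so that at any intermediate step only one partial sub-disjunction $(q'=0)$ coexists with the fully-formed factors: a partial factor is completed and absorbed into $\widehat{B'}$ (or into $\widehat{L}$, in the Weakening case) before the next factor is touched. Consequently, the constant bound defining an \RZ0-line is preserved along the whole PCR-proof, and the corollary follows.
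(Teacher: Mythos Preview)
Your proof is correct and follows essentially the same approach as the paper: both revisit the PCR simulation from Proposition~\ref{prop-PCR-sim-R(lin)} and verify that each intermediate polynomial produced while simulating an \RZ0 inference is itself the translation of an \RZ0-line of size $O(s)$. The paper's proof leaves this verification as ``a straightforward inspection,'' whereas you carry it out explicitly rule by rule---including the careful point that only one partial linear factor is ever in play at a time, so the number of sub-disjunctions stays bounded by a constant.
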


\begin{proof}
The simulation of R(lin) by PCR shown above,
can be thought of as, first, considering
$\widehat D_1,\ldots,\widehat D_\l$ as the ``skeleton" of a PCR proof
of $\widehat D_\l$.
And second, for each $D_i$ that was deduced by one of R(lin)'s
inference rules from previous lines,
one inserts the corresponding PCR proof sequence that simulates the appropriate
inference rule application
(as described in the proof of Proposition \ref{prop-PCR-sim-R(lin)}).
By definition, those PCR proof-lines that correspond to lines in the skeleton
$\widehat D_1,\ldots,\widehat D_\l$ are translations of \RZ0-lines (with size
at most polynomial in $s$).
Thus, to conclude the proof of the corollary,
one needs only to check that for any
\RZ0-line $D_i$ that was deduced by one of R(lin)'s inference rules
from previous \RZ0-lines (as demonstrated in the proof
of Proposition \ref{prop-PCR-sim-R(lin)}),
the inserted corresponding PCR proof sequence uses only translations of \RZ0-lines
(with size polynomial in $s$).
This can be verified by a straightforward inspection.
\end{proof}

\subsection{From PCR Proofs to Multilinear Proofs}\label{sec-PCR-to-fMC}

We now recall the general simulation result proved in \cite{RT05}
stating the following:
Let $\pi$ be a PCR
refutation of some initial collection of multilinear polynomials $Q$ over some
fixed field.
Assume that $\pi$ has polynomially many steps
(that is, the number of proof lines in the PCR proof sequence
is polynomial).
If the `multilinearization' (namely, the result of applying the $\ML{\cd}$
operator -- see Definition \ref{def-ml-operator})
of each of the polynomials in $\pi$ has a polynomial-size depth $d$
multilinear formula (with a plus gate at the root),
then there is a polynomial-size depth-$d$ fMC refutation of $Q$.
More formally, we have:

\begin{theorem}[\cite{RT05}]\label{thm-RT05-PCRsim}
Fix a field $\F$ (not necessarily of characteristic $0$)
and let $Q$ be a set of
multilinear polynomials from $\,\F[x_1, \ldots, x_n, \bx_1, \ldots, \bx_n]$.
Let $\pi=(p_1,\ldots, p_m)$ be a \emph{PCR} refutation of $\,Q$.
For each $p_i\in \pi$, let $\Phi_i$ be a multilinear formula for the
polynomial $\ML{p_i}$.
Let $s$ be the total size of all formulas $\Phi_i$, that is,
$s=\Sigma_{i=1}^{m}{|\Phi_i|}$, and let $d\ge 2$ be the maximal depth of all
formulas $\Phi_i$.
Assume that the depth of all the formulas $\Phi_i$ that have
a product gate at the root is at most $d-1$.
Then there is a depth-$d$
\emph{fMC} refutation of $\,Q\,$ of size polynomial in $s$.
\end{theorem}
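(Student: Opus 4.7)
The plan is to transform the PCR refutation $\pi=(p_1,\ldots,p_m)$ line by line into an fMC refutation whose main proof-lines are exactly the multilinearizations $\ML{p_i}$, represented by the given formulas $\Phi_i$. For each $i$, I would insert a short fMC sub-derivation showing how to obtain $\Phi_i$ from (some of) $\Phi_1,\ldots,\Phi_{i-1}$, using only intermediate multilinear formulas whose depth is at most $d$ and whose total size is polynomial in $|\Phi_i|$ plus the sizes of the formulas it depends on. Summing over $i$, the whole refutation has size polynomial in $s$, and since the last line of $\pi$ is the polynomial $1$, the final $\Phi_m$ is the constant $1$, giving an fMC refutation of $Q$.

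The base cases are easy. Initial polynomials $Q_j$ are already multilinear, so $\Phi_j=Q_j$. For the PCR Boolean axioms: $x+\bar x-1$ is also an fMC Boolean axiom, while $x^2-x$ multilinearizes to $0$, which is trivially derivable (e.g. by subtracting any axiom from itself). The Addition rule of PCR translates directly to the Addition rule of fMC, since $\ML{\alpha p+\beta q}=\alpha\ML{p}+\beta\ML{q}$ and applying fMC's Addition to $\Phi_j$, $\Phi_k$ yields a formula for $\Phi_i$ of size $|\Phi_j|+|\Phi_k|+O(1)$ and depth $\max(\dpt{\Phi_j},\dpt{\Phi_k})+1$; using the standard trick for plus gates this still fits within depth $d$ since plus-at-root is permitted at the maximum depth.

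The core step is simulating the PCR Product rule $p \sq x_l\cdot p$. Here I would use the identity that, writing $\ML{p}=x_l\,r_1+r_0$ with $r_0,r_1$ not mentioning $x_l$, one has
\[
\ML{x_l\cdot p}\;=\;x_l(r_0+r_1)\;=\;\ML{p}-\bar x_l\cdot r_0,
\]
exploiting $x_l+\bar x_l=1$. Rather than extracting $r_0$ explicitly, I would derive $\bar x_l\cdot\ML{p}$ in fMC (which equals $x_l\bar x_l r_1+\bar x_l r_0$) and then cancel the $x_l\bar x_l r_1$ term using the fMC Boolean axiom $x_l\bar x_l=0$ multiplied by $r_1$. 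Subtracting the resulting $\bar x_l r_0$ from $\Phi_j$ yields a formula for $\ML{x_l p_j}$. To present this formula as $\Phi_i$ (the multilinear formula we are already given), I would invoke the fact that two fMC-derivable formulas computing the same polynomial can be replaced by either via a trivial combining step (both are multilinear, so one can pass between them by forming linear combinations and using the product of a formula with itself times $0$).

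The main obstacle is controlling the depth and size of the intermediate formulas produced during the Product-rule simulation. One must verify that multiplying the depth-$d$ formula $\Phi_j$ by $\bar x_l$ and then by the subterms $r_0,r_1$ can be done without exceeding depth $d$, which is why the theorem assumes product-at-root formulas have depth at most $d-1$: this slack absorbs the extra level introduced by the multiplications. The argument will proceed by induction on the structure of $\Phi_j$, pushing the multiplication by $\bar x_l$ through plus gates (which stays within the same depth budget) and combining it with product gates (where multilinearity ensures that at most one child already depends on the multiplied variable, so no non-multilinear intermediate is needed). The final size and depth bookkeeping then gives the desired polynomial-size depth-$d$ fMC refutation.
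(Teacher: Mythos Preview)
The paper does not contain a proof of this theorem: it is stated with the attribution ``\cite{RT05}'' and introduced with ``We now recall the general simulation result proved in \cite{RT05},'' so there is no in-paper proof to compare against. Your proposal is therefore being measured against a black box.

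That said, your outline is broadly the right shape for how such a simulation is proved in \cite{RT05}: one walks along the PCR refutation, replaces each $p_i$ by $\ML{p_i}$, and shows that each PCR inference can be locally simulated in fMC with intermediate lines whose multilinear formulas are only polynomially larger and of depth at most $d$. Your treatment of the base cases and of the Addition rule is fine, and you correctly identify the Product rule as the heart of the matter.

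There is, however, a real gap in your Product-rule step. You propose to derive $\bar x_l\cdot\ML{p_j}$ in fMC and then cancel $x_l\bar x_l r_1$ using the axiom $x_l\bar x_l$ multiplied by $r_1$. Two issues: first, the fMC Product rule only allows multiplying by $q$ when the result $q\cdot p$ is \emph{multilinear}, and $\bar x_l\cdot\ML{p_j}$ need not be multilinear (if $\ML{p_j}$ already contains $\bar x_l$); second, and more seriously, to multiply the axiom $x_l\bar x_l$ by $r_1$ you need $r_1$ in hand as a small depth-$(d-1)$ formula, but $r_1$ is the coefficient of $x_l$ in $\ML{p_j}$, and extracting it from an arbitrary depth-$d$ multilinear formula $\Phi_j$ is exactly the nontrivial structural induction you allude to at the end. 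Your final paragraph gestures at this induction (pushing the multiplication through plus gates, using that at each product gate at most one child depends on the variable), and that is indeed the right mechanism; but the earlier ``derive $\bar x_l\cdot\ML{p}$ globally and cancel'' description is not how the argument actually goes and would not work as written. The real proof performs the structural induction on $\Phi_j$ directly to build a small formula for $\ML{x_l\cdot p_j}$, rather than first forming a non-multilinear intermediate and then repairing it.
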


\subsubsection{Depth-$3$ Multilinear Proofs}

Here we show that multilinear proofs operating with depth-$3$ multilinear
formulas (that is, depth-$3$ fMC) over fields of characteristic $0$
polynomially simulate \RZ0 proofs. In light of Proposition
\ref{cor-PCR-RZ0-lines-translation} and Theorem \ref{thm-RT05-PCRsim},
to this end it suffices to show that any \RZ0-line $D$ 
translates into a corresponding polynomial $p$
(via the translation in Definition \ref{def-poly-trnas-of-R(lin)-lines})
such that $\ML{p}$ has a multilinear formula of size polynomial (in the number of variables)
and depth at most $3$ (with a plus gate at the root) over fields of
characteristic $0$.

We need the following proposition from \cite{RT05}:

\begin{proposition}[\cite{RT05}]\label{prop-RT05-depth3-ml}
Let $\,\F$ be a field of characteristic $0$. For a constant $\,c$, let
$X_1,\ldots, X_c$ be $\,c\,$ finite sets of variables (not necessarily
disjoint), where $\Sigma_{i=1}^c{|X_i|}=n\,$. Let $\,f_1,\ldots, f_c\,$ be
$\,c\,$ symmetric polynomials over $\,X_1, \ldots, X_c\,$ (over the field
$\,\F$), respectively.  Then, there is a depth-$3$ multilinear formula for
$\,\ML{f_1\cdots f_c}\,$ of size polynomial (in $n$), with a plus gate at
the root.
\end{proposition}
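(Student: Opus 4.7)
\medskip

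The plan is to reduce the statement to handling products of elementary symmetric polynomials, and then to invoke the classical interpolation representation of elementary symmetric polynomials as a sum of products of linear forms. The main feature to exploit is that multilinearization commutes with both addition and (after suitable rewriting) with products of linear forms, because for any single variable $x$ and any constants $\alpha,\beta$ we have $\ML{(1+\alpha x)(1+\beta x)} = 1 + (\alpha+\beta+\alpha\beta)x$, again a linear form in $x$.

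First I would write each symmetric multilinear polynomial $\ML{f_i}$ in the elementary-symmetric basis over $X_i$:
$$ \ML{f_i} \;=\; \sum_{k=0}^{|X_i|} a_{i,k}\,\sigma_k(X_i),$$
where $\sigma_k(X_i)$ denotes the $k$-th elementary symmetric polynomial in $X_i$. Using the fact that $\ML{gh} = \ML{\,\ML{g}\cdot \ML{h}\,}$, it follows that
$$ \ML{f_1\cdots f_c} \;=\; \sum_{k_1,\ldots,k_c} \left(\prod_{i=1}^{c} a_{i,k_i}\right) \ML{\,\prod_{i=1}^{c} \sigma_{k_i}(X_i)\,}. $$
Since $c$ is constant and each $k_i$ ranges over $\{0,\ldots,|X_i|\}$, this is a sum of $\mathrm{poly}(n)$ many terms, so it suffices to exhibit a polynomial-size depth-$3$ multilinear formula (with a plus gate at the root) for each single term $\ML{\prod_i \sigma_{k_i}(X_i)}$ and then add them.

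Next, I would apply the standard Lagrange-type identity: for any field of characteristic $0$ and any distinct field elements $\alpha_{i,0},\ldots,\alpha_{i,|X_i|}$, there exist coefficients $c_{i,j,k}\in\F$ such that
$$ \sigma_k(X_i) \;=\; \sum_{j=0}^{|X_i|} c_{i,j,k} \prod_{x\in X_i}(1+\alpha_{i,j}x). $$
Substituting this into $\prod_i \sigma_{k_i}(X_i)$ and distributing yields a sum over tuples $(j_1,\ldots,j_c)$ of products $\prod_{i=1}^{c} \prod_{x\in X_i}(1+\alpha_{i,j_i}x)$. For each variable $x\in X_1\cup\cdots\cup X_c$, the factors containing $x$ are precisely those $(1+\alpha_{i,j_i}x)$ with $i$ such that $x\in X_i$. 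Their product is a polynomial in the single variable $x$, so applying $\ML{\cdot}$ (equivalently, reducing $x^2=x$) collapses it to a single linear form $1+\beta_{x,(j_1,\ldots,j_c)}\,x$, where $\beta_{x,(j_1,\ldots,j_c)}$ depends only on the set $\{i:x\in X_i\}$ and on $(j_1,\ldots,j_c)$. Thus $\ML{\prod_i\sigma_{k_i}(X_i)}$ is written as a $\Sigma\Pi\Sigma$ multilinear formula of size $\mathrm{poly}(n)$ (since $c$ is constant, the outer sum over $(j_1,\ldots,j_c)$ has $\mathrm{poly}(n)$ summands, and each product has at most $n$ linear-form factors).

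Summing these depth-$3$ formulas (one for each choice of $(k_1,\ldots,k_c)$) gives a single depth-$3$ multilinear formula with a plus gate at the root for $\ML{f_1\cdots f_c}$, of size polynomial in $n$. The main obstacle to be careful about is maintaining \emph{multilinearity} throughout: the depth-$3$ formula obtained before multilinearizing is not multilinear (each variable $x$ appears in several $\prod_{x\in X_i}(1+\alpha_{i,j_i}x)$ factors when the $X_i$'s overlap), and what rescues us is precisely the observation that the subproduct attached to a single variable collapses to a linear form after multilinearization, allowing us to fold the collapse into the formula by adjusting the coefficient $\beta_{x,(j_1,\ldots,j_c)}$ in the sole surviving linear factor of $x$.
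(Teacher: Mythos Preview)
The paper does not prove this proposition; it is quoted verbatim from \cite{RT05} and used as a black box (the paper also cites \cite{SW01} for the underlying fact that multilinear symmetric polynomials have small depth-$3$ multilinear formulas over characteristic $0$). So there is no proof in the present paper to compare against.

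That said, your argument is correct and is precisely the expected one. It is the Ben-Or/Shpilka--Wigderson interpolation idea: expand each $\ML{f_i}$ in the elementary-symmetric basis (legitimate because $\ML{f_i}$ is symmetric and multilinear), reduce to building a depth-$3$ multilinear formula for $\ML{\prod_i \sigma_{k_i}(X_i)}$, write each $\sigma_k(X_i)$ via the generating-function identity $\prod_{x\in X_i}(1+tx)=\sum_k \sigma_k(X_i)t^k$ evaluated at $|X_i|+1$ distinct field points (this is where characteristic $0$ enters), and then observe that after grouping factors by variable the multilinearization collapses each univariate subproduct $\prod_{i:\,x\in X_i}(1+\alpha_{i,j_i}x)$ to a single affine form $1+\beta_x x$, so that every product gate multiplies affine forms in \emph{distinct} variables and the resulting $\Sigma\Pi\Sigma$ formula is genuinely multilinear. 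The two outer sums (over $(k_1,\ldots,k_c)$ and over $(j_1,\ldots,j_c)$) each have at most $\prod_i(|X_i|+1)\le (n+1)^c$ terms, polynomial in $n$ since $c$ is constant, and the middle products have at most $n$ factors; merging the two outer sums keeps depth $3$ with a plus gate at the root. This is the same route taken in \cite{RT05}.
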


The following is the key lemma of the simulation:

\begin{lemma}\label{lem-RZ0-line-into-3ml-fmla}
Let $D$ be an \RZ0-line with $n$ variables 
and let $p= \widehat D$ (see Definition \ref{def-poly-trnas-of-R(lin)-lines}).
Then, $\ML{p}$ has a depth-$3$ multilinear formula over fields of
characteristic $0$, with a plus gate at the root and size at most polynomial in
the size of $D$.
\end{lemma}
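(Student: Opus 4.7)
The plan is to exploit the large symmetry group that $\widehat D$ possesses as a consequence of \RZ0-lines involving only a constant number of distinct linear forms with bounded coefficients. Recalling (\ref{eq-R0-clause}), any \RZ0-line $D$ has the form
\[
D = \BigOr_{t=1}^k \BigOr_{i \in I_t}\bigl(\vec a^{(t)} \cd \vec x = \ell^{(t)}_i\bigr) \Or \BigOr_{j \in J}(x_j = b_j),
\]
with $k=O(1)$ and each entry $a^{(t)}_r$ a constant integer, so its polynomial translation is
\[
\widehat D = \prod_{t=1}^k \prod_{i \in I_t}\bigl(\vec a^{(t)} \cd \vec x - \ell^{(t)}_i\bigr) \cdot \prod_{j \in J}(x_j - b_j).
\]

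First I would declare two variables $x_i, x_{i'}$ equivalent iff $a^{(t)}_i = a^{(t)}_{i'}$ for every $t \in [k]$ and $i, i'$ have the same status in the clause $\BigOr_{j \in J}(x_j=b_j)$ (both outside $J$, or both in $J$ with equal $b$-values). Since $k$ and the coefficient bound are constants and $b_j \in \zo$, this partitions the variables into $m = O(1)$ disjoint classes $X_1, \ldots, X_m$. Any permutation stabilizing each class setwise preserves every linear form $\vec a^{(t)} \cd \vec x$ and the factor $\prod_{j \in J}(x_j - b_j)$, hence fixes $\widehat D$ and therefore also $\ML{\widehat D}$. A standard linear-algebra fact then yields a unique representation
\[
\ML{\widehat D} = \sum_{0 \le k_u \le |X_u|,\, u \in [m]} c_{k_1,\ldots,k_m} \prod_{u=1}^m e_{k_u}(X_u),
\]
where $e_k(Y)$ denotes the $k$-th elementary symmetric polynomial on $Y$; the number of summands is at most $\prod_{u=1}^m(|X_u|+1) \le (n+1)^m$, polynomial in $n$ and hence in $|D|$. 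By Proposition \ref{prop-RT05-depth3-ml}, each summand, being a product of $m = O(1)$ symmetric polynomials on disjoint variable sets (and already multilinear), admits a depth-$3$ multilinear formula of polynomial size with a plus gate at the root; merging the polynomially many top plus-gates into one yields the desired depth-$3$ multilinear formula for $\ML{\widehat D}$ of polynomial size, with a plus gate at the root.

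The only mildly subtle step is verifying the claimed symmetry of $\widehat D$, but it is a direct calculation: under a permutation $\sigma$ stabilizing each equivalence class, $\vec a^{(t)} \cd \vec x$ becomes $\sum_i a^{(t)}_{\sigma^{-1}(i)} x_i = \vec a^{(t)} \cd \vec x$ because $\vec a^{(t)}$ is constant on each class, and the factor over $J$ is preserved because $\sigma$ fixes both the set $J$ and the restriction of $j \mapsto b_j$ to $J$. Crucially, the scalars $c_{k_1,\ldots,k_m}$ need not be computed explicitly—their mere existence, together with the polynomial cardinality of the symmetric basis and the known polynomial-size depth-$3$ multilinear formulas for elementary symmetric polynomials over characteristic-$0$ fields, suffices to bound the final formula size.
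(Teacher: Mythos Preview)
Your proof is correct and takes a genuinely different route from the paper's. The paper proceeds by an explicit expansion: for each disjunct $\BigOr_i(\vec a\cd\vec x=\ell_i)$ it rewrites $\vec a\cd\vec x$ as a sum $z_1+\ldots+z_d$ of constantly many single-coefficient linear forms, expands the product $\prod_i(z_1+\ldots+z_d-\ell_i)$ into a sum of monomials in the $z_k$'s, and then multiplies these expansions together across the constantly many disjuncts and the clause factor $q$; each resulting term is a product of constantly many symmetric polynomials (powers of the $z_k$'s and the two halves of $q$), so Proposition~\ref{prop-RT05-depth3-ml} applies termwise. Your argument bypasses this calculation entirely by identifying up front the full symmetry group of $\widehat D$ (permutations within the coefficient-pattern classes $X_1,\ldots,X_m$) and invoking the basis of products of elementary symmetric polynomials for multilinear polynomials with that symmetry. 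This is cleaner and more conceptual; the price is that the coefficients $c_{k_1,\ldots,k_m}$ remain implicit, which is harmless here since only existence of a small formula is claimed. The paper's expansion, while more laborious, makes the polynomial bound somewhat more concrete and stays closer to the structure of the disjuncts themselves, and it genuinely uses the non-disjoint case of Proposition~\ref{prop-RT05-depth3-ml} (since the $z_{j,k}$'s from different disjuncts and the clause $q$ may overlap), whereas your classes $X_u$ are disjoint so you only need the easier disjoint case of that proposition.
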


\begin{proof}
Assume that the underlying variables of $D$ are $\vec x = x_1\ldots,x_n$.
By the definition of an \RZ0-line (see Definition \ref{def-RZ0-clause})
we can partition the disjunction $D$ into a \emph{constant number} of disjuncts,
where one disjunct is a (possibly empty, translation of a)
clause $C$,\footnote{If there is more than one clause in $D$, we
simply combine all the clauses into a single clause.}
and all other disjuncts have the following form:
\begin{equation}\label{eq-3ml-disjunct}
\BigOr_{i=1}^m
 \left(
   \vec a \cd \vec x =\ell_i
 \right)\,,
\end{equation}
where the $\ell_i$'s are integers, $m$ is not necessarily bounded and
$\vec a$ denotes a vector of $n$ \emph{constant} integer coefficients.

Let us denote by $q$ the polynomial representing the clause $C$.\footnotemark
\footnotetext{$C$ is a translation of a clause (that is, disjunction of literals)
into a disjunction of linear equations, as
defined in Section \ref{sec-clauses-of-linear-equations}.
The polynomial $q$ is then the polynomial translation of this disjunction
of linear equations,
as in Definition \ref{def-poly-trnas-of-R(lin)-lines}.}

Consider a disjunct as shown in (\ref{eq-3ml-disjunct}).
Since the coefficients $\vec a$ are constants,
$\vec a \cd \vec x$ can be written as a sum of constant number of linear
forms, each with the \emph{same} constant coefficient.
In other words, $\vec a \cd \vec x$ can be written as
$z_1 \PlusDots z_d$, for some constant $d$,
where for all $i\in[d]$:
\begin{equation}\label{eq-Z-partition-of-line}
z_i:=b\cd \sum_{j\in J} x_j\,,
\end{equation}
for some $J\subseteq [n]$ and some constant integer $b$.
We shall assume without loss of generality that $d$ is the same constant
for every disjunct of the form (\ref{eq-3ml-disjunct}) inside $D$
(otherwise, take $d$ to be the maximal such $d$).

Thus, (\ref{eq-3ml-disjunct}) is translated (via the translation scheme
in Definition \ref{def-poly-trnas-of-R(lin)-lines})
into:
\begin{equation}\label{eq-prod-Z-substitution}
\prod\limits_{i = 1}^m {(z_1  + ... + z_d  - \ell _i )}\,.
\end{equation}

By fully expanding the product in (\ref{eq-prod-Z-substitution}),
we arrive at:
\begin{equation}\label{eq-expand-prod-disjunct-3ML}
 \sum\limits_{
				\scriptstyle r_1  + \ldots + r_{d + 1}  = m
				}
	{
		\left(
			{\alpha _{r_{d+1}}\cdot\prod\limits_{k = 1}^d
				 {
					z_k ^{r_k}
				  }
			 }
		\right)
	}
	\,,
\end{equation}
where the $r_i$'s are non-negative integers,
and where the $\alpha_{r}$'s, for every $0\le r\le m$ are just integer coefficients,
formally defined as follows (this definition is not essential;
we present it only for the sake of concreteness):
\begin{equation}\label{eq-coeff-for-sake-complete}
\alpha_r:=
 \sum\limits_{\scriptstyle
   \,U\subseteq [m] \hfill\atop\scriptstyle |\,U| = r\hfill}
{\prod\limits_{j \in U}{( - \ell _j ) }}
\,.
\end{equation}

\begin{claim}\label{cla-tedious-summands}
The polynomial $\widehat D$ (the polynomial translation of $D$)
is a linear combination (over $\F$) of polynomially (in $|D|$) many terms,
such that each term can be written as
\[
	q\cd\prod_{k\in K} z_k^{r_k}\,,
\]
where $K$ is a collection of a constant number of indices,
$r_k$'s are non-negative integers,
and the $z_k$'s and  $q$ are as above
(that is, the $z_k$'s are linear forms, where
each $z_k$ has a single coefficient for all variables in it,
as in (\ref{eq-Z-partition-of-line}),
and $q$ is a polynomial translation of a clause).
\end{claim}

\begin{proofclaim}
Denote the total number of disjuncts of the form
(\ref{eq-3ml-disjunct}) in $D$ by $h$.
By definition (of \RZ0-line), $h$ is a constant.
Consider the polynomial (\ref{eq-expand-prod-disjunct-3ML}) above.
In $\widehat D$,
we actually need to multiply $h$ many polynomials of the form
shown in (\ref{eq-expand-prod-disjunct-3ML}) and the polynomial $q$.

For every $j\in[h]$ we write the (single) linear
form in the $j$th disjunct as a sum of constantly many linear
forms $z_{j,1}+\ldots+z_{j,d}$,
where each linear form $z_{j,k}$ has the same coefficient for every variable in it.
Thus, $\widehat D$ can be written as:
\begin{equation}\label{eq-3ML-big-prod}
q\cd \prod\limits_{\scriptstyle j = 1}^h
 {\left(
  \sum\limits_{
			   \scriptstyle r_1+\ldots+r_{d+1}=m_j
			  }
	{\underbrace{
				 \left({\alpha ^{(j)} _{
									r_{
										d+1
									  }
								}\cdot\prod\limits_{k = 1}^{d}
									{
										z_{j,k}^{
												r_{k}
											}
									}
						}
				 \right)
				 }_{(\star)}
	}
  \right)
 }\,,
\end{equation}
(where the $m_j$'s are not bounded, and the coefficients
$\alpha ^{(j)} _{r_{d+1}}$
are as defined in (\ref{eq-coeff-for-sake-complete}) except that here we add the
index $(j)$ to denote that they depend on the $j$th disjunct in $D$).
Denote the maximal $m_j$, for all $j\in [h]$, by $m_0$.
The size of $D$, denoted $|D|$, is at least $m_0$.
Note that since $d$ is a constant,
the number of summands in each (middle) sum in (\ref{eq-3ML-big-prod})
is polynomial in $m_0$, which is at most polynomial in $|D|$.
 Thus, by expanding the outermost product in (\ref{eq-3ML-big-prod}),
we arrive at a sum of polynomially in $|D|$ many summands.
Each summand in this sum is a product of $h$ terms of the form $(\star)$
multiplied by $q$.
\end{proofclaim}

It remains to apply the multilinearization operator
(Definition \ref{def-ml-operator}) on $\widehat D$,
and verify that the resulting polynomial has a depth-$3$ \ml0 formula
with a plus gate at the root and of polynomial-size (in $|D|$).
Since $\ML \cd$ is a linear operator, it suffices to show that
when applying $\ML \cd$ on each summand in $\widehat D$,
as described in Claim \ref{cla-tedious-summands}, one obtains
a (multilinear) polynomial that has a depth-$3$ \ml0 formula
with a plus gate at the root,
and of polynomial-size in the number of variables $n$ (note that clearly $n\le |D|$).
This is established in the following claim:

\begin{claim}\label{cla-summand-3ml}
The polynomial $\ML{q\cd\prod_{k\in K} z_k^{r_k}}$
has a depth-$3$ multilinear formula of polynomial-size in
$n$ (the overall number of variables) and with a plus gate at the root (over
fields of characteristic $0$),
under the same notation as in Claim \ref{cla-tedious-summands}.
\end{claim}

\begin{proofclaim}
Recall that a power of a symmetric polynomial is a symmetric polynomial in itself.
Since each $z_k$ (for all $k\in K$) is a symmetric polynomial,
then its power $z_k^{r_k}$ is also symmetric.
The polynomial $q$ is a translation of a clause, hence it is a product
of two symmetric polynomials:
the symmetric polynomial that is the translation
of the disjunction of literals with positive signs,
and the symmetric polynomial that is the translation
of the disjunction of literals with negative signs.
Therefore, $q\cd\prod_{k\in K} z_k^{r_k}$ is a product of constant number
of symmetric polynomials.
By Proposition \ref{prop-RT05-depth3-ml},
$\ML{q\cd\prod_{k\in K} z_k^{r_k}}$
(where here the $\ML \cd$ operator operates on the
$\vec x$ variables in the $z_k$'s and $q$)
is a polynomial for which there is
a polynomial-size (in $n$) depth-$3$ multilinear formula
with a plus gate at the root (over fields of characteristic $0$).
\end{proofclaim}
\end{proof}

We now come to the main corollary of this section.

\begin{corollary}\label{cor-3fMC-simulates-RZ0}
Multilinear proofs operating with depth-$3$ multilinear formulas
(that is, depth-$3$ {\rm fMC} proofs)
polynomially-simulate \RZ0 proofs.
\end{corollary}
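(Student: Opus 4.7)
The plan is to chain together three ingredients that have already been set up earlier in this section, so that no new combinatorial work is needed beyond invoking them in the right order.

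First I would take an arbitrary \RZ0 refutation $\pi = D_1,\ldots,D_\ell$ of a (translated) CNF $K$, and push it through Corollary \ref{cor-PCR-RZ0-lines-translation}. That gives a PCR refutation $\pi' = p_1,\ldots,p_m$ of the polynomial translations $\widehat{K}$, with $m$ polynomial in $|\pi|$, and with the crucial structural property that every $p_i$ is the polynomial translation (Definition \ref{def-poly-trnas-of-R(lin)-lines}) of some \RZ0-line whose size is polynomial in $s$, where $s$ is the maximal size of a proof-line in $\pi$ (so in particular polynomial in $|\pi|$).

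Second, for each line $p_i$ of $\pi'$ I would invoke Lemma \ref{lem-RZ0-line-into-3ml-fmla} to produce a depth-$3$ multilinear formula $\Phi_i$ computing $\ML{p_i}$ over our fixed field of characteristic $0$, with a plus gate at the root and size polynomial in the size of the underlying \RZ0-line, hence polynomial in $|\pi|$. Summing over the $m$ lines, the total formula size $\sum_i |\Phi_i|$ remains polynomial in $|\pi|$.

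Third, I would feed these formulas into Theorem \ref{thm-RT05-PCRsim} with $d = 3$. Since every $\Phi_i$ has a plus gate at the root, the side condition that formulas with a product gate at the root have depth at most $d-1$ is satisfied vacuously. The theorem therefore delivers a depth-$3$ fMC refutation of $\widehat{K}$ of size polynomial in $\sum_i |\Phi_i|$, and hence polynomial in $|\pi|$. This is exactly the polynomial simulation claimed.

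Since all three ingredients are in place, there is no real obstacle left to overcome in proving the corollary itself; the genuinely technical step is Lemma \ref{lem-RZ0-line-into-3ml-fmla}, whose proof rests on the symmetry structure of \RZ0-lines combined with Proposition \ref{prop-RT05-depth3-ml} (small depth-$3$ multilinear formulas for products of a constant number of symmetric multilinear polynomials over characteristic $0$). The only points one must be careful about when assembling the corollary are (i) that the initial axioms of PCR appearing in $\pi'$ are themselves translations of \RZ0-lines (namely the translated clauses of $K$ and the Boolean axioms, both of which are \RZ0-lines and so covered by Lemma \ref{lem-RZ0-line-into-3ml-fmla}), and (ii) that the polynomial blow-ups in the three invocations compose multiplicatively to a single polynomial bound in $|\pi|$, which they do.
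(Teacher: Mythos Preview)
Your proposal is correct and follows essentially the same approach as the paper: chain Corollary~\ref{cor-PCR-RZ0-lines-translation}, Lemma~\ref{lem-RZ0-line-into-3ml-fmla}, and Theorem~\ref{thm-RT05-PCRsim} in that order, with $d=3$ and noting that the plus-gate-at-root condition is satisfied. Your additional care about the initial axioms and the composition of polynomial bounds is sound and slightly more explicit than the paper's own presentation.
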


\begin{proof}
Immediate from Corollary \ref{cor-PCR-RZ0-lines-translation}, Theorem
\ref{thm-RT05-PCRsim} and Proposition \ref{lem-RZ0-line-into-3ml-fmla}.

For the sake of clarity we repeat the chain of transformations needed to
prove the simulation.
Given an \RZ0 proof $\pi$, we first use Corollary \ref{cor-PCR-RZ0-lines-translation}
to transform $\pi$ into a PCR proof $\pi'$,
with number of steps that is at most polynomial in $|\pi|$,
and where each line in $\pi'$ is a polynomial translation of some \RZ0-line with
size at most polynomial in the maximal line in $\pi$
(which is clearly at most polynomial in $|\pi|$).
Thus, by Proposition \ref{lem-RZ0-line-into-3ml-fmla} each polynomial in $\pi'$
has a corresponding multilinear polynomial with a \ps0 in $|\pi|$
depth-$3$ multilinear formula (and a plus gate at the root).
Therefore, by Theorem \ref{thm-RT05-PCRsim}, we can transform
$\pi'$ into a depth-$3$ fMC proof with only a polynomial (in $|\pi|$) increase in size.
\end{proof}

\subsection{Small Depth-$3$ Multilinear Proofs}
Since \RZ0 admits polynomial-size (in $n$)
refutations of the $m$ to $n$ pigeonhole principle (for any $m>n$)
(as defined in \ref{def-PHP}),
Corollary \ref{cor-3fMC-simulates-RZ0} and Theorem \ref{thmPHP}
yield:

\begin{theorem}\label{thm-3fMC-PHP-ref}
For any $m>n$ there are polynomial-size (in $n$) depth-$3$ {\rm fMC}
refutations of the $m$ to $n$ pigeonhole principle {\rm PHP}$^m_n$
(over fields of characteristic $0$).
\end{theorem}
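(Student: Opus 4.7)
The plan is to obtain Theorem \ref{thm-3fMC-PHP-ref} as a direct corollary of two results already established in the paper. By Theorem \ref{thmPHP}, for any $m>n$ there is an \RZ0 refutation of $\neg$PHP$^m_n$ whose size is polynomial in $n$ (the argument in Section \ref{sec-php} handles the case $m=n+1$, and arbitrary $m>n$ follows since the clauses of $\neg$PHP$^{n+1}_n$ are a subset of those of $\neg$PHP$^m_n$). By Corollary \ref{cor-3fMC-simulates-RZ0}, depth-$3$ fMC polynomially simulates \RZ0 over any field of characteristic $0$. Composing these two results yields a depth-$3$ fMC refutation of $\neg$PHP$^m_n$ of size polynomial in $n$.

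The only thing to double-check is that the initial lines of the resulting fMC refutation are the standard multilinear polynomial encodings of the PHP clauses (so that the refutation refutes the ``right'' collection of polynomials). This follows by inspecting the chain of translations: the pigeon axioms and hole axioms of $\neg$PHP$^m_n$, viewed as clauses, are first mapped to disjunctions of linear equations (via the direct translation of Subsection \ref{sec-clauses-of-linear-equations}), and then to polynomials via Definition \ref{def-poly-trnas-of-R(lin)-lines}. The hole axioms $(x_{i,k}=0)\Or(x_{j,k}=0)$ become $x_{i,k}\cd x_{j,k}$, and the pigeon axioms $(x_{i,1}=1)\OrDots(x_{i,n}=1)$ become $\prod_{k=1}^n(x_{i,k}-1)$, both of which are already multilinear and admit trivial depth-$3$ multilinear formulas. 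Hence no subtlety arises in the base of the simulation chain (PCR $\to$ depth-$3$ fMC of Theorem \ref{thm-RT05-PCRsim}).

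There is no real obstacle here: the work has all been done in Sections \ref{sec-php}, \ref{sec-R(lin)-to-PCR}, and \ref{sec-PCR-to-fMC}. If any part required care it would be ensuring that the simulation in Corollary \ref{cor-3fMC-simulates-RZ0} preserves the identity of the axioms rather than just producing some refutation of an equivalent system; but this is immediate from the construction, since the simulation proceeds line-by-line through Proposition \ref{prop-PCR-sim-R(lin)} and Theorem \ref{thm-RT05-PCRsim}, leaving the polynomial translations of the initial \RZ0-lines as the initial lines of the final depth-$3$ fMC refutation.
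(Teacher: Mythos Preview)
Your proposal is correct and follows exactly the paper's approach: the paper derives Theorem~\ref{thm-3fMC-PHP-ref} in one line from Theorem~\ref{thmPHP} and Corollary~\ref{cor-3fMC-simulates-RZ0}. Your additional check that the clause translations land on the expected multilinear polynomials is a harmless elaboration, not a deviation.
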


This improves over the result in \cite{RT05} that demonstrated a polynomial-size
(in $n$) depth-$3$ fMC refutations of a weaker principle,
namely the $m$ to $n$ \emph{functional} pigeonhole principle.

Furthermore, corollary \ref{cor-3fMC-simulates-RZ0} and Theorem \ref{thm-Tse}
yield:

\begin{theorem}\label{thm-3fMC-Tseitin-ref}
Let $G$ be an $r$-regular graph with $n$ vertices, where $r$ is a constant,
and fix some modulus $p$.
Then there are polynomial-size (in $n$) depth-$3$ {\rm fMC} refutations of
Tseitin {\rm mod} $p$ formulas \Tse0 (over fields of characteristic $0$).
\end{theorem}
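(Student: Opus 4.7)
The plan is to directly chain together the two main results already established: Theorem \ref{thm-Tse}, which provides polynomial-size (in $n$) \RZ0 refutations of \Tse0 whenever the graph $G$ is $r$-regular with constant $r$ and $n \equiv 1 \pmod{p}$, and Corollary \ref{cor-3fMC-simulates-RZ0}, which states that depth-$3$ fMC proofs over fields of characteristic $0$ polynomially simulate \RZ0 proofs. The composition is immediate: start with the \RZ0 refutation $\pi$ of \Tse0 guaranteed by Theorem \ref{thm-Tse}, and apply the simulation to obtain a depth-$3$ fMC refutation of size polynomial in $|\pi|$, hence polynomial in $n$.

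In more detail, I would unfold the simulation chain to make sure the input axioms are handled correctly. First, by Corollary \ref{cor-PCR-RZ0-lines-translation}, the \RZ0 refutation $\pi$ translates into a PCR refutation $\pi'$ of the polynomial translations (via Definition \ref{def-poly-trnas-of-R(lin)-lines}) of the \Tse0 clauses, with the number of steps polynomially bounded in $|\pi|$ and with every line of $\pi'$ being a polynomial translation of some \RZ0-line whose size is polynomial in the maximal line of $\pi$. Second, by Lemma \ref{lem-RZ0-line-into-3ml-fmla}, the multilinearization of each such polynomial admits a depth-$3$ multilinear formula of polynomial size with a plus gate at the root over any field of characteristic $0$. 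Third, by Theorem \ref{thm-RT05-PCRsim}, this yields a polynomial-size depth-$3$ fMC refutation.

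One minor point worth checking is that the initial \Tse0 clauses themselves translate to valid depth-$3$ fMC initial polynomials: since each clause of \Tse0 (as listed in Definition \ref{def-Tse}) is already an \RZ0-line, Lemma \ref{lem-RZ0-line-into-3ml-fmla} applies to it directly, so its polynomial translation has a small depth-$3$ multilinear formula with a plus gate at the root, and can be derived in fMC from the multilinear clausal form given as input.

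There is no real obstacle here; the statement is essentially a corollary and the work has all been done in preceding sections. The noteworthy aspect of the theorem, which I would emphasize in the remark accompanying the proof rather than in the proof itself, is that the resulting multilinear refutation works over \emph{any} field of characteristic $0$, in contrast to the refutation from \cite{RT05} which required a field containing a primitive $p$-th root of unity; this improvement comes precisely from the fact that the underlying counting argument has been carried out inside \RZ0 using only integer coefficients, and the simulation into depth-$3$ fMC via Proposition \ref{prop-RT05-depth3-ml} is valid over any characteristic $0$ field.
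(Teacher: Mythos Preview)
Your proposal is correct and follows exactly the same approach as the paper: the theorem is presented there as an immediate consequence of Theorem~\ref{thm-Tse} and Corollary~\ref{cor-3fMC-simulates-RZ0}, with no further argument. Your additional unfolding of the simulation chain (via Corollary~\ref{cor-PCR-RZ0-lines-translation}, Lemma~\ref{lem-RZ0-line-into-3ml-fmla}, and Theorem~\ref{thm-RT05-PCRsim}) and your remark about the characteristic-$0$ improvement over \cite{RT05} are accurate elaborations that the paper places in the surrounding discussion rather than in the proof itself.
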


The \ps0 refutations of Tseitin graph tautologies here are different
than those demonstrated in \cite{RT05}.
Theorem \ref{thm-3fMC-Tseitin-ref} establishes \ps0 refutations over
any field of characteristic $0$ of Tseitin mod $p$ formulas,
whereas \cite{RT05} required the field to contain a primitive
$p$th root of unity.
On the other hand, the refutations in \cite{RT05} of Tseitin mod $p$ formulas
do not make any use of the semantic nature
of the fMC proof system, in the sense that they
do not utilize the fact that the base field is of characteristic $0$
(which in turn enables one to efficiently represent any symmetric [multilinear]
polynomial by a depth-$3$ multilinear formula).

\section{Relations with Extensions of Cutting Planes}\label{sec-CP}
In this section we tie some loose ends by showing
that, in full generality, R(lin) polynomially simulates R(CP)
with polynomially bounded coefficients, denoted R(CP*).
First we define the R(CP*) proof system
-- introduced in \cite{Kra98-Discretely} --
which is a common extension of resolution
and CP* (the latter is cutting planes with polynomially bounded coefficients).
The system R(CP*), thus, is essentially resolution operating
with disjunctions of linear inequalities (with polynomially bounded
integral coefficients) augmented with the cutting planes
inference rules.

A linear inequality is written as
\begin{equation}\label{eq-inequality-line}
\vec a \cd \vec x \ge a_0\,,
\end{equation}
where $\vec a$ is a vector of integral coefficients $a_1,\ldots,a_n$,
$\vec x$ is a vector of variables $x_1,\ldots,x_n$,
and $a_0$ is an integer.
The \emph{size} of the linear inequality (\ref{eq-inequality-line})
is the sum of all $a_0,\ldots,a_n$ written in \emph{unary notation}
(this is similar to the size of linear equations in R(lin)).
A \emph{disjunction of linear inequalities}
is just a disjunction of inequalities of the form in (\ref{eq-inequality-line}).
The semantics of a disjunction of inequalities is the natural one,
that is, a disjunction is true under an assignment of integral values to $\vec x$
if and only if at least one of the inequalities is true under the assignment.
The \emph{size of a disjunction of linear inequalities} is the total size of all
linear inequalities in it.
We can also add in the obvious way linear inequalities, that is,
if $L_1$ is the linear inequality $\vec a\cd \vec x \ge a_0$ and $L_2$ is the
linear inequality $\vec b\cd\vec x\ge b_0$,
then $L_1+L_2$ is the linear inequality $(\vec a +\vec b)\cd \vec x \ge a_0+b_0$.
\QuadSpace

The proof system R(CP*) operates with disjunctions of linear inequalities
with integral coefficients (written in \emph{unary }representation),
and is defined as follows
(our formulation is similar to that in \cite{Koj07}):\footnotemark
\footnotetext{When we allow coefficients to be written in \emph{binary representation}, instead
of unary representation, the resulting proof system is denoted R(CP).}
\begin{definition}[R(CP*)]\label{def-R(CP)}
Let $K:= \set{K_1,\ldots,K_m}$ be a collection of disjunctions of linear inequalities
(whose coefficients are written in unary representation).
An \emph{R(CP*)-proof from $K$ of a disjunction of linear inequalities $D$}
is a finite sequence $\pi =(D_1 ,...,D_\ell)$ of disjunctions of linear
inequalities,
such that $D_\ell=D$
and for each $i\in[\l]$:
either $\,D_i = K_j\,$ for some $j\in[m]$;
or $D_i$ is one of the following \emph{R(CP*)-axioms}:
\begin{enumerate}
\item $x_i \ge 0$, for any variable $x_i$;
\item $-x_i\ge -1$,  for any variable $x_i$;
\item $(\vec a\cd \vec x \ge a_0)\Or ( -\vec a\cd \vec x \ge 1-a_0)$, where all
coefficients (including $a_0$) are integers;
\end{enumerate}
or $D_i$ was deduced from previous lines by one of the following
\emph{R(CP*)-inference rules}:
\begin{enumerate}
\item Let $A,B$ be two disjunctions of linear inequalities
and let $L_1,L_2$ be two linear inequalities.\footnotemark
\footnotetext{In all R(CP*)-inference rules, $A,B$ are possibly the empty
disjunctions.}
From $A\Or L_1$ and $B\Or L_2$ derive $A \Or B \Or(L_1+L_2)$.

\item Let $L$ be some linear equation.

From a disjunction of linear equations $A$ derive $A \Or L$.

\item  Let $A$ be a disjunction of linear equations

From $A\Or (0\ge 1)$ derive $A$.

\item Let  $c$ be a non-negative integer.

From $(\vec a\cd \vec x \ge a_0)\Or A$ derive
$(c\vec a\cd \vec x \ge c a_0) \Or A$.

\item Let $A$ be a disjunction of linear inequalities, and let $c\ge 1$ be an integer.

From $(c\vec a \cd \vec x \ge a_0) \Or A$
derive $\left(a\cd \vec x \ge \lceil a_0/c\rceil\right) \Or A$.

\end{enumerate}
An \emph{R(CP*) refutation of} a collection of disjunctions of linear inequalities
$K$ is a proof of the empty disjunction from $K$.
The \emph{size} of a proof $\pi$ in R(CP*) is the total size of all the
disjunctions of linear inequalities in $\pi$, denoted $|\pi|$.
\end{definition}

In order for R(lin) to simulate R(CP*) proofs,
we need to fix the following translation scheme.
Every linear inequality $L$ of the form
$\vec a \cd \vec x \ge a_0$
is translated into the following disjunction, denoted $\widehat L$:
\begin{equation}\label{eq-trans-inequality}
\left(
		\vec a \cd \vec x = a_0
\right)
	\Or
 \left(
		\vec a \cd \vec x = a_0+1
 \right)
	\OrDots
 \left(
		\vec a \cd \vec x = a_0+k
 \right)
\,,
\end{equation}
where $k$ is such that $a_0+k$ equals the sum of all positive coefficients in $\vec a$,
that is, $a_0+k = \max\limits_{\vec x\in\zo^n}\left(\vec a\cd \vec x\right)$
(in case the sum of all positive coefficients in $\vec a$ is less than $a_0$, then
we put $k=0$).
An inequality with no variables of the form $0\ge a_0$ is translated into
$0=a_0$ in case it is false (that is, in case $0<a_0$),
and into $0=0$ in case it is true (that is, in case $0\ge a_0$).
Note that since the coefficients of linear inequalities (and linear equations)
are written in \emph{unary} representation,
any linear inequality of size $s$ translates into
a disjunction of linear equations of size $O(s^2)$.
Clearly, every $0,1$ assignment to the variables $\vec x$
satisfies $L$ if and only if it satisfies its translation $\widehat L$.
A disjunction of linear inequalities $D$
is translated into the disjunction of the translations of all the linear inequalities
in it, denoted $\widehat D$.
A collection $K:= \set{K_1,\ldots,K_m}$ of disjunctions of linear inequalities,
is translated into the collection $\set{\widehat K_1,\ldots,\widehat K_m}$.

\begin{theorem}\label{R(lin)-sim-R(CP)}
R(lin) polynomially-simulates R(CP*).
In other words, if $\pi$ is an R(CP*) proof of a linear inequality
$D$ from a collection of disjunctions of linear inequalities
$K_1,\ldots,K_t$,
then there is an R(lin) proof of $\widehat D$ from
$\widehat K_1,\ldots,\widehat K_t$ whose size is polynomial in $|\pi|$.
\end{theorem}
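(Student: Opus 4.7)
\medskip
\noindent\textbf{Proof proposal.} The plan is to proceed by induction on the length $\ell$ of the R(CP*) proof $\pi = (D_1,\ldots,D_\ell)$, showing that each $\widehat{D_i}$ has an R(lin) derivation from $\widehat{K_1},\ldots,\widehat{K_t}$ of size polynomial in $\sum_{j \le i}|D_j|$. Note first that each linear inequality of size $s$ translates (by unrolling all integer values in $[a_0,\max_{\vec x\in\zo^n}\vec a\cd\vec x]$) to a disjunction of linear equations of size $O(s^2)$, so the translation scheme itself only incurs a polynomial overhead. The inductive step proceeds by a case analysis on how $D_i$ is justified in $\pi$: either $D_i$ is one of the $K_j$ (trivial), an R(CP*) axiom, or inferred by one of the five R(CP*) inference rules. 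I would handle these in turn.

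For the axioms $x_i\ge 0$ and $-x_i\ge -1$: their translations are essentially (rearrangements of) the Boolean axiom $(x_i=0)\Or(x_i=1)$, and are trivially derived. For the axiom $(\vec a\cd\vec x\ge a_0)\Or(-\vec a\cd\vec x\ge 1-a_0)$, I would first invoke Lemma~\ref{lem-basic-count-all-possibilities} to derive $\BigOr_{\alpha\in\mathcal A}(\vec a\cd\vec x=\alpha)$ with a proof of size polynomial in the size of $\vec a\cd\vec x$, then for each $\alpha\in\mathcal A$ either $\alpha\ge a_0$ (so $(\vec a\cd\vec x=\alpha)$ appears in the translation of the first inequality) or $\alpha<a_0$ (so $-\alpha\ge 1-a_0$ and the equation appears in the translation of the second inequality); case analysis via Lemma~\ref{lem-R(lin)-case-analysis} yields the desired translated disjunction. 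For the Addition rule (rule 1), I would combine the translated disjunctions $\widehat{L_1}$ and $\widehat{L_2}$ using Corollary~\ref{lem-basic-count-combine-two-lines}, which gives exactly a disjunction of equations $(L_1+L_2)$ ranging over the appropriate interval; any extra equations not appearing in $\widehat{L_1+L_2}$ can be introduced/absorbed by the Weakening rule. The Weakening and Simplification rules of R(CP*) (rules 2 and 3) translate directly into their R(lin) counterparts, noting that $\widehat{0\ge 1} = (0=1)$ which is handled by R(lin)-Simplification.

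The multiplication-by-$c$ rule (rule 4) is handled by iteratively adding each equation in $\widehat{\vec a\cd\vec x\ge a_0}$ to itself $c-1$ times (producing $(c\vec a\cd\vec x=c(a_0+j))$ for each $j$), then using the Weakening rule to insert all missing intermediate equations $(c\vec a\cd\vec x=ca_0+r)$ for non-multiples of $c$, obtaining $\widehat{c\vec a\cd\vec x\ge ca_0}$. The main obstacle is the \emph{division rule} (rule 5), since after dividing, the resulting disjunction $\widehat{\vec a\cd\vec x\ge\lceil a_0/c\rceil}\Or\widehat A$ is over the ``finer'' granularity of $\vec a\cd\vec x$, and is not obviously derivable from $\widehat{c\vec a\cd\vec x\ge a_0}\Or\widehat A$. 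My plan here is to exploit the fact that over Boolean assignments, every achievable value of $c\vec a\cd\vec x$ is exactly $c$ times an achievable value of $\vec a\cd\vec x$. Concretely, using Lemma~\ref{lem-basic-count-all-possibilities} I first derive $\BigOr_{\alpha\in\mathcal A}(\vec a\cd\vec x=\alpha)$ where $\mathcal A$ is the set of achievable values of $\vec a\cd\vec x$. For each $\alpha\in\mathcal A$ with $\alpha\ge\lceil a_0/c\rceil$, the equation $(\vec a\cd\vec x=\alpha)$ is a disjunct of $\widehat{\vec a\cd\vec x\ge\lceil a_0/c\rceil}$, so the target is immediate. For each $\alpha\in\mathcal A$ with $\alpha<\lceil a_0/c\rceil$ (so $c\alpha<a_0$), I derive $(c\vec a\cd\vec x=c\alpha)$ by summing $(\vec a\cd\vec x=\alpha)$ with itself $c$ times, then resolve it against every equation $(c\vec a\cd\vec x=a_0+r)$ in $\widehat{c\vec a\cd\vec x\ge a_0}\Or\widehat A$, producing $(0=a_0+r-c\alpha)\Or\widehat A$ with $a_0+r-c\alpha>0$; Simplification then yields $\widehat A$, from which weakening gives $\widehat A\Or\widehat{\vec a\cd\vec x\ge\lceil a_0/c\rceil}$. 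Combining all cases via Lemma~\ref{lem-R(lin)-case-analysis} completes the step. Since $c$ and all the relevant integer quantities are bounded by the size of $D_i$ (written in unary), every such sub-derivation is of size polynomial in $|D_i|$, so the whole simulation is polynomial in $|\pi|$.
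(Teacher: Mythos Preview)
Your proposal is correct and follows essentially the same approach as the paper: induction on the number of proof-lines, with the same key ingredients for each case (Lemma~\ref{lem-basic-count-all-possibilities} for axiom~(3) and the division rule~(5), Corollary~\ref{lem-basic-count-combine-two-lines} for the addition rule~(1), and direct use of Weakening/Simplification for rules~(2) and~(3)). The only cosmetic differences are that the paper handles axiom~(3) by a direct Weakening after Lemma~\ref{lem-basic-count-all-possibilities} rather than via Lemma~\ref{lem-R(lin)-case-analysis}, and for rule~(5) the paper phrases the argument as successively ``cutting off'' the small-value equations from $\BigOr_{\alpha\in\mathcal A}(\vec a\cd\vec x=\alpha)$ rather than packaging it as a case analysis---but the underlying manipulation (multiply by $c$, resolve against the premise, simplify) is identical.
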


\begin{proof}
By induction on the number of proof-lines in $\pi$.

\emph{Base case:} Here we only need to show that the
axioms of R(CP*) translates into axioms of R(lin),
or can be derived with polynomial-size
(in the size of the original R(CP*) axiom)
R(lin) derivations (from R(lin)'s axioms).

R(CP*) axiom number (1): $x_i\ge 0$ translates into the R(lin) axiom $(x_i=0)\Or(x_i=1)$.

R(CP*) axiom number (2): $-x_i\ge -1$, translates into $(-x_i=-1)\Or(-x_i=0)$.
From the Boolean axiom $(x_i=1)\Or(x_i=0)$ of R(lin),
one can derive with a constant-size R(lin) proof the line
$(-x_i=-1)\Or(-x_i=0)$ (for instance, by subtracting twice
each equation in $(x_i=1)\Or(x_i=0)$ from itself).

R(CP*) axiom number (3):
$(\vec a\cd \vec x \ge a_0)\Or ( -\vec a\cd \vec x \ge 1-a_0)$.
The inequality $(\vec a\cd \vec x \ge a_0)$ translates into
$$\BigOr\limits_{b=a_0}^{h} (\vec a\cd \vec x = b)\,,$$
where $h$ is the maximal value of $\vec a \cd \vec x$ over $0,1$ assignments to $\vec x$
(that is, $h$ is just the sum of all positive coefficients in $\vec a$).
The inequality $(-\vec a\cd \vec x \ge 1-a_0)$ translates into
$$\BigOr\limits_{b=1-a_0}^{f} (-\vec a\cd \vec x = b)\,,$$
where $f$ is the maximal value of $-\vec a \cd \vec x$ over $0,1$ assignments to $\vec x$
(that is, $f$ is just the sum of all negative coefficients in $\vec a$).
Note that one can always flip the sign of any equation $\vec a \cd \vec x =b$
in R(lin).
This is done, for instance, by subtracting twice  $\vec a \cd \vec x =b$
from itself.
So overall R(CP*) axiom number (3) translates into
$$
\BigOr\limits_{b=a_0}^{h}
	(\vec a\cd \vec x = b)
		\Or
\BigOr\limits_{b=1-a_0}^{f}
	(-\vec a\cd \vec x = b)\,,
$$
that can be converted inside R(lin) into
\begin{equation}\label{eq-R(CP)-axiom3-trans}
\BigOr\limits_{b=-f}^{a_0-1}
	(\vec a\cd \vec x = b)
		\Or
\BigOr\limits_{b=a_0}^{h}
	(\vec a\cd \vec x = b)\,.
\end{equation}
Let $\mathcal A' :=\set{-f,-f+1,\ldots,a_0-1,a_0,a_0+1,\ldots,h}$
and let $\mathcal A$ be the set of all possible values that
$\vec a \cd \vec x$ can get over all possible Boolean assignments to $\vec x$.
Notice that $\mathcal A\subseteq \mathcal A'$.
By Lemma \ref{lem-basic-count-all-possibilities},
for any $\vec a\cd\vec x$, there is a polynomial-size
(in the size of the linear form $\vec a\cd\vec x$)
derivation of $\BigOr_{\alpha\in\mathcal A}(\vec a\cd\vec x =\alpha)$.
By using the R(lin) Weakening rule we can then derive
$\BigOr_{\alpha\in\mathcal A'}(\vec a\cd\vec x =\alpha)$
which is equal to (\ref{eq-R(CP)-axiom3-trans}).
\QuadSpace

\emph{Induction step:}
Here we simply need to show how to polynomially simulate
inside R(lin) every inference rule application of R(CP*).\QuadSpace

\para{Rule  (1):}
Let $A,B$ be two disjunctions of linear inequalities
and let $L_1,L_2$ be two linear inequalities.
Assume we already have a R(lin) proofs of
$\widehat A\Or \widehat L_1$ and $\widehat B\Or \widehat L_2$.
We need to derive $\widehat A \Or \widehat B \Or \widehat {L_1+L_2}$.
Corollary \ref{lem-basic-count-combine-two-lines} shows that there is a
polynomial-size (in the size of $\widehat L_1$ and $\widehat L_2$; which
is polynomial in the size of $L_1$ and $L_2$)
 derivation of $\widehat {L_1+L_2}$ from $\widehat L_1$ and
$\widehat L_2$, from which the desired derivation immediately follows.

\para{Rule  (2):} The simulation of this rule in R(lin) is done using the
R(lin) Weakening rule.
\para{Rule  (3):} The simulation of this rule in R(lin) is done using the
R(lin) Simplification rule (remember that $0\ge 1$ translates into $0=1$ under
our translation scheme).

\para{Rule  (4):}
Let  $c$ be a non-negative integer.
We need to derive
$\widehat{(c\vec a\cd \vec x \ge c a_0)}\Or \widehat A$
from $\widehat{(\vec a\cd \vec x \ge a_0)}\Or \widehat A$
in R(lin).
This amounts only to ``adding together" $c$ times
the disjunction $\widehat{(\vec a\cd \vec x \ge  a_0)}$ in
$\widehat{(\vec a\cd \vec x \ge a_0)}\Or \widehat A$.
This can be achieved by $c$ many applications of
Corollary \ref{lem-basic-count-combine-two-lines}.
We omit the details.

\para{Rule  (5):}
We need to derive
$\widehat{
	\left(
		\vec a\cd \vec x \ge \lceil a_0/c\rceil
	\right)}
	 \Or
	\widehat A
$, from
$\widehat{(c\vec a \cd \vec x \ge a_0)} \Or \widehat A$.
Consider the disjunction of linear equations
$\widehat{(c\vec a \cd \vec x \ge a_0)}$, which can be written as:
\begin{equation}\label{eq-R(CP*)-rule(5)-sim}
	(c\vec a\cd\vec x =a_0) \Or
	(c\vec a\cd\vec x =a_0+1)
		\Or \ldots \Or
	(c\vec a\cd\vec x =a_0+r)
\,,
\end{equation}
where $a_0+r$ is the maximal value $c\vec a\cd \vec x$ can get over $0,1$ assignments
to $\vec x$.
By Lemma \ref{lem-basic-count-all-possibilities} there is a \ps0 (in the size of
$\vec a\cd\vec x$) R(lin) proof of
\begin{equation}\label{eq-R(CP*)-rule(5)-all-possibilities}
\BigOr\limits_{
				\alpha \in \mathcal A
				 }
				 {
				  (\vec a \cd \vec x = \alpha)
				 }
	\,,
\end{equation}
where $\mathcal A$ is the set of all possible values of $\vec a\cd\vec x$ over
$0,1$ assignments to $\vec x$.

We now use (\ref{eq-R(CP*)-rule(5)-sim}) to cut-off from
(\ref{eq-R(CP*)-rule(5)-all-possibilities}) all equations $(\vec a\cd\vec x
=\beta)$ for all $\beta<\lceil a_0/c\rceil$
(this will give us the desired disjunction of linear equations).
Consider the equation $(\vec a\cd\vec x =\beta)$ in
(\ref{eq-R(CP*)-rule(5)-all-possibilities})
for some fixed $\beta<\lceil a_0/c\rceil$.
Use the resolution rule of R(lin) to add this equation to itself
$c$ times inside (\ref{eq-R(CP*)-rule(5)-all-possibilities}).
We thus obtain
\begin{equation}\label{eq-R(CP*)-rule(5)-all-poss-beta}
(c\vec a \cd \vec x = c\beta)
	\Or
\BigOr\limits_{
				\alpha \in \mathcal A \setminus \set{\beta}
				 }
				 {
				  (\vec a \cd \vec x = \alpha)
				 }
	\,.
\end{equation}
Since $\beta$ is an integer and $\beta<\lceil a_0/c\rceil$,
we have $c\beta < a_0$. Thus, the equation
$(c\vec a \cd \vec x = c\beta)$ does not appear in (\ref{eq-R(CP*)-rule(5)-sim}).
We can then successively resolve $(c\vec a \cd \vec x = c\beta)$ in
(\ref{eq-R(CP*)-rule(5)-all-poss-beta}) with each equation
$(c\vec a\cd\vec x=a_0),\ldots,(c\vec a\cd\vec x=a_0+r)$ in
(\ref{eq-R(CP*)-rule(5)-sim}). Hence, we arrive at
$\BigOr_{
		 \alpha \in \mathcal A \setminus \set{\beta}
		}
		{
		 (\vec a \cd \vec x = \alpha)
		}
$.
Overall, we can cut-off all equations $(\vec a\cd\vec x=\beta)$, for
$\beta<\lceil a_0/c\rceil$, from (\ref{eq-R(CP*)-rule(5)-all-possibilities}).
We then get the disjunction
\[
\BigOr\limits_{
			\alpha\in \mathcal A'
			  }
			(
			\vec a \cd \vec x = \alpha
			)
\,,
\]
where $\mathcal A'$ is the set of all elements of $\mathcal A$ greater or
equal to $\lceil a_0/c\rceil$
(in other words, all values greater or equal to $\lceil a_0/c\rceil$
that $\vec a \cd\vec x$
can get over $0,1$ assignments to $\vec x$).
Using the Weakening rule of R(lin) (if necessary) we
can arrive finally at the desired disjunction
$\widehat{
	\left(
		\vec a\cd \vec x \ge \lceil a_0/c\rceil
	\right)}
$,
which concludes the R(lin) simulation of R(CP*)'s inference Rule (5).
\end{proof}

\appendix

\section{Feasible Monotone Interpolation}
\label{sec-mono-interp}

Here we formally define the feasible
monotone interpolation property.
The definition is taken mainly from \cite{Kra97-Interpolation}.

Recall that for two binary strings of length $n$ (or equivalently, Boolean
assignments for $n$ propositional variables) $\alpha,\alpha'$, we denote by
$\alpha'\ge \alpha$ that $\alpha'$ is \emph{bitwise} greater than $\alpha$,
that is, that for all $i\in[n]$, $\alpha'_i\ge \alpha_i$ (where $\alpha'_i$ and
$\alpha_i$ are the $i$th bits of $\alpha'$ and $\alpha$, respectively). Let
$A(\vec p,\vec q), B(\vec p,\vec r)$ be two collections of formulas in the
displayed variables only, where $\vec p,\vec q,\vec r$ are pairwise disjoint
sequences of distinct variables (similar to the notation at the beginning of
Section \ref{sec-interpo}). Assume that there is no assignment that satisfies
both $A(\vec p,\vec q)$ and $B(\vec p,\vec r)$. We say that $A(\vec p,\vec q),
B(\vec p,\vec r)$ are \emph{monotone }if one of the following conditions hold:
\begin{enumerate}
\item\label{it-cond-A}
If $\vec\alpha$ is an assignment to $\vec p$ and $\vec\beta$ is an assignment
to $\vec q$ such that $A(\vec\alpha, \vec\beta)=1$, then for any assignment
$\vec\alpha'\ge\vec\alpha$ it holds that $A(\vec\alpha', \vec\beta)=1$.

\item\label{it-cond-B}
 If $\vec\alpha$ is an assignment to $\vec p$ and
$\vec\beta$ is an assignment to $\vec r$ such that $B(\vec\alpha,
\vec\beta)=1$, then for any assignment $\vec\alpha'\le\vec\alpha$ it holds that
$B(\vec\alpha', \vec\beta)=1$.
\end{enumerate}

Fix a certain proof system $\mathcal P$.
Recall the definition of the interpolant function (corresponding
to a given unsatisfiable $A(\vec p,\vec q)\And B(\vec p,\vec r)$;
that is, functions for which
(\ref{eq-interp-circuit-C}) in Section \ref{sec-interpo} hold).
Assume that for every monotone $A(\vec p,\vec q),
B(\vec p,\vec r)$ there is a transformation from every $\mathcal P$-refutation
of $A(\vec p,\vec q)\And B(\vec p,\vec r)$ into the corresponding
interpolant \emph{monotone} Boolean circuit $C(\vec p)$
(that is, $C(\vec p)$ uses only monotone
gates\footnotemark) and whose size is polynomial in the size of the
refutation\footnotetext{For instance, a \emph{monotone Boolean circuit }is
a circuit that uses only $\And, \Or$ gates of fan-in two (see also Section
\ref{sec-lower-bounds}).
In certain cases, the monotone
interpolation technique is also applicable for a larger class of circuits, that
is, circuits that compute with real numbers and that can use any nondecreasing
real functions as gates (this was proved by Pudl\'ak in \cite{Pud97}).}
(note that for every monotone $A(\vec p,\vec q), B(\vec p,\vec r)$
the corresponding
interpolant circuit must compute a monotone function;\footnotemark
\footnotetext{That is, if $\alpha'\ge\alpha$ then $C(\alpha')\ge C(\alpha)$.}
the interpolant circuit itself, however, might not be
monotone, namely, it may use
non-monotone gates).
In such a case, we say that $\mathcal P$ has the
\emph{feasible monotone interpolation property}.
This means that, if a proof system ${\mathcal P}$ has the feasible monotone
interpolation property,
then an exponential lower bound on monotone circuits that compute the interpolant
function corresponding to $A(\vec p,\vec q)\And B(\vec p,\vec r)$ 
implies an exponential-size lower bound on $\mathcal
P$-refutations of $A(\vec p,\vec q)\And B(\vec p,\vec r)$. \QuadSpace

\begin{definition}[\textbf{Feasible monotone interpolation property}]
\label{def-mono-interp}
Let $\mathcal P$ be a propositional refutation system. Let $A_1(\vec p, \vec
q),\ldots,A_k(\vec p, \vec q)$ and $B_1(\vec p, \vec r),\ldots, B_\ell(\vec p,
\vec r)$
be two collections of formulas with the displayed variables only (where $\vec
p$ has $n$ variables, $\vec q$ has $s$ variables and $\vec r$ has $t$
variables), such that \emph{either}
(the set of satisfying assignments of)
$A_1(\vec p, \vec q),\ldots, A_k(\vec p, \vec q)$ meet condition \ref{it-cond-A}
above
\emph{or}
(the set of satisfying assignments of)
$B_1(\vec p, \vec r),\ldots, B_\ell(\vec p, \vec r)$ meet
condition \ref{it-cond-B} above.
 Assume that for any such $A_1(\vec p, \vec q),\ldots,A_k(\vec p, \vec q)$
and $B_1(\vec p, \vec r),\ldots, B_\ell(\vec p, \vec r)$,
if there exists a $\mathcal P$-refutation
for
$A_1(\vec p, \vec q)\And\cdots\And A_k(\vec p, \vec q)\And
B_1(\vec p, \vec r)\And\ldots\And B_\ell(\vec p, \vec r)$
of size $S$
then there exists
a monotone Boolean circuit separating $\,\mathcal U_A$
from $\mathcal V_B$ (as defined in Section \ref{sec-interp-sem-refs})
of size polynomial in $S$.
In this case we say that $\mathcal P$ possesses the \emph{feasible monotone
interpolation property}.
\end{definition}

\section*{Acknowledgments}
We wish to thank Arist Kojevnikov for useful correspondence on his paper. This
work was carried out in partial fulfillment of the requirements for the Ph.D.\
degree of the second author.

\bibliographystyle{alpha} 

\def\cprime{$'$} \def\cprime{$'$} \def\cprime{$'$} \def\cprime{$'$}

\end{document}